\def\d{\mathrm{d}}
\def\parrow{\buildrel \mathrm p \over \rightarrow}
\def\darrow{\buildrel \d \over \rightarrow}
\newcommand{\var}{\mathrm{Var}}
\newcommand{\cov}{\mathrm{Cov}}
\newcommand{\id}{\mathds{1}}
\newcommand{\bzero}{\mathbf{0}}
\renewcommand{\ge}{\geqslant}
\renewcommand{\le}{\leqslant}
\renewcommand{\leq}{\leqslant}
\renewcommand{\epsilon}{\varepsilon}
\newcommand{\PreserveBackslash}[1]{\let\temp=\\#1\let\\=\temp}
\newcolumntype{C}[1]{>{\PreserveBackslash\centering}p{#1}}
\newcolumntype{R}[1]{>{\PreserveBackslash\raggedleft}p{#1}}
\newcolumntype{L}[1]{>{\PreserveBackslash\raggedright}p{#1}}
\def\ttabular{%
\hbox\bgroup
\let\\\cr
\def\rulea{\ifnum\rowc=\@ne \hrule height 1.3pt \fi}
\def\ruleb{
\ifnum\rowc=1\hrule height 1.3pt \else
\ifnum\rowc=6\hrule height \heavyrulewidth
   \else \hrule height \lightrulewidth\fi\fi}
\valign\bgroup
\global\rowc\@ne
\rulea
\hbox to 10em{\strut \hfill##\hfill}%
\ruleb
&&%
\global\advance\rowc\@ne
\hbox to 10em{\strut\hfill##\hfill}%
\ruleb
\cr}
\def\endttabular{%
\crcr\egroup\egroup}
\theoremstyle{plain}
\newtheorem{Theorem}{Theorem}
\newtheorem{Lemma}{Lemma}
\newtheorem{Definition}{Definition}
\newtheorem{Remark}{Remark}
\DeclareRobustCommand{\bsquare}{%
  \mathop{\vphantom{\sum}\mathpalette\bigstar@\relax}\slimits@
}
\newcommand{\bigstar@}[2]{%
  \vcenter{%
    \sbox\z@{$#1\sum$}%
    \hbox{\resizebox{.9\dimexpr\ht\z@+\dp\z@}{!}{$\m@th\dsquare$}}%
  }%
}
\begin{document}

\title{
Measuring and testing tail equivalence
}

\author{
Takaaki Koike\thanks{\protect\linespread{1}\protect\selectfont
Graduate School of Economics, Hitotsubashi University, 2-1, Naka, Kunitachi, Tokyo 186-8601, Japan.
Email: \texttt{takaaki.koike@r.hit-u.ac.jp}}
\and  Shogo Kato\thanks{\protect\linespread{1}\protect\selectfont Institute of Statistical Mathematics, 10-3 Midori-cho, Tachikawa, Tokyo 190-8562, Japan.
Email: \texttt{skato@ism.ac.jp}}
\and
Toshinao Yoshiba
\thanks{\protect\linespread{1}\protect\selectfont 
Graduate School of Management, Tokyo Metropolitan University, 18F 1-4-1 Marunouchi, Chiyoda-ku, Tokyo 100-0005, Japan.
Email:\texttt{tyoshiba@tmu.ac.jp}
}
}
\maketitle

\begin{abstract}
We call two copulas tail equivalent if their first-order approximations in the tail coincide.
As a special case, a copula is called tail symmetric if it is tail equivalent to the associated survival copula.
We propose a novel measure and statistical test for tail equivalence.
The proposed measure takes the value of zero if and only if the two copulas share a pair of tail order and tail order parameter in common. 
Moreover, taking the nature of these tail quantities into account, we design the proposed measure so that it takes a large value when tail orders are different, and a small value when tail order parameters are non-identical.
We derive asymptotic properties of the proposed measure, and then propose a novel statistical test for tail equivalence.
Performance of the proposed test is demonstrated in a series of simulation studies and empirical analyses of financial stock returns in the periods of the world financial crisis and the COVID-19 recession.
Our empirical analysis reveals non-identical tail behaviors in different pairs of stocks, different parts of tails, and the two periods of recessions.
\medskip
\\
% JEL classification and keywords
\hspace{0mm}\\
\noindent \emph{MSC classification:}
60E05, %Distributions: general theory
62E15, %Exact distribution theory
62E20, %Asymptotic distribution theory
62G32, %Statistics of extreme values; tail inference
%62H10,  %Multivariate distribution of statistics
%62H20. %Measures of association (correlation, canonical correlation, etc.)
\\
\noindent   \emph{Keywords:}
Measure of asymmetry;
Tail equivalence;
Tail dependence;
Tail dependence coefficient;
Tail order.
\end{abstract}

\newpage

\section{Introduction}\label{sec:introduction}
Dependence measurement between variables, especially among their tail events, is of great interest for modeling and managing risk in various fields such as finance, hydrology and seismology.
An important question regarding tail dependence is equality of two tail behaviors, which are typically compared, for example, between different markets, regions and times. 
For related studies, the reader is referred to~\cite{AngChen2002},~\cite{jondeau2016asymmetry},~\cite{bormann2020detecting},~\cite{can2024two} and references therein.

In this study, we capture (tail) dependence among multiple variables by \emph{copula}, which is a multivariate distribution with standard uniform marginal distributions~\citep{nelsen2006introduction}.
To quantify the degree of tail dependence, the \emph{tail dependence coefficient}~\citep[TDC,][]{sibuya1960bivariate} is a popular measure used in the literature.
We focus on comparing tail behaviors captured by the \emph{tail expansion} of a copula~\citep{HuaJoe2011}, which approximates the copula in the tail by a monimial.
More formally, a $d$-dimensional copula $C$ is said to have the \emph{(lower) tail order} $\kappa_{\operatorname{L}}(C)\ge 1$ if
\begin{align}\label{eq:def:lower:tail:order}
C(u,\dots,u)\sim u^{\kappa_{\operatorname{L}}(C)}\ell_{\operatorname{L}}(u;C),\quad u\downarrow 0,
\end{align}
for some slowly varying function $\ell_{\operatorname{L}}(\cdot;C)$ at $0$.
In~\eqref{eq:def:lower:tail:order}, the relationship ``$f(u)\sim g(u)$, $u\rightarrow a$'' for $f,g:\mathbb{R}\rightarrow \mathbb{R}$ and $a\in\mathbb{R}$ means $\lim_{u\rightarrow a}f(u)/g(u)=1$.
Moreover, a function $f: \mathbb{R}\rightarrow \mathbb{R}$ is called \emph{slowly varying} at $a\in[0,\infty]$ if $\lim_{x\rightarrow a}f(tx)/f(x)=1$ for any $t \in (0,\infty)$.
Denote by $\operatorname{SV}_a$ the set of all such functions.
If a copula $C$ admits~\eqref{eq:def:lower:tail:order} and the limit $\lim_{u\downarrow 0}\ell_{\operatorname{L}}(u;C)=\lambda_{\operatorname{L}}(C)$ exists, then $C$ is said to have the \emph{(lower) tail order parameter} $\lambda_{\operatorname{L}}(C)\in [0,\infty]$.
The copula $C$ is called \emph{(lower) tail dependent} when $\kappa_{\operatorname{L}}(C)=1$. 
In this case, $\lambda_{\operatorname{L}}(C)$ coincides with the TDC of $C$ when $d=2$.
The case $\kappa_{\operatorname{L}}(C)=d$ is referred to as \emph{tail independence}.
If $1< \kappa_{\operatorname{L}}(C)<d$, then $C$ is said to have \emph{intermediate tail dependence}.
Finally, the case $\kappa_{\operatorname{L}}(C)>d$ represents \emph{negative tail dependence}.
As such, the tail expansion~\eqref{eq:def:lower:tail:order} can capture weaker tail dependence that the TDC cannot.
Similarly to the lower case, a copula $C$ is said to have the \emph{upper tail order} $\kappa_{\operatorname{U}}(C)\ge 1$ if
\begin{align}\label{eq:def:upper:tail:order}
{\hat C}(u,\dots,u)\sim u^{\kappa_{\operatorname{U}}(C)}\ell_{\operatorname{U}}(u;C),\quad u\downarrow 0,
\end{align}
where $\hat C$ is a copula of $(1-U_1,\dots,1-U_d)$ with $(U_1,\dots,U_d)\sim C$, and $\ell_{\operatorname{U}}(\cdot;C)\in \operatorname{SV}_0$.
The \emph{upper tail order parameter} $\lambda_{\operatorname{U}}(C)\in [0,\infty]$ 
is the limit $\lim_{u\downarrow 0}\ell_{\operatorname{U}}(u;C)=\lambda_{\operatorname{U}}(C)$ provided it exists.

Let $C_1$ and $C_2$ be two $d$-dimensional copulas satisfying~\eqref{eq:def:lower:tail:order}  with (lower) tail orders $\kappa_1$ and $\kappa_2$ and (lower) tail order parameters $\lambda_1$ and $\lambda_2$, respectively.
We say that $C_1$ and $C_2$ are \emph{(lower) tail equivalent} if $\kappa_1=\kappa_2$ and $\lambda_1=\lambda_2$.
This concept includes an important special case of comparing upper and lower tails.
For a copula $C$ satisfying~\eqref{eq:def:lower:tail:order} and~\eqref{eq:def:upper:tail:order}, we say that $C$ is \emph{tail symmetric} if 
$C$ and $\hat C$ are tail equivalent, i.e.,
$\lambda_{\operatorname{L}}(C)=\lambda_{\operatorname{U}}(C)$ and $\kappa_{\operatorname{L}}(C)=\kappa_{\operatorname{U}}(C)$; otherwise we call $C$ \emph{tail asymmetric}.
It is important to detect tail asymmetry and check whether the asymmetry arises from the difference in tail orders or that in tail order parameters.
Such information is particularly beneficial for choosing a suitable copula fitting to data typically based on the trade-off between tractability and flexibility.
For example, it is well-known that prevalent Gaussian and $t$ copulas are tractable but tail symmetric, for which their skewed counterparts have been extensively studied in the literature; see~\cite{adcock2020selective} and \cite{azzalini2022overview} for comprehensive reviews.

It is not a trivial task to quantify the difference in two tails approximated by the tail expansion.
This is because tail order parameters are clearly comparable for tails with the identical tail order, and a tail with smaller tail order dominates another with larger tail order regardless of their tail order parameters.
Although this way of comparison could be conducted stepwise, it is beneficial to compare the difference of two tails in this way by a single number.
For this purpose, we propose a \emph{measure of tail equivalence} $\xi_{C_1,C_2,w}$ between two copulas $C_1$ and $C_2$ and for $w\in [0,1]$, defined as a limit of 
\begin{align*}
\xi_{C_1,C_2,w}(u,v)=(1-w)\,h_1\left(
\frac{\alpha_{C_1,C_2}(u)}{\log (1/u)}\right) + w\, h_2(\alpha_{C_1,C_2}(v)),
\quad u,v\in \left(0,1\right),
\end{align*}
as $u,v\downarrow 0$.
Here, the function $\alpha_{C_1,C_2}$ is defined by
\begin{align}\label{eq:measure:tail:asymmetry:intro:two:cop}
\alpha_{C_1,C_2}(u)=\log\left(\frac{C_2(u,\dots,u)}{C_1(u,\dots,u)}\right),\quad u\in \left(0,1\right),
\end{align}
and $h_l:\mathbb{R}\rightarrow [-1,1]$, $l=1,2$, are increasing functions satisfying certain properties introduced in~Definition~\ref{def:measure:tail:equivalence}.

When $C_1=C$ and $C_2=\hat C$, the measure $\alpha_{C,\hat C}(u)$ in~\eqref{eq:measure:tail:asymmetry:intro:two:cop} and its limit are introduced and studied in~\cite{Kato_etal2022}.
We transform these measures so that the proposed measure $\xi_{C_1,C_2,w}$ possesses various useful properties for measuring tail equivalence.
Firstly, it detects various equalities between tails depending on the choice of $w$.
For $w\in (0,1]$, it holds that $\xi_{C_1,C_2,w}=0$ if and only if $C_1$ and $C_2$ are tail equivalent, and for $w=0$, it holds that $\xi_{C_1,C_2,0}=0$ if and only if the tail orders of $C_1$ and $C_2$ coincide.
Secondly, the range of $\xi_{C_1,C_2,w}$ is $[-1,1]$, and the value of $\xi_{C_1,C_2,w}$ has clear interpretations.
A positive value of $\xi_{C_1,C_2,w}$ indicates that $C_2$ is more tail dependent than $C_1$, and a negative value implies the opposite.
More importantly, the measure takes a value on $[-w,w]$ when $C_1$ and $C_2$ have the same tail order.
Therefore, a small value of $\xi_{C,w}$ quantifies a difference in tail order parameters.
Meanwhile, $\xi_{C_1,C_2,w}$ takes a value in $[-1,1]\backslash[-w,w]$ when $C_1$ and $C_2$ have different tail orders
%$\kappa_{\operatorname{L}}^\ast(C)\neq \kappa_{\operatorname{U}}^\ast(C)$ 
regardless of their tail order parameters.
Consequently, a difference in tail orders is represented by a large value of $\xi_{C_1,C_2,w}$.

These properties of $\xi_{C_1,C_2,w}$ are beneficial for statistically testing various hypotheses on tail orders and tail order parameters of $C_1$ and $C_2$ based on an estimator of $\xi_{C_1,C_2,w}$.
Given observations from $C_1$ and $C_2$ which are assumed to admit tail expansions, one can, for example, non-parametrically test H$_0$ : $C_1$ and $C_2$ are (lower) tail equivalent, rejecting which statistically supports some difference between two tails.
Note that equality between tail orders in disregard of tail order parameters can be tested based on existing estimators of tail order; see~\cite{falk2010laws},~\cite{embrechts2013modelling} and references therein.
For the tail dependent case, equality between tail order parameters can be tested based on existing estimators of the TDC; see, for example,~\cite{schmid2007nonparametric}.
We derive consistency and asymptotic normality of some estimators of $\xi_{C_1,C_2,w}$ and $\xi_{C_1,C_2,w}(u,v)$ for finite thresholds $u,v\in (0,1)$, based on which hypotheses on both of tail order and tail order parameter can be tested in a simple and natural manner.

We explore finite sample performance of these estimators and hypothesis tests in a series of simulation studies and empirical analyses on various tails of financial stock returns.
Our empirical studies reveal the following observations: 
(i) tail dependence structures, particularly the tail order parameters, are different between the two periods of recession, the world financial crisis in 2007--2009 and the COVID-19 recession;
(ii) asymmetry in the lower and upper tails of stock returns, also reported in~\cite{AngChen2002}, is observable in their tail orders;
(iii) different pairs of stock indices have different tail dependence structures, where the difference can lie not only in their tail order parameters but also in their tail orders.

The paper is organized as follows.
Section~\ref{sec:xi:properties} introduces our proposed measures and derives desirable properties for measuring tail equivalence.
Section~\ref{sec:statistics} presents estimators of the proposed measures and shows their consistency and asymptotic normality.
Based on these results, we explore statistical tests of tail equivalence.
In Section~\ref{sec:numerical}, we explore finite sample performance of the proposed measures and related hypothesis tests in a series of simulation studies and empirical analyses on stock returns.
Brief summary and concluding remarks are provided in Section~\ref{sec:conclusion}. 
We defer proofs, examples and some technical details in Appendices.

\section{The proposed measure of tail equivalence}\label{sec:xi:properties}

In this section we introduce the following measures to quantify the difference of the tail behavior between two copulas, and study their properties.

\begin{Definition}\label{def:measure:tail:equivalence}
Let $w\in [0,1]$ and $h_l:\mathbb{R}\rightarrow [-1,1]$, $l=1,2$, be functions satisfying the following properties:
\begin{enumerate}[label=(\roman*)]
    \item\label{h:prop:i} $h_l$ is increasing;
    \item\label{h:prop:ii} $h_l$ is strictly increasing and differentiable on $h_l^{-1}((-1,1))$;
    \item\label{h:prop:iii} $h_l(-x)=-h_l(x)$ for every $x \in \mathbb{R}$;
    \item\label{h:prop:iv} $\lim_{x\rightarrow \infty}h_l(x)=1$ and $\lim_{x\rightarrow -\infty}h_l(x)=-1$; and
    \item\label{h:prop:v} $h_l(0)=0$.
\end{enumerate}
For two $d$-dimensional copulas $C_1$ and $C_2$, the \emph{measure of tail equivalence} between $C_1$ and $C_2$ with finite thresholds $u,v\in(0,1)$ is defined by
\begin{align*}
\xi_{C_1,C_2,w}(u,v)=(1-w)\,h_1\left(
\frac{\alpha_{C_1,C_2}(u)}{\log (1/u)}\right) + w\, h_2(\alpha_{C_1,C_2}(v)),
\end{align*}
where 
\begin{align*}
\alpha_{C_1,C_2}(u)=\log\left(\frac{C_2(u,\dots,u)}{C_1(u,\dots,u)}\right),\quad u\in (0,1).
\end{align*}
Moreover, when the following limit exists: 
\begin{align*}
    \xi_{C_1,C_2,w} =\lim_{u,v\downarrow 0}\xi_{C_1,C_2,w}(u,v),
    \end{align*}
    we call it the \emph{measure of tail equivalence} between $C_1$ and $C_2$.
\end{Definition}

Note that the set of assumptions~\ref{h:prop:i}--\ref{h:prop:v} admits increasing functions with kinks as considered in~Remarks~\ref{rem:range:xi} and~\ref{rem:h2}.
The measure $\xi_{C_1,C_2,w}(u,v)$ compresses the two quantities $\alpha_{C_1,C_2}(u)/\log (1/u)$ and $\alpha_{C_1,C_2}(v)$ to a single value by respectively applying the auxiliary function $h_1$ and $h_2$ and then taking a weighted sum.
The conditions on $h_l$, $l=1,2$, are required for $\xi_{C_1,C_2,w}$ to satisfy desirable properties for quantifying tail equivalence, which will be presented in Theorem~\ref{thm:properties:xi} below.
Examples of such auxiliary functions include $(2/\pi)\cdot\operatorname{arctan}(\cdot)$ and $2\cdot (\Phi(\cdot)-1/2)$ where $\Phi$ is the cdf of the standard normal distribution.
In addition, we will also see that the parameter $w$ is the baseline value, above and below which the difference of two tail behaviors lies either in tail orders or in tail order parameters.
Let $\operatorname{sign}(x)=\id(x>0) - \id(x<0)$, where $\id$
 is the indicator function.

\begin{Theorem}\label{thm:properties:xi}
Let $C_1$ and $C_2$ be $d$-dimensional copulas satisfying the following assumptions:
\begin{enumerate}[label=(A.\arabic*)]
    \item\label{item:A1} $C_l$ admits a tail order $\kappa_l$ and a tail order parameter $\lambda_l$ for $l=1,2$;
\item\label{item:A2} if $\kappa_1=\kappa_2$, then $(\lambda_1,\lambda_2)\neq (0,0)$ and $(\lambda_1,\lambda_2)\neq (\infty,\infty)$; and 
\item\label{item:A3} the limit
$\xi_{C_1,C_2,w} =\lim_{u,v\downarrow 0}\xi_{C_1,C_2,w}(u,v)$ exists.
\end{enumerate}
Then
\begin{align}\label{eq:xi:limit}
    \xi_{C_1,C_2,w}=
    \begin{cases}
    (1-w)\,h_1(\kappa_1 - \kappa_2)+     w \,h_2\left(\log\left(\lambda_2/\lambda_1\right)\right),& \text{ when }\kappa_1=\kappa_2,\\
       (1-w)\,h_1(\kappa_1 - \kappa_2) +w\, \operatorname{sign}(\kappa_1 - \kappa_2),&\text{ when }\kappa_1\neq \kappa_2.\\
    \end{cases}
\end{align}
Therefore, $\xi_{C_1,C_2,w}$ satisfies the following properties.
\begin{enumerate}[label=(\alph*)]
\item\label{xi:prop:range} (Range) For $w\in [0,1]$, it holds that:
\begin{align*}
    \xi_{C_1,C_2,w}\in
        \begin{cases}
    [-1,-w)\\
    [-w,0) \\
    (0,w] \\
    (w,1] \\
\end{cases}
\quad \Longleftrightarrow \quad 
    \begin{cases}
    \kappa_1<\kappa_2,\\
    \kappa_1=\kappa_2\text{ and }\lambda_1>\lambda_2,\\
   \kappa_1=\kappa_2\text{ and }\lambda_1<\lambda_2,\\
    \kappa_1>\kappa_2,\\
    \end{cases}
\end{align*}
where we interpret $[-1,-1)$, $(1,1]$, $[0,0)$ and $(0,0]$ as empty sets.
\item \label{xi:prop:symmetry} (Symmetry)
$\xi_{C_1,C_2,w}=-\xi_{C_2,C_1,w}$ for every $w\in [0,1]$.
\item\label{xi:prop:tail:equivalence}  (Tail equivalence)
\begin{align*}
    \xi_{C_1,C_2,w}=0 \quad \Longleftrightarrow \quad \begin{cases}
        \kappa_1=\kappa_2\text{ and }\lambda_1=\lambda_2, & \text{ when } w\in (0,1],\\
        \kappa_1=\kappa_2,& \text{ when } w=0.\\
    \end{cases}
\end{align*}
\item\label{xi:prop:monotonicity}  (Monotonicity)
Let $\tilde C_1$ and $\tilde C_2$ be $d$-dimensional copulas satisfying~\ref{item:A1},~\ref{item:A2} and~\ref{item:A3} with tail orders $\tilde \kappa_1,\tilde \kappa_2$ and tail order parameters $\tilde \lambda_1,\tilde \lambda_2$, respectively. 
Let $\Delta \kappa=\kappa_1-\kappa_2$, $\Delta \tilde \kappa=\tilde \kappa_1-\tilde \kappa_2$, $\Delta \lambda=\lambda_2-\lambda_1$ and $\Delta \tilde \lambda=\tilde \lambda_2-\tilde \lambda_1$.
Then, for $w\in[0,1)$, it holds that
\begin{align*}
    \xi_{C_1,C_2,w}< \xi_{\tilde C_1,\tilde C_2,w}\text{ for }\xi_{C_1,C_2,w}, \xi_{\tilde C_1,\tilde C_2,w}\in [-1,1]\backslash[-w,w] \quad \Longleftrightarrow\quad \Delta\kappa < \Delta\tilde\kappa. 
\end{align*}
For $w \in (0,1]$, it holds that
\begin{align*}
    \xi_{C_1,C_2,w}< \xi_{\tilde C_1,\tilde C_2,w}\text{ for }\xi_{C_1,C_2,w}, \xi_{\tilde C_1,\tilde C_2,w}\in (-w,w) \quad \Longleftrightarrow\quad
    \Delta\kappa = \Delta\tilde\kappa\text{ and } \Delta\lambda < \Delta\tilde\lambda.
\end{align*}
\end{enumerate}
\end{Theorem}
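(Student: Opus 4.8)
The plan is to reduce $\xi_{C_1,C_2,w}$ to the explicit expression \eqref{eq:xi:limit} by evaluating the two inner limits of the defining quantity separately, and then to obtain properties \ref{xi:prop:range}--\ref{xi:prop:monotonicity} as algebraic corollaries of that closed form. The starting point is the factorization obtained by writing $\ell_l(u)=C_l(u,\dots,u)/u^{\kappa_l}$, which lies in $\operatorname{SV}_0$ and satisfies $\ell_l(u)\to\lambda_l$ by~\ref{item:A1}:
\[
\alpha_{C_1,C_2}(u)=(\kappa_2-\kappa_1)\log u+\bigl(\log\ell_2(u)-\log\ell_1(u)\bigr).
\]
All of the analysis reduces to understanding the two terms on the right as $u\downarrow0$.

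For the first argument $\alpha_{C_1,C_2}(u)/\log(1/u)$, I would divide the display by $-\log u$ and invoke the standard property that $\log L(u)=o(\log u)$ as $u\downarrow0$ for every $L\in\operatorname{SV}_0$ (a slowly varying function is of the form $u^{o(1)}$). Applied to $\ell_1$ and $\ell_2$ this forces $(\log\ell_2(u)-\log\ell_1(u))/\log u\to0$, so that $\alpha_{C_1,C_2}(u)/\log(1/u)\to\kappa_1-\kappa_2$; crucially this uses only slow variation and holds whether or not $\lambda_1,\lambda_2$ are finite and positive. For the second argument the behaviour splits on the tail orders. If $\kappa_1\neq\kappa_2$, the term $(\kappa_2-\kappa_1)\log v$ dominates the slowly varying remainder and sends $\alpha_{C_1,C_2}(v)$ to $+\infty$ when $\kappa_1>\kappa_2$ and to $-\infty$ when $\kappa_1<\kappa_2$, whence $h_2(\alpha_{C_1,C_2}(v))\to\operatorname{sign}(\kappa_1-\kappa_2)$ by~\ref{h:prop:iv}. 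If $\kappa_1=\kappa_2$, the dominant term vanishes and $\alpha_{C_1,C_2}(v)=\log\ell_2(v)-\log\ell_1(v)\to\log(\lambda_2/\lambda_1)$, read with the conventions $\log(\lambda_2/0)=+\infty$ and $\log(0/\lambda_1)=-\infty$. This last step is exactly where~\ref{item:A2} is indispensable: it excludes the two indeterminate configurations $(\lambda_1,\lambda_2)\in\{(0,0),(\infty,\infty)\}$, for which $\log\ell_2(v)-\log\ell_1(v)$ has the form $\infty-\infty$ and is not determined by $\lambda_1,\lambda_2$ alone.

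Assembling \eqref{eq:xi:limit} then amounts to pushing these limits through $h_1$ and $h_2$, using continuity of $h_l$ at the finite limit points and the boundary values~\ref{h:prop:iv} at $\pm\infty$, and taking the weighted sum; existence of the outer limit is supplied by~\ref{item:A3}, and the first summand collapses to $0$ when $\kappa_1=\kappa_2$ because $h_1(0)=0$ by~\ref{h:prop:v}. From the resulting two-line formula the remaining claims follow by bookkeeping. For the range~\ref{xi:prop:range}, boundedness $h_1,h_2\in[-1,1]$ puts the $\kappa_1=\kappa_2$ branch into $[-w,w]$ and the $\kappa_1\neq\kappa_2$ branch into $[-1,-w)\cup(w,1]$, and the sign of each summand, governed by~\ref{h:prop:iii} and~\ref{h:prop:v}, matches each subinterval to its stated condition on $(\kappa_l,\lambda_l)$. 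Symmetry~\ref{xi:prop:symmetry} holds because interchanging the indices negates $\alpha_{C_1,C_2}$, $\kappa_1-\kappa_2$ and $\log(\lambda_2/\lambda_1)$, and oddness~\ref{h:prop:iii} of $h_l$ together with oddness of $\operatorname{sign}$ transmits the sign change. Tail equivalence~\ref{xi:prop:tail:equivalence} uses $h_l(x)=0\iff x=0$ (from~\ref{h:prop:ii} and~\ref{h:prop:v}) together with the observation that both summands carry the sign of $\kappa_1-\kappa_2$ when the tail orders differ. Finally monotonicity~\ref{xi:prop:monotonicity} follows from strict monotonicity~\ref{h:prop:ii}: outside $[-w,w]$ the value $\xi_{C_1,C_2,w}$ is a strictly increasing function of $\Delta\kappa=\kappa_1-\kappa_2$ through $h_1$, while inside $(-w,w)$ it is a strictly increasing function of $\log(\lambda_2/\lambda_1)$ through $h_2$, which is what the stated inequalities encode.

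The main obstacle is the second-argument computation in the degenerate regimes $\lambda_l\in\{0,\infty\}$, where one must verify that each one-sided escape of $\alpha_{C_1,C_2}(v)$ to $\pm\infty$ is consistent with the conventional value of $\log(\lambda_2/\lambda_1)$ and with the boundary limits of $h_2$, and where the precise role of~\ref{item:A2} has to be pinned down. A secondary care point is monotonicity~\ref{xi:prop:monotonicity} at the saturation boundaries of $h_l$ (where $h_l$ attains $\pm1$ and strict monotonicity from~\ref{h:prop:ii} no longer applies) and the treatment of the degenerate intervals in~\ref{xi:prop:range} under the stated empty-set convention; both are handled by restricting attention to the regions where $h_l$ takes values in $(-1,1)$.
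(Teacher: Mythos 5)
Your proposal is correct and follows essentially the same route as the paper's proof: the same factorization $\alpha_{C_1,C_2}(u)=(\kappa_2-\kappa_1)\log u+\log\ell_2(u)-\log\ell_1(u)$, the same slow-variation fact $\log L(u)=o(\log u)$ (the paper cites Proposition 2.6(i) of Resnick) to get the first limit $\kappa_1-\kappa_2$, and the same case split on $\kappa_1=\kappa_2$ versus $\kappa_1\neq\kappa_2$ for the limit of $\alpha_{C_1,C_2}(v)$, after which properties \ref{xi:prop:range}--\ref{xi:prop:monotonicity} are read off from \eqref{eq:xi:limit}. If anything, your treatment of the degenerate cases $\lambda_l\in\{0,\infty\}$ and of the saturation regions where $h_l=\pm1$ is more explicit than the paper's, which simply declares the properties straightforward to check once \eqref{eq:xi:limit} is established.
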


Note that larger tail order parameter and smaller tail order (with finite tail order parameter) imply stronger tail dependence.

\begin{Remark}\label{rem:range:xi}
Assuming that the underlying copulas $C_l$, $l=1,2$, satisfy $u^d \le C_l(u,\dots,u)$ for sufficiently small $u$, it holds that $\kappa_l \in [1,d]$ together with the Fr\'echet-Hoeffding bound $C_l(u,\dots,u)\le u$ for all $u\in [0,1]$.
In such a situation, one can take 
\begin{align*}
    h_1(x)=\begin{cases}
        -1, & \text{ if } x < -(d-1),\\
        x/(d-1),  & \text{ if }  -(d-1)\le x \le d-1,\\
        1,  & \text{ if } x > (d-1),\\
    \end{cases}
\end{align*} 
under which it holds, for $w\in [0,1)$, that $\xi_{C_1,C_2,w}=1$ if and only if $(\kappa_1,\kappa_2)=(d,1)$ and that $\xi_{C_1,C_2,w}=-1$ if and only if $(\kappa_1,\kappa_2)=(1,d)$.
\end{Remark}

\begin{Remark}\label{rem:infinite:tail:order:parameter}
We exclude the two cases $(\lambda_1,\lambda_2)\neq (0,0)$ and $(\lambda_1,\lambda_2)\neq (\infty,\infty)$ in Theorem~\ref{thm:properties:xi} for two reasons.
One is that, in such cases, $\alpha_{C_1,C_2}(v)$ may not converge to $0$ even if $\lambda_1=\lambda_2$.
For example, we have that $\lim_{v\downarrow 0}\alpha_{C_1,C_2}(v)=\infty$ when $\ell_1(v)=a_1 (-\log v )^{b_1}$ and $\ell_2(v)=a_2 (-\log v)^{b_2}$ for $a_1,a_2>0$ and $b_1,b_2<0$ with $b_2>b_1$.
Another reason is that the case $\lambda_1=\lambda_2=\infty$ requires special treatment since, in this case, higher tail order indicates stronger tail dependence, which is opposite to the other cases~\citep{HuaJoe2011}.
\end{Remark}

\begin{Remark}
If $w=0$, then the measure reduces to $\xi_{C_1,C_2,w}=h_1(\kappa_1-\kappa_2)$, which is completely irrelevant to tail order parameters.
On the other hand, even if $w=1$, the measure $\xi_{C_1,C_2,w}$ is affected by tail orders since  $\xi_{C_1,C_2,w}\in(-1,1)$ only when $\kappa_1=\kappa_2$.
Moreover, $\xi_{C_1,C_2,w}\in\{-1,1\}$ is attained not only when  $\kappa_1\neq \kappa_2$ but also when $\kappa_1=\kappa_2$ and
$\log(\lambda_2/\lambda_1)$ is $\infty$ or $-\infty$.
\end{Remark}

\section{Statistical inference}\label{sec:statistics}

This section studies statistical estimation and hypothesis testing on the proposed measures $\xi_{C_1,C_2,w}$ and $\xi_{C_1,C_2,w}(u,v)$ for finite $u,v\in(0,1)$.
%Throughout this section, let $\mathbf{U}_i^{(l)}$, $i=1,2,\dots$, $l=1,2,$, be $d$-dimensional random vectors.
We consider the following situation.
\begin{enumerate}[label=(A.\arabic*)]\setcounter{enumi}{3}
    \item\label{item:A4:iid}
    For $i=1,2,\dots$, $(\mathbf{U}_i^{(1)}, \mathbf{U}_i^{(2)})$ follows the $2d$-dimensional copula $C$ with $\mathbf{U}_i^{(l)}=(U_{i1}^{(l)},\dots,U_{id}^{(l)})\sim C_l$ for $l=1,2$.
Moreover, $\mathbf{U}^{(k)}_i$ and $\mathbf{U}^{(l)}_j$ are independent for every $i,j=1,2,\dots$ with $i \neq j$ and $k,l=1,2$.
Finally, $C_1$ and $C_2$ satisfy~\ref{item:A1},~\ref{item:A2} and~\ref{item:A3} in Theorem~\ref{thm:properties:xi}.
\end{enumerate}

For the later convenience, write $M_i^{(l)}=\max(U_{i1}^{(l)},\dots,U_{id}^{(l)})\sim F_l$, $l=1,2$, and $(M_i^{(1)},M_i^{(2)})\sim H$ for which it holds that $F_l(u)=C_l(u,\dots,u)$, $u\in [0,1]$, and that $H(u,v)=C(\mathbf{u},\mathbf{v})$ with $\mathbf{u}=(u,\ldots,u) \in [0,1]^d$ and $\mathbf{v}=(v,\ldots,v) \in [0,1]^d$. 
The (modified) empirical cdfs of $H$, $F_1$, $F_2$ are
$\hat H(u,v)=(1/n)+(1/n)\, \sum_{i=1}^n \id (
M_i^{(1)}\le u,M_i^{(2)}\le u)$, 
$\hat F_1(u)=\hat H(u,1)$ and $\hat F_2(v)=\hat H(1,v)$,  $u,v\in[0,1]$, respectively.
We consider the modified versions for the ratio $\hat F_2(u)/\hat F_1(u)$ to be well-defined for every $u\in[0,1]$.

\subsection{Statistical inference on $\xi_{C_1,C_2,w}(u,v)$}\label{sec:inference:finite:u:v}

We first consider statistical inference for the case when the thresholds $u,v\in (0,1)$ are fixed, in which we estimate $\xi_w(u,v)$ instead of $\xi_w$.

In the following method, the measure $\alpha_{C_1,C_2}(u)$ is replaced by its empirical version. 
Adopting this empirical measure, we define an empirical measure of $\xi_{C_1,C_2,w} (u,v)$ and study its asymptotic properties.

\begin{Definition}\label{def:estimator}
Define an empirical version of the measure $\alpha_{C_1,C_2}(u)$ by
\begin{equation}
\hat{\alpha}(u) = \log \left( \frac{\hat F_2(u)}{\hat F_1(u)} \right), 
\quad u\in(0,1).
\label{eq:nu_hat}
\end{equation}
Then an empirical version of the proposed measure $\xi_{C_1,C_2,w}$ is defined by
\begin{align}
	\hat{\xi}_{w}(u,v)=(1-w)\,h_1\left(
	\frac{\hat{\alpha}(u)}{\log (1/u)}\right) + w\, h_2(\hat{\alpha}(v)),
	\quad u,v\in (0,1),\ w\in [0,1]. \label{eq:xi_hat}
\end{align}
\end{Definition}

\begin{Remark}
One of the proposed measures (\ref{eq:nu_hat}) can be considered a multivariate extension of the bivariate and trivariate empirical measures of \cite{Kato_etal2022}.
The measure (\ref{eq:nu_hat}) reduces to the empirical measure of \cite{Kato_etal2022} when $d=2$ and $C_2$ is the survival copula of $C_1$.
Note also that our empirical measure compares the lower tail probabilities of two different copulas, while the empirical measure of \cite{Kato_etal2022} is restricted to compare lower and upper tail probabilities of a single copula.
\end{Remark}

In order to discuss the asymptotic results of the empirical measure~\eqref{eq:xi_hat}, we assume $u=v$ for simplicity although our proof is applicable to more general case $u\neq v$.
For brevity, we write ${\xi}_{C_1,C_2,w}(u)={\xi}_{C_1,C_2,w}(u,u)$ and $\hat{\xi}_{w}(u)=\hat{\xi}_{w}(u,u)$ for $u\in (0,1)$.

We now establish the following asymptotic result.
Its proof is straightforward from the law of large numbers (LLN) and the continuous mapping theorem (CMT).

\begin{Theorem}\label{thm:consistency}
For $u,v \in (0,1)$, the proposed empirical measures $\hat{\alpha}(u)$ and $\hat{\xi}_{w}(u)$ are consistent estimators of $\alpha_{C_1,C_2}(u)$ and $\xi_{C_1,C_2,w}(u,v)$, respectively.
%That is, $\hat{\alpha}_{C_1,C_2,w}(u)$ and $\hat{\xi}_{C_1,C_2,w}(u)$ converge in probability to $\alpha_{C_1,C_2,w}(u)$ and $\xi_{C_1,C_2,w}(u)$, respectively.
\end{Theorem}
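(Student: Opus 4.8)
The plan is to read $\hat\xi_w(u)$ as a \emph{fixed} continuous transformation of two sample averages of bounded indicator variables, so that the strong law of large numbers (SLLN) delivers convergence of the averages and the continuous mapping theorem (CMT) transports that convergence through the transformation. All quantities are evaluated at a fixed interior threshold, so no uniformity in $u$ or rate of decay of $u$ is needed; the argument is entirely pointwise.

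First I would reduce the empirical cdfs to genuine sample means. Because $M_i^{(1)}\le 1$ and $M_i^{(2)}\le 1$ hold almost surely, the modified empirical cdfs collapse to
\begin{align*}
\hat F_1(u)=\frac1n+\frac1n\sum_{i=1}^n \id\!\left(M_i^{(1)}\le u\right),\qquad
\hat F_2(u)=\frac1n+\frac1n\sum_{i=1}^n \id\!\left(M_i^{(2)}\le u\right).
\end{align*}
Under~\ref{item:A4:iid} the summands are i.i.d.\ across $i$ with mean $\p(M^{(l)}\le u)=F_l(u)=C_l(u,\dots,u)$, so the SLLN, together with the vanishing of the correction $1/n\to 0$, yields $\hat F_l(u)\ascn F_l(u)$ for each fixed $u\in(0,1)$ and $l=1,2$.

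Next I would pass to $\hat\alpha$ and then to $\hat\xi_w$ by two applications of the CMT. Note that the modification $+1/n$ guarantees $\hat F_1(u)\ge 1/n>0$, so $\hat\alpha(u)=\log(\hat F_2(u)/\hat F_1(u))$ is well defined for every $n$. The map $(a,b)\mapsto\log(b/a)$ is continuous at $(F_1(u),F_2(u))$ as soon as both coordinates are strictly positive, whence $\hat\alpha(u)\ascn\log(F_2(u)/F_1(u))=\alpha_{C_1,C_2}(u)$. Since $\log(1/u)$ is a positive constant and each $h_l$ is continuous (the admissible auxiliary functions, such as $(2/\pi)\arctan(\cdot)$, $2(\Phi(\cdot)-1/2)$, and the piecewise-linear $h_1$ of Remark~\ref{rem:range:xi}, are continuous, with differentiability on $h_l^{-1}((-1,1))$ by~\ref{h:prop:ii} ruling out interior jumps), a second application of the CMT gives
\begin{align*}
\hat\xi_w(u)&=(1-w)\,h_1\!\left(\frac{\hat\alpha(u)}{\log(1/u)}\right)+w\,h_2(\hat\alpha(u))\\
&\ascn (1-w)\,h_1\!\left(\frac{\alpha_{C_1,C_2}(u)}{\log(1/u)}\right)+w\,h_2(\alpha_{C_1,C_2}(u))=\xi_{C_1,C_2,w}(u,u),
\end{align*}
which is the asserted consistency (convergence in probability follows a fortiori, so the claim holds in either mode).

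The only non-mechanical point — the step I would treat as the main obstacle — is justifying the strict positivity of the limiting denominator $F_1(u)=C_1(u,\dots,u)$ (and likewise $F_2(u)$) at a fixed interior $u$, since the CMT for $\log(\cdot/\cdot)$ requires the limit to lie in the continuity set of the map. I would obtain this from~\ref{item:A1}: the tail-order expansion $C_l(u,\dots,u)\sim u^{\kappa_l}\ell_l(u)$ with a positive slowly varying $\ell_l$ forces $C_l(u,\dots,u)>0$ for all sufficiently small $u$, and monotonicity of $u\mapsto C_l(u,\dots,u)$ then propagates positivity to every $u\in(0,1)$, so that $\alpha_{C_1,C_2}(u)$ is a finite real number and both CMT steps apply at the relevant limit points.
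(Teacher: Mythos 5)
Your proof is correct and takes essentially the same route as the paper, whose own argument is exactly the one-line reduction to the law of large numbers and the continuous mapping theorem (applied via the moment computations of Lemma~\ref{lem:moments}): empirical cdfs as sample means, then CMT through $(a,b)\mapsto\log(b/a)$ and through $h_1,h_2$. Your extra care — collapsing the modified cdfs using $M_i^{(l)}\le 1$, noting that the $+1/n$ correction keeps $\hat\alpha(u)$ well defined, and deriving strict positivity of $F_l(u)$ at fixed $u\in(0,1)$ from~\ref{item:A1} together with monotonicity of $u\mapsto C_l(u,\dots,u)$ — merely makes explicit details the paper leaves implicit.
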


In addition a related sequence converges weakly to a Gaussian process.
We write $x\vee y = \max(x,y)$ and $x\wedge y = \min(x,y)$ for $x,y\in\mathbb{R}$.

%See Appendix for the proof.
\begin{Theorem} \label{thm:asymptotics}
	Define
	$$
	\mathbb{X}_n(u_i) = \sqrt{n} \left\{ 
	\hat{\xi}_{w}(u_i) - \xi_{C_1,C_2,w}(u_i) \right\}, \quad u_i \in ( 0,1), \quad i=1,\ldots,m.
	$$
In addition to Assumption~\ref{item:A4:iid}, assume that $\alpha_{C_1,C_2}(u_j)/\log (1/u_j)\in h_1^{-1}((-1,1))$ and $\alpha_{C_1,C_2}(u_j)\in h_2^{-1}((-1,1))$, respectively, for $j=1,\dots,m$.
	Then, as $n \rightarrow \infty$, $(\mathbb{X}_n(u_1) , \ldots, \mathbb{X}_n(u_m) )$ converges weakly to the multivariate normal distribution with mean vector $\bzero_m$ and the variance matrix $(\sigma_{w}(u_i,u_j))_{1\leq i,j \leq m}$, where 
 \begin{align*}
 \sigma_w (u_i,u_j)= a_w(u_i)\,a_w(u_j)\,A(u_i,u_j),\quad i,j=1,\dots,m,   
 \end{align*}
 with
\begin{align*}
    a_w (u_j) = \frac{1-w}{\log (1/u_j)}\,h_1'\left(
\frac{\alpha_{C_1,C_2}(u_j)}{\log (1/u_j)}\right)  + w\, h_2'(\alpha_{C_1,C_2}(u_j)),
\end{align*}
and
\begin{align*}
    A(u_i,u_j)=
\frac{F_1(u_i\wedge u_j)}{F_1(u_i)F_1(u_j)}+
\frac{F_2(u_i\wedge u_j)}{F_2(u_i)F_2(u_j)}
-\frac{H(u_i,u_j)}{F_1(u_i)F_2(u_j)}
-\frac{H(u_j,u_i)}{F_1(u_j)F_2(u_i)}.
\end{align*}
\end{Theorem}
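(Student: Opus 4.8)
The plan is to represent each coordinate $\hat{\xi}_{w}(u_j)$ as a fixed smooth function of the two empirical quantities $\hat F_1(u_j)$ and $\hat F_2(u_j)$, establish joint asymptotic normality of these empirical distribution functions via the multivariate central limit theorem, and then transport the result to $\mathbb{X}_n$ by the multivariate delta method. Concretely, for each $j$ set $g_j(x,y)=(1-w)h_1(\log(y/x)/\log(1/u_j))+w\,h_2(\log(y/x))$, so that $\hat\xi_{w}(u_j)=g_j(\hat F_1(u_j),\hat F_2(u_j))$ and $\xi_{C_1,C_2,w}(u_j)=g_j(F_1(u_j),F_2(u_j))$, the latter because $F_l(u_j)=C_l(u_j,\dots,u_j)$ gives $\log(F_2(u_j)/F_1(u_j))=\alpha_{C_1,C_2}(u_j)$.

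First I would handle the empirical layer. Under Assumption~\ref{item:A4:iid} the pairs $(M_i^{(1)},M_i^{(2)})$ are i.i.d.\ with joint law $H$, so the $2m$-dimensional vectors $V_i=(\id(M_i^{(1)}\le u_1),\dots,\id(M_i^{(1)}\le u_m),\id(M_i^{(2)}\le u_1),\dots,\id(M_i^{(2)}\le u_m))$ are i.i.d.\ with finite second moments. The multivariate CLT then shows that $\sqrt n$ times the centered average of the $V_i$ converges to a centered Gaussian vector whose covariances are the indicator covariances $F_1(u_i\wedge u_j)-F_1(u_i)F_1(u_j)$, $F_2(u_i\wedge u_j)-F_2(u_i)F_2(u_j)$, $H(u_i,u_j)-F_1(u_i)F_2(u_j)$ and $H(u_j,u_i)-F_2(u_i)F_1(u_j)$. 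The additive $1/n$ in the modified empirical c.d.f.s contributes only a deterministic $1/\sqrt n\to 0$ shift after scaling and is asymptotically negligible; it nonetheless guarantees $\hat F_1(u_j),\hat F_2(u_j)\ge 1/n>0$, so the logarithm inside $\hat\alpha$ is always well defined. Hence $\sqrt n\,(\hat F_1(u_j)-F_1(u_j),\hat F_2(u_j)-F_2(u_j))_{j=1}^m$ is jointly asymptotically normal.

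Next I would apply the delta method. Each $g_j$ is differentiable at $(F_1(u_j),F_2(u_j))$: the inner map $(x,y)\mapsto \log(y/x)$ is smooth because $F_1(u_j),F_2(u_j)>0$, and $h_1,h_2$ are differentiable at the relevant arguments precisely by the hypotheses $\alpha_{C_1,C_2}(u_j)/\log(1/u_j)\in h_1^{-1}((-1,1))$ and $\alpha_{C_1,C_2}(u_j)\in h_2^{-1}((-1,1))$ together with property~\ref{h:prop:ii} of Definition~\ref{def:measure:tail:equivalence}. The chain rule yields $\partial_x g_j=-a_w(u_j)/F_1(u_j)$ and $\partial_y g_j=a_w(u_j)/F_2(u_j)$ at the evaluation point, the common bracketed factor being exactly $a_w(u_j)$. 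Weak convergence of $\mathbb{X}_n=(\mathbb{X}_n(u_1),\dots,\mathbb{X}_n(u_m))$ to a centered Gaussian vector follows, with limiting covariance given by the bilinear form of the two gradients against the indicator covariance matrix.

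Finally, assembling $\sigma_w(u_i,u_j)$ is the one genuinely computational step, and I expect the bookkeeping there to be the main (if modest) obstacle. Writing out the four cross terms of the bilinear form, each denominator is cleared by the matching $F$-factor from the gradients, leaving $a_w(u_i)a_w(u_j)$ times a sum of four ratios minus four constant terms; the four constants $-1,-1,+1,+1$ cancel, and the surviving ratios are exactly the four terms defining $A(u_i,u_j)$. This gives $\sigma_w(u_i,u_j)=a_w(u_i)a_w(u_j)A(u_i,u_j)$ and completes the proof. No deeper difficulty is anticipated beyond keeping the sign and index conventions of the cross-covariances $H(u_i,u_j)$ versus $H(u_j,u_i)$ straight.
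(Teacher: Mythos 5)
Your proposal is correct and takes essentially the same route as the paper's own proof: joint asymptotic normality of the modified empirical cdfs (the paper's Lemma~\ref{lem:moments} combined with the CLT) followed by the multivariate delta method applied to the map $(x,y)\mapsto(1-w)\,h_1\bigl(\log(y/x)/\log(1/u_j)\bigr)+w\,h_2\bigl(\log(y/x)\bigr)$, whose gradient produces the factors $-a_w(u_j)/F_1(u_j)$ and $a_w(u_j)/F_2(u_j)$ and hence, after the constant terms cancel in the bilinear form, $\sigma_w(u_i,u_j)=a_w(u_i)\,a_w(u_j)\,A(u_i,u_j)$. Your explicit treatment of the general $m$-dimensional case, of where the hypotheses $\alpha_{C_1,C_2}(u_j)/\log(1/u_j)\in h_1^{-1}((-1,1))$ and $\alpha_{C_1,C_2}(u_j)\in h_2^{-1}((-1,1))$ guarantee differentiability, and of the asymptotic negligibility of the $1/n$ modification are minor refinements of the same argument, which the paper carries out for two thresholds and extends ``in a similar manner.''
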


\begin{Remark}
Let
$\mathbb{N}_n(u_i) = \sqrt{n} \left\{ 
	\hat{\alpha}(u_i) - \alpha_{C_1,C_2}(u_i) \right\}$, $u_i \in (0,1)$, $i=1,\ldots,m$.
 Then it is straightforward to prove that $(\mathbb{N}_n(u_1) , \ldots , \mathbb{N}_n(u_m))$ weakly converges to the multivariate normal distribution with mean vector $\bzero_m$ and the variance matrix $(A(u_i,u_j))_{1\leq i,j \leq m}$.
Note that this covariance matrix is related to the covariance matrix derived from the sequence of the empirical measures discussed in Theorem 4 of \cite{Kato_etal2022}.
\end{Remark}

We now consider the case when $m=1$, and let $ \sigma_w^2(u)={a}_w^2(u)\,{A}(u)$ with $A(u)=A(u,u)$.
Denote by $\hat \sigma_w^2(u)={\hat a}_w^2(u)\,{\hat A}(u)$ a plug-in estimator of $\sigma_w^2(u)$ with
\begin{align*}
    {\hat a}_w (u) = \frac{1-w}{\log (1/u)}\,h_1'\left(
\frac{\hat{\alpha}(u)}{\log (1/u)}\right)  + w\, h_2'(\hat{\alpha}(u))
\end{align*}
and
\begin{align*}
\hat A(u)=
\frac{1}{\hat F_1(u)}+
\frac{1}{\hat F_2(u)}-2\,
\frac{\hat H(u,u)}{\hat F_1(u)\hat F_2(u)}.
\end{align*}
Then we have that
$\hat \sigma_w(u)\parrow\sigma_w(u)$, and thus Theorem~\ref{thm:asymptotics} and CMT yield
\begin{align}\label{eq:asymptotics:T_n}
    \frac{\sqrt{n} \left\{ 
	\hat{\xi}_{w}(u) - \xi_{C_1,C_2,w}(u) \right\}}{\hat \sigma_w (u)} \darrow \mathcal N(0,1),\quad \text{ as }n\rightarrow\infty.
\end{align}

Based on~\eqref{eq:asymptotics:T_n}, one can construct confidence intervals and conduct hypothesis tests on $\xi_{C_1,C_2,w}(u)$.
For example, under the null hypothesis H$_0$ : $\xi_{C_1,C_2,w}(u)=0$, the statistic $T_n = \sqrt{n}\hat \xi_w(u)/\hat \sigma_w(u)$ asymptotically follows the standard normal distribution, and thus we reject H$_0$ based on the (asymptotic) p-value $2\min(\Phi(T_n),1-\Phi(T_n))$ for the two-tailed test against H$_1$ : $\xi_{C_1,C_2,w}(u)\neq 0$, 
$1-\Phi(T_n)$ for the right-tailed test against H$_1$ : $\xi_{C_1,C_2,w}(u) >0$, and $\Phi(T_n)$ for the left-tailed test against H$_1$ : $\xi_{C_1,C_2,w}(u)<0$.

A small threshold $u\in(0,1)$ is taken for comparing two tails.
In this situation, there is a typical trade-off to choose a smaller threshold for better approximation of the tail behavior and to choose a larger threshold for increasing the sample size.
To deal with this trade-off, the estimate $\hat \xi_w(u)$ and its confidence interval are typically computed for various thresholds to visually diagnose stability of the results; see~Section~\ref{sec:numerical} for illustration.

\subsection{Statistical inference on $\xi_{C_1,C_2,w}$}\label{sec:inference:limit:xi}

In this section, we consider statistical inference on the limit $\xi_{C_1,C_2,w}$.
In addition to~\ref{item:A4:iid}, we further assume that
\begin{enumerate}[label=(A.\arabic*)]\setcounter{enumi}{4}
    \item\label{item:A5:independence} $\mathbf{U}_i^{(1)}$ is independent of $\mathbf{U}_i^{(2)}$ for $i=1,2,\dots$, and
    \item\label{item:A6:finite:lambda} $C_l$ admits a finite positive tail order parameter $
    \lambda_l\in (0,\infty)$ for $l=1,2$.
\end{enumerate}

For $w\in [0,1]$, we consider the following form of  estimator:
\begin{align}\label{eq:xi:est}
   \hat \xi_w= (1-w)\,h_1\left( \hat \kappa_{1}-\hat \kappa_{2}
    \right) + w  \,h_2\left(
    \hat \alpha(v_n)
    \right),
\end{align}
where $\hat \kappa_l$ is an estimator of $\kappa_l$ for $l=1,2$, and $v_n\in(0,1)$ is such that $v_n\rightarrow 0$ as $n\rightarrow \infty$.
In~\eqref{eq:xi:limit}, the first term $(1-w)\,h_1(\kappa_1-\kappa_2)$ remains unchanged regardless of the relationship between $\kappa_1$ and $\kappa_2$.
Moreover, estimators of $\kappa_l$, $l=1,2$, and their asymptotic properties have been extensively studied in the literature since $\kappa_l$ is the \emph{tail index} of the random variable $X_i^{(l)}=1/M_i^{(l)}$, which satisfies $\mathbb{P}(X_i^{(l)}>x)=\mathbb{P}(M_i^{(l)}<1/x)=F_l(1/x)\sim x^{-\kappa_l} \ell_l(x)$, $x\rightarrow \infty$, for some $\ell_l\in\operatorname{SV}_{\infty}$.
The reader is referred to~\cite{embrechts2013modelling} and references therein.
Hence the asymptotic behavior of the first term in~\eqref{eq:xi:est} can be derived from the existing results, and the key part is the second term to describe the asymptotic behavior of $\hat \xi_w$.
Considering this situation, we do not specify estimators of $\kappa_l$, $l=1,2$, in our estimator~\eqref{eq:xi:est} and adopt assumptions~\ref{item:A:tail:index:consistency} and~\ref{item:A:tail:index:clt} to show consistency and asymptotic normality in the following theorems.

\begin{Theorem}\label{thm:consistency:hat:xi}
 In addition to \ref{item:A4:iid},~\ref{item:A5:independence} and~\ref{item:A6:finite:lambda}, assume that, for $l=1,2$,
 \begin{enumerate}[label=(A.\arabic*)]\setcounter{enumi}{6}
 \item\label{item:A:tail:index:consistency} $\hat \kappa_l \parrow \kappa_l$ as $n\rightarrow \infty$; and 
\item\label{item:A:v:n:consistency} $v_n\rightarrow 0$ and $n v_n^{\kappa_l}\rightarrow \infty$ as $n\rightarrow \infty$.
\end{enumerate}
  Then, for $w\in[0,1]$, it holds that 
  $\hat \xi_w\parrow \xi_w$ as $n\rightarrow\infty$.
\end{Theorem}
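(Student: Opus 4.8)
The plan is to treat the two summands of $\hat\xi_w$ in~\eqref{eq:xi:est} separately and recombine them by Slutsky's theorem, since convergence in probability of each summand yields convergence in probability of their sum. The first summand is immediate: by~\ref{item:A:tail:index:consistency} and the continuous mapping theorem (CMT), $\hat\kappa_1-\hat\kappa_2\parrow\kappa_1-\kappa_2$, and since $h_1$ is continuous at $\kappa_1-\kappa_2$ (the admissible auxiliary functions, such as $(2/\pi)\operatorname{arctan}(\cdot)$ and the piecewise-linear choice of Remark~\ref{rem:range:xi}, are continuous, their only non-smooth points being corners rather than jumps), we get $(1-w)\,h_1(\hat\kappa_1-\hat\kappa_2)\parrow (1-w)\,h_1(\kappa_1-\kappa_2)$. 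Comparing with the closed form~\eqref{eq:xi:limit}, the whole task therefore reduces to showing that $h_2(\hat\alpha(v_n))$ converges in probability to $h_2(\log(\lambda_2/\lambda_1))$ when $\kappa_1=\kappa_2$, and to $\operatorname{sign}(\kappa_1-\kappa_2)$ when $\kappa_1\neq\kappa_2$.

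The crux is the tail-probability ratio, to be established for $l=1,2$:
\begin{align*}
\frac{\hat F_l(v_n)}{F_l(v_n)}\parrow 1,\qquad n\to\infty.
\end{align*}
Writing $n\hat F_l(v_n)=1+\sum_{i=1}^n\id(M_i^{(l)}\le v_n)$ and noting that the indicators are i.i.d.\ $\operatorname{Bern}(F_l(v_n))$ under~\ref{item:A4:iid}, I would first observe that~\ref{item:A6:finite:lambda} together with the tail expansion gives $F_l(v)\sim \lambda_l v^{\kappa_l}$ as $v\downarrow 0$, so that $nF_l(v_n)\sim \lambda_l\,n v_n^{\kappa_l}\to\infty$ by~\ref{item:A:v:n:consistency}. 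Setting $S_n=\sum_{i=1}^n\id(M_i^{(l)}\le v_n)$, the normalized count $S_n/(nF_l(v_n))$ has mean $1$ and variance $(1-F_l(v_n))/(nF_l(v_n))\to 0$, so Chebyshev's inequality gives $S_n/(nF_l(v_n))\parrow 1$; since the remaining term $1/(nF_l(v_n))\to 0$, the displayed ratio follows. I expect this step, and in particular the role of the rate condition $nv_n^{\kappa_l}\to\infty$ in forcing the expected tail count to diverge, to be the main obstacle, as it is what makes the empirical tail frequencies stable despite the vanishing threshold.

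With the ratio in hand, I would decompose
\begin{align*}
\hat\alpha(v_n)=\log\!\left(\frac{\hat F_2(v_n)}{F_2(v_n)}\right)-\log\!\left(\frac{\hat F_1(v_n)}{F_1(v_n)}\right)+\alpha_{C_1,C_2}(v_n),
\end{align*}
where the first two terms are $o_{\mathbb{P}}(1)$ by CMT applied to $\log$ at $1$, so $\hat\alpha(v_n)=\alpha_{C_1,C_2}(v_n)+o_{\mathbb{P}}(1)$. The deterministic term is analyzed through $F_2(v)/F_1(v)\sim(\lambda_2/\lambda_1)\,v^{\kappa_2-\kappa_1}$: when $\kappa_1=\kappa_2$ it converges to the finite value $\log(\lambda_2/\lambda_1)$, whereas when $\kappa_1\neq\kappa_2$ it tends to $+\infty$ if $\kappa_1>\kappa_2$ and to $-\infty$ if $\kappa_1<\kappa_2$.

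Finally I would pass through $h_2$. In the equal-tail-order case $\hat\alpha(v_n)\parrow\log(\lambda_2/\lambda_1)$, so continuity of $h_2$ at this finite point gives $h_2(\hat\alpha(v_n))\parrow h_2(\log(\lambda_2/\lambda_1))$, which together with $h_1(0)=0$ from property~\ref{h:prop:v} matches the first line of~\eqref{eq:xi:limit}. In the unequal case $\hat\alpha(v_n)$ diverges in probability to $\pm\infty$, and since $h_2$ is bounded with $\lim_{x\to\pm\infty}h_2(x)=\pm1$ by property~\ref{h:prop:iv}, a standard argument (for any $\epsilon>0$ choose $K$ with $|h_2(x)-1|<\epsilon$ for $x>K$, then bound $\mathbb{P}(|h_2(\hat\alpha(v_n))-1|\ge\epsilon)\le\mathbb{P}(\hat\alpha(v_n)\le K)\to 0$, and symmetrically at $-\infty$) shows $h_2(\hat\alpha(v_n))\parrow\operatorname{sign}(\kappa_1-\kappa_2)$, matching the second line of~\eqref{eq:xi:limit}. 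Recombining the two summands by Slutsky's theorem then yields $\hat\xi_w\parrow\xi_w$ in both cases.
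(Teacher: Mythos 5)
Your proposal is correct and takes essentially the same route as the paper's proof: the paper likewise establishes consistency of the normalized empirical tail probability (normalizing by $v_n^{\kappa_l}$ rather than by $F_l(v_n)$, via the same mean--variance computation and a triangular-array LLN, which is your Chebyshev step), decomposes $\hat\alpha(v_n)$ into stochastic log-ratio terms plus the deterministic $(\kappa_2-\kappa_1)\log v_n$ term, and then argues the two cases $\kappa_1=\kappa_2$ (CMT at the finite limit $\log(\lambda_2/\lambda_1)$) and $\kappa_1\neq\kappa_2$ (divergence pushed through the bounded $h_2$ with limits $\pm1$) before recombining with the $h_1$ term. The only differences are cosmetic, e.g.\ the paper formalizes the divergent case with an explicit $h_2^{-1}(1-\tilde\delta)$ threshold where you use the standard bounded-limit argument.
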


\begin{Theorem}\label{thm:clt:hat:xi}
Let $\kappa_{+}=\kappa_1\vee\kappa_2$.
    In addition to \ref{item:A4:iid},~\ref{item:A5:independence},~\ref{item:A6:finite:lambda} and~\ref{item:A:tail:index:consistency}, assume that
    there exist $m_n,\tilde m_n\in\mathbb{R}$ and $v_n \in (0,1)$ satisfying the following assumptions:
 \begin{enumerate}[label=(A.\arabic*)]\setcounter{enumi}{8}
 \item\label{item:mn:un:lim}  $m_n,\tilde m_n\rightarrow \infty$ and $v_n\rightarrow 0$ as $n\rightarrow\infty$;
 \item\label{item:A:tail:index:clt} $\sqrt{\tilde m_n}\{h_1(\hat \kappa_1-\hat \kappa_2)-h_1(\kappa_1-\kappa_2)\}\darrow \mathcal N(0,\tilde \sigma^2)$ as  $n\rightarrow\infty$ for some $\tilde \sigma>0$;
\item\label{item:A:tau} $m_n/(n v_n^{\kappa_{+}})= \tau$ for some $\tau > 0$;
\item\label{item:A:F:l} $\sqrt{m_n}\{
F_l(v_n)-v_n^{\kappa_l}\ell_{\operatorname L}(v_n;C_l)
\}\rightarrow 0$ as $n\rightarrow\infty$ for $l=1,2$; 
\item\label{item:A:h:2} 
$x^\ast =\operatorname{sup}\{x\in \mathbb{R}: h_2(x)\in(0,1)\}\in(0,\infty)$; and
\item\label{item:A:lambda:diff} 
$-x^\ast < \log(\lambda_2/\lambda_1)<x^\ast$.
\end{enumerate}
\begin{enumerate}[label=(\Roman*)]
\item\label{item:thm:case:I}
Suppose that $m_n/\tilde m_n\rightarrow \infty$ or $w=0$. Then
\begin{align}\label{eq:clt:case:I}
    \sqrt{\tilde m_n}\left(\hat \xi_w - \xi_w\right)\darrow \mathcal N(0,(1-w)^2\,\tilde \sigma^2)\quad \text{ as }n\rightarrow \infty.
\end{align}
    \item\label{item:thm:case:II} Suppose that $\tilde m_n/m_n\rightarrow \infty$ or $w=1$. If $\kappa_1=\kappa_2$, then
\begin{align}\label{eq:clt:case:II}
    \sqrt{m_n}\left(\hat \xi_w - \xi_w\right)\darrow  \mathcal N(0,w^2\,\sigma^2)\quad \text{ as }n\rightarrow \infty,
\end{align}
where
\begin{align}\label{eq:sigma:clt:h2}
    \sigma^2=
        \tau\left\{h_2'\left(
\log\frac{\lambda_2}{\lambda_1}
\right)\right\}^2
\left\{
\frac{
1%\id(\kappa_1=\kappa_{+})
}{\lambda_1}+\frac{
1
%\id(\kappa_2=\kappa_{+})
}{\lambda_2}
\right\}.
\end{align}
\end{enumerate}
  \end{Theorem}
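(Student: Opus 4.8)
The plan is to derive both parts from one exact decomposition. Under $\kappa_1=\kappa_2$ (the setting of~\ref{item:thm:case:II}), formula~\eqref{eq:xi:limit} gives $\xi_w=(1-w)\,h_1(\kappa_1-\kappa_2)+w\,h_2(\log(\lambda_2/\lambda_1))$, so writing $\Delta_1=h_1(\hat\kappa_1-\hat\kappa_2)-h_1(\kappa_1-\kappa_2)$ and $\Delta_2=h_2(\hat\alpha(v_n))-h_2(\log(\lambda_2/\lambda_1))$ yields the identity $\hat\xi_w-\xi_w=(1-w)\Delta_1+w\Delta_2$. Assumption~\ref{item:A:tail:index:clt} already controls the first piece, namely $\sqrt{\tilde m_n}\,\Delta_1\darrow\mathcal N(0,\tilde\sigma^2)$, so the whole substance of the proof is a $\sqrt{m_n}$-rate central limit theorem for $\Delta_2$; the two regimes~\ref{item:thm:case:I} and~\ref{item:thm:case:II} will then separate by comparing the rates $m_n$ and $\tilde m_n$ through Slutsky's theorem.

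The heart of the argument is thus a central limit theorem for $\hat\alpha(v_n)=\log(\hat F_2(v_n)/\hat F_1(v_n))$ in the intermediate-threshold regime $v_n\downarrow 0$, $nF_l(v_n)\to\infty$. First I would write $\hat F_l(v_n)=(1+N_l)/n$ with $N_l=\sum_{i=1}^n\id(M_i^{(l)}\le v_n)$ a binomial count with mean $nF_l(v_n)$. Since $nF_l(v_n)\to\infty$ (from~\ref{item:A:tau} and~\ref{item:A6:finite:lambda}) while the per-trial probability $F_l(v_n)\to 0$, a Lindeberg--Feller triangular-array CLT gives $(N_l-nF_l(v_n))/\sqrt{nF_l(v_n)}\darrow\mathcal N(0,1)$, whence $\sqrt{m_n}\,\{\hat F_l(v_n)-F_l(v_n)\}/F_l(v_n)\darrow\mathcal N(0,\tau/\lambda_l)$ upon using $m_n/(nF_l(v_n))\to\tau/\lambda_l$. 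The independence assumption~\ref{item:A5:independence} makes $M_i^{(1)}$ and $M_i^{(2)}$ independent, so $\cov(N_1,N_2)=n\,\{H(v_n,v_n)-F_1(v_n)F_2(v_n)\}=0$ and a bivariate CLT for the i.i.d.\ indicator pairs delivers joint asymptotic normality with diagonal covariance. Applying the delta method to $\log(\cdot)$ coordinatewise and subtracting then gives $\sqrt{m_n}\,\{\hat\alpha(v_n)-\alpha_{C_1,C_2}(v_n)\}\darrow\mathcal N(0,\tau(1/\lambda_1+1/\lambda_2))$.

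It remains to re-centre from $\alpha_{C_1,C_2}(v_n)$ to the target $\log(\lambda_2/\lambda_1)$ and to pass through $h_2$. On $\{\kappa_1=\kappa_2\}$ the monomial factors cancel, leaving $\alpha_{C_1,C_2}(v_n)=\log(\ell_{\operatorname{L}}(v_n;C_2)/\ell_{\operatorname{L}}(v_n;C_1))$, and combining assumption~\ref{item:A:F:l} with $\ell_{\operatorname{L}}(v_n;C_l)\to\lambda_l\in(0,\infty)$ from~\ref{item:A6:finite:lambda} forces the deterministic bias $\sqrt{m_n}\,\{\alpha_{C_1,C_2}(v_n)-\log(\lambda_2/\lambda_1)\}\to 0$, so the centre may be swapped at no cost. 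A final delta step through $h_2$ is legitimate because assumptions~\ref{item:A:h:2} and~\ref{item:A:lambda:diff} place $\log(\lambda_2/\lambda_1)$ strictly inside $h_2^{-1}((-1,1))$, where property~\ref{h:prop:ii} makes $h_2$ differentiable with $h_2'(\log(\lambda_2/\lambda_1))>0$; this upgrades the preceding convergence to $\sqrt{m_n}\,\Delta_2\darrow\mathcal N(0,\sigma^2)$ with $\sigma^2$ as in~\eqref{eq:sigma:clt:h2}.

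The two regimes now separate. For~\ref{item:thm:case:II} I scale the identity by $\sqrt{m_n}$: the $\Delta_2$ term converges to $\mathcal N(0,w^2\sigma^2)$, while the $\Delta_1$ term carries the factor $\sqrt{m_n/\tilde m_n}\to 0$ when $\tilde m_n/m_n\to\infty$, or is annihilated by $1-w=0$ when $w=1$, giving~\eqref{eq:clt:case:II}. For~\ref{item:thm:case:I} I scale by $\sqrt{\tilde m_n}$: the $\Delta_1$ term converges to $\mathcal N(0,(1-w)^2\tilde\sigma^2)$, while the $h_2$ contribution is negligible---if $\kappa_1\neq\kappa_2$ then $\alpha_{C_1,C_2}(v_n)\to\pm\infty$ and the finite saturation point $x^\ast$ of~\ref{item:A:h:2} forces $h_2(\hat\alpha(v_n))=\operatorname{sign}(\kappa_1-\kappa_2)$ with probability tending to one, whereas if $\kappa_1=\kappa_2$ it is $O_p(m_n^{-1/2})$, and in either case the factor $\sqrt{\tilde m_n/m_n}\to 0$ (or $w=0$) kills it---yielding~\eqref{eq:clt:case:I} by Slutsky's theorem. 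I expect the main obstacle to be the rare-event CLT for $\hat\alpha(v_n)$: verifying the Lindeberg condition for the vanishing-probability binomials and, above all, confirming that assumption~\ref{item:A:F:l} genuinely suppresses the deterministic bias below order $m_n^{-1/2}$ as the slowly varying factors approach their positive limits. The remaining rate bookkeeping among $m_n$, $\tilde m_n$ and $nv_n^{\kappa_{+}}$ is routine.
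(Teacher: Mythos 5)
Your proposal is correct and follows essentially the same route as the paper's proof: a Lindeberg--Feller triangular-array CLT for $\hat F_l(v_n)/v_n^{\kappa_l}$ with the bias suppressed via~\ref{item:A:F:l} (the paper's Lemma~\ref{lem:lambda:clt}), joint asymptotic normality with diagonal covariance from~\ref{item:A5:independence} followed by the delta method through $h_2\circ\log$ giving~\eqref{eq:sigma:clt:h2} (Lemma~\ref{lem:clt:equal:kappa}), the saturation of $h_2$ beyond $x^\ast$ making the $h_2$-term exactly $\operatorname{sign}(\kappa_1-\kappa_2)$ with probability tending to one when $\kappa_1\neq\kappa_2$ (Lemma~\ref{lem:h2:vanish}), and Slutsky rate comparison of $m_n$ versus $\tilde m_n$ to separate Cases~\ref{item:thm:case:I} and~\ref{item:thm:case:II}. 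The two obstacles you flag---the Lindeberg verification for the vanishing-probability indicators and the re-centering bias---are resolved in the paper exactly as you anticipate, the former by the elementary bound $(Z_{i,n}^{(l)})^2\le \tau^2/m_n$ and the latter by invoking~\ref{item:A:F:l}.
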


Note that~\ref{item:A:v:n:consistency} is implied by~\ref{item:mn:un:lim} and~\ref{item:A:tau}.
Moreover,~\ref{item:A:lambda:diff} ensures that $h_2'(\log(\lambda_2/\lambda_1))>0$, and thus $\sigma^2>0$ in~\eqref{eq:sigma:clt:h2}.
We study the cases $m_n/\tilde m_n\rightarrow \infty$ and  $\tilde m_n/m_n\rightarrow \infty$ because particularly the former case is typically satisfied for standard choices of copulas and an estimator of tail indices, which we will discuss in the end of this section and in Appendix~\ref{sec:example}.
For the latter case with $\kappa_1\neq \kappa_2$, we are not able to find a suitable sequence $(m_n,\tilde m_n)$ for some asymptotic normality to hold.

\begin{Remark}\label{rem:h2}
    For \ref{item:A:h:2}, the threshold $x^\ast$ can be interpreted as the maximum difference between $\log \lambda_2$ and $\log \lambda_1$ which can be measured by $h_2$.
This assumption is satisfied when, for example, 
\begin{align*}
    h_2(x)=\begin{cases}
        -1, & \text{ if } x < -x^\ast,\\
        x/x^\ast,  & \text{ if }  -x^\ast\le x \le x^\ast,\\
        1,  & \text{ if } x > x^\ast.\\
    \end{cases}
\end{align*} 
We adopt~\ref{item:A:h:2} to avoid the difficulty of choosing $h_2$ since relaxing this assumption may require a careful adjustment of the tail behavior of $h_2$ against $v_n$. 
The choice of $x^\ast$ will be discussed in Section~\ref{sec:numerical}.
\end{Remark}

Case~\ref{item:thm:case:II} in Theorem~\ref{thm:clt:hat:xi} requires the assumption $\kappa_1=\kappa_2$ for the asymptotic normality~\eqref{eq:clt:case:II} to hold.
Therefore, Case~\ref{item:thm:case:I} may be more desirable to hold for the purpose of measuring the differences in tail orders and in tail order parameters simultaneously.
In Appendix~\ref{sec:example}, we show that the \emph{Hill estimator}~\citep{hill1975simple} satisfies~\ref{item:A:tail:index:consistency} in Theorem~\ref{thm:consistency:hat:xi} and~\ref{item:mn:un:lim} and~\ref{item:A:tail:index:clt} in Theorem~\ref{thm:clt:hat:xi}.
Moreover, under a fairly general situation, there exists $a\ge 0$ and $b,b_1,b_2,\sigma_1,\sigma_2>0$ such that $m_n=O(n^\epsilon)$ satisfies all the assumptions in Theorems~\ref{thm:consistency:hat:xi} and~\ref{thm:clt:hat:xi} for any $\epsilon\in(a,b)$, and that
$\sqrt{k_l}(\hat \kappa_l-\kappa_l)\darrow \mathcal N(0,\sigma_l^2)$, $l=1,2$, where $\hat \kappa_l=\hat \kappa_l(k_l)$ is the Hill estimators of $\kappa_l$ and
$k_l=O(n^{\epsilon_l})$ for any $\epsilon_l\in(0,b_l)$.
In this situation, one can always take $\epsilon, \epsilon_1,\epsilon_2$ such that $\tilde m_n=k_1=k_2$, $m_n/\tilde m_n\rightarrow \infty$ and~\ref{item:A:tail:index:clt} holds with $\tilde \sigma^2=\{h_1'(\kappa_1-\kappa_2)\}^2(\sigma_1^2+\sigma_2^2)$, and thus the asymptotic normality~\eqref{eq:clt:case:I} in Case~\ref{item:thm:case:I} holds.
In addition, the asymptotic variance $\tilde\sigma^2$ in~\eqref{eq:clt:case:I} can be consistently estimated, based on which we can construct  confidence intervals for $\xi_{C_1,C_2,w}$.

\section{Numerical experiments}\label{sec:numerical}

In this section, we demonstrate numerical performance of our proposed measures and related hypothesis tests.
The hyperparameter $w$ is chosen to be $1/2$ throughout the experiment.
The auxiliary functions $h_1$ and $h_2$ are chosen to be the ones in Remarks~\ref{rem:range:xi},~\ref{rem:h2}, respectively, with $x^\ast=1.5$ in the simulation study and $x^\ast=0.5$ in the empirical study.
The choice of the threshold will also be discussed in the experiments.

\subsection{Simulation study}\label{sec:simulation}

We first conduct a simulation study for the proposed measures $\xi_{C_1,C_2,w}(u)$ and $\xi_{C_1,C_2,w}$ with copulas $C_1$ and $C_2$ belonging to some parametric families of copulas.
First, the skew normal copula~\citep{AzzaliniDallaValle1996,azzalini1999statistical,FungSeneta2016} is defined as the copula implicit in the multivariate skew normal distribution. 
Second, the AC skew $t$ copula~\citep{AzzaliniCapitanio2003,FungSeneta2010,Yoshiba2018} is also defined as the copula implicit in the multivariate AC skew $t$ distribution proposed by \cite{AzzaliniCapitanio2003}. The random vector $(Y_1, Y_2)$ of the bivariate skew normal distribution with parameters $\delta_1, \delta_2, \rho\in (-1,1)$ is constructed as:
\begin{equation*}
\begin{split}
Y_1 &= \delta_1 |Z_0| + \sqrt{1-\delta_1^2} Z_1,\\
Y_2 &= \delta_2 |Z_0| + \sqrt{1-\delta_2^2} Z_2,
\end{split}
\end{equation*}
where $(Z_0, Z_1, Z_2)$ is a trivariate normal random vector with standard normal margins, $Z_0$ being uncorrelated with $Z_1$ and $Z_2$, and $Z_1$ and $Z_2$ being correlated such that $\mathrm{corr}(Y_1,Y_2)=\rho$. The random vector $(X_1, X_2)$ of the bivariate AC skew $t$ distribution with parameters $\delta_1, \delta_2, \rho\in (-1,1)$ and $\nu>0$ is constructed as:
\begin{equation*}
X_1 = \frac{Y_1}{\sqrt{V}},\quad X_2 = \frac{Y_2}{\sqrt{V}},
\end{equation*}
using the Gamma random variable $V\sim \mathcal{G}(\nu/2, \nu/2)$ which is independent of the bivariate skew normal random vector $(Y_1, Y_2)$.
Using these copulas, we consider the following three cases which are expected to behave differently.

\begin{enumerate}[label=(\roman*)]\setcounter{enumi}{0}
 \item\label{item:case:1:indep:sn} This case compares lower tails of two identical skew normal copulas with parameters $\rho=0.5$ and $\delta_1=\delta_2=-0.4$.
 \item\label{item:case:2:sn} We compare lower and upper tails of skew normal copulas with $\rho=0.5$ and $\delta_1=\delta_2=-0.7$.
 \item\label{item:case:3:st}
 The last case compares lower and upper tails of AC skew $t$ copulas with $\rho=0.5$, $\nu=5$ and $\delta_1=\delta_2=0.6$.
 \end{enumerate}
True values of $\xi_{C_1,C_2,w}(u)$ and $\xi_{C_1,C_2,w}$ are available for all these cases; see~\cite{azzalini1999statistical,FungSeneta2016,FungSeneta2022,yoshiba2023measure} for skew normal copulas, and~\cite{FungSeneta2010,padoan2011multivariate} for AC skew $t$ copulas.
Note that, for the skew normal copulas, the tail order parameters do not take non-zero finite values.
Nevertheless, we can check by direct calculations that $\xi_{C_1,C_2,w}=0$ for Case~\ref{item:case:1:indep:sn} and $\xi_{C_1,C_2,w}\in (-1,-1/2)$ for Case~\ref{item:case:2:sn}.
We also have that $\xi_{C_1,C_2,w}\in (0,1/2)$  for Case~\ref{item:case:3:st}.
For all the cases, the sample size is $n=4\times 10^4$.

For the case of finite threshold, we compute an estimate of $\xi_{C_1,C_2,w}(u)$ and its asymptotic variance for $u\in[0.0025,0.25]$ based on~Theorem~\ref{thm:asymptotics}.
For Case~\ref{item:case:1:indep:sn}, samples from $C_1$ and $C_2$ are generated to be mutually independent, that is, the underlying copula of $H$ is the independence copula.
For other cases where $C_1=C$ is either the skew normal or skew $t$ copula, we first draw iid samples $\mathbf{U}_i^{(1)}$, $i=1,\dots,n$, from $C$, and samples from $\hat C$ are generated by  $\mathbf{U}_i^{(2)}=\mathbf{1}_2-\mathbf{U}_i^{(1)}$, $i=1,\dots,n$.
As a result, the copula of $H$ is the counter-monotonic copula.
The results are summarized in Figure~\ref{fig:sim:fixed}.
The second column in this figure summarizes the asymptotic p-values of the null-hypothesis that $\xi_{C_1,C_2,w}=0$.
Three different p-values are plotted for two one-tailed tests and the two-tailed test.
Namely, the black solid lines represent $\Phi(T_n(u))$, where $T_n(u)=\sqrt{n}\hat \xi_w(u)/\sqrt{\hat \sigma_{w}(u)}$, the red solid lines show $1-\Phi(T_n(u))$ and the blue solid lines indicate $2\,\min\left(\Phi(T_n(u)),1-\Phi(T_n(u))\right)$.

\begin{figure}[!t]
\begin{center}
%\vspace{-7cm}
\includegraphics[width=123mm]{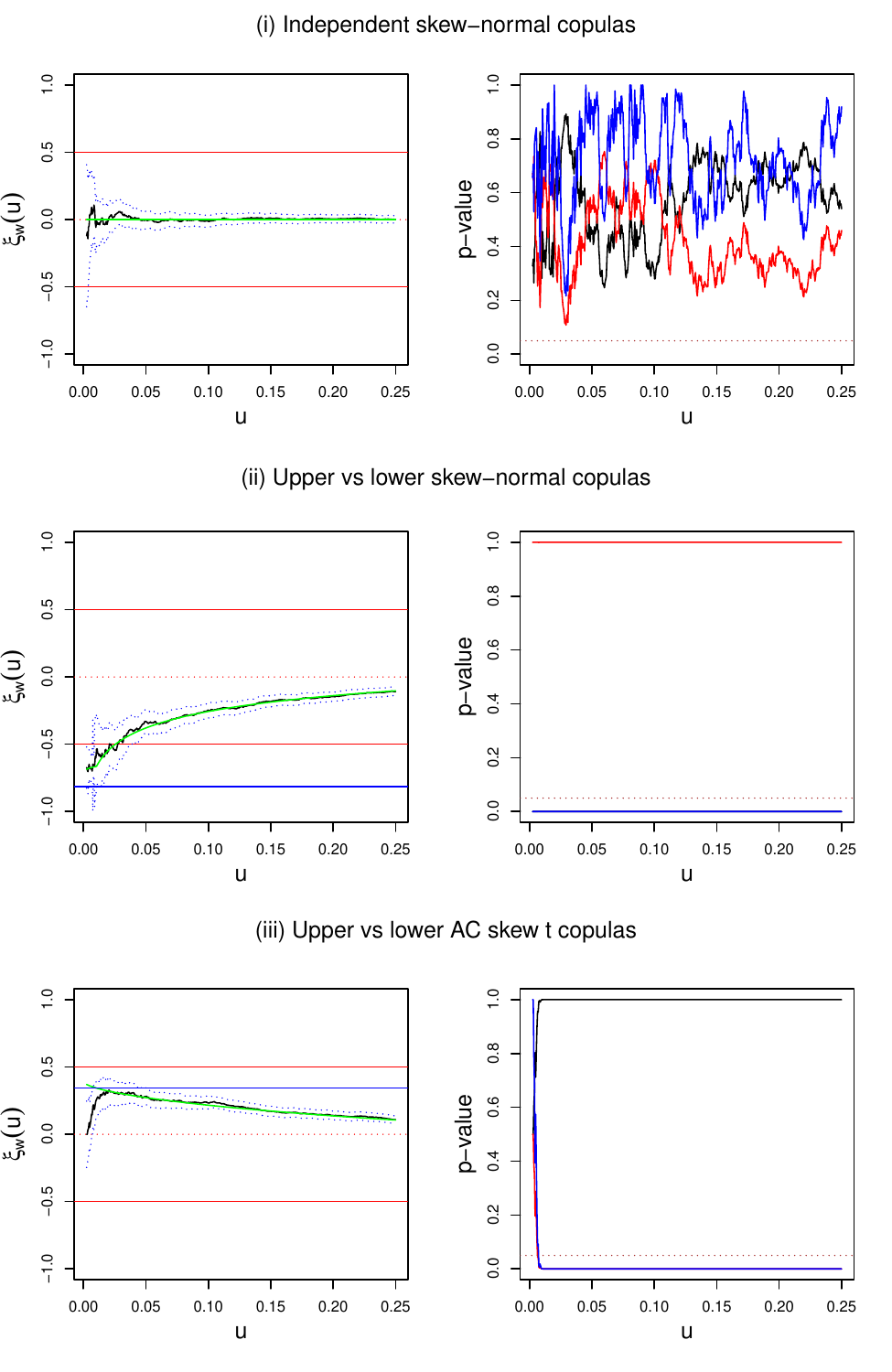}
%\vspace{-8cm}
\caption{(First column) Estimates of $\xi_{C_1,C_2,w}(u)$ (solid black lines) for $u$ from $0.0025$ to $0.25$ together with their 95\% confidence intervals (dotted blue lines). Solid and dotted red lines represent $y=-0.5$, $0$ and $0.5$.
The solid green and blue lines show the true values of $\xi_{C_1,C_2,w}(u)$ and $\xi_{C_1,C_2,w}$, respectively.
(Second column) Asymptotic p-values of the left-tailed test (black), right-tailed test (red) and the two-tailed test (blue). The brown dotted lines represent the standard critical level $0.05$.}
\label{fig:sim:fixed}
\end{center}
\end{figure} 

For the case of varying thresholds, we compute an estimate of $\xi_{C_1,C_2,w}$  and its asymptotic variance based on~Theorem~\ref{thm:clt:hat:xi}.
For all the cases, samples from $C_1$ and $C_2$ are drawn to be mutually independent.
For estimating the tail orders, we adopt the Hill estimator, whose asymptotic variance is consistently estimated as in Theorem~\ref{thm:hill} in~Appendix~\ref{sec:example}.
We vary $\tilde m_n$ and $v_n$ such that $(\tilde m_n,v_n)\in\{(1000,v),(k,0.025), v\in[0.01,0.1], k\in [400,4000]\}$, and take the thresholds $k_l$, $l=1,2,$ for Hill estimators of $\kappa_l$ such that $k_1=k_2=\tilde m_n$.
Furthermore, we assume that $m_n/\tilde m_n\rightarrow\infty$ and estimate the asymptotic variance of~\eqref{eq:clt:case:I} in Case~\ref{item:thm:case:I} by~Theorem~\ref{thm:hill}, based on which p-values are calculated analogously to the case of the finite threshold.
The results are summarized in Figures~\ref{fig:sim:vary:k} and~\ref{fig:sim:vary:v}.

\begin{figure}[!t]
\begin{center}
%\vspace{-7cm}
\includegraphics[width=127mm]{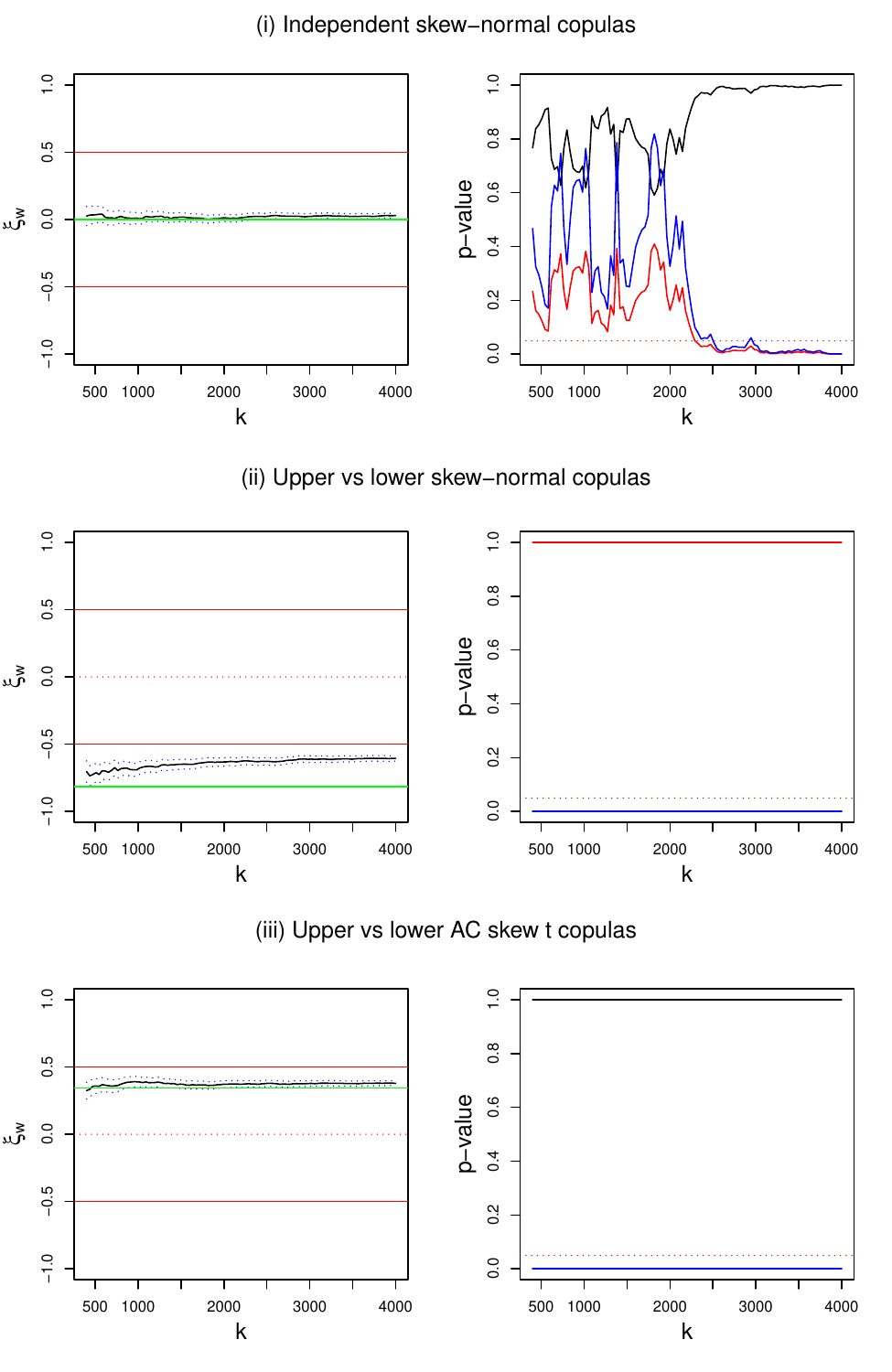}
%\vspace{-8cm}
\caption{
(First column) Estimates of $\xi_{C_1,C_2,w}$ (solid black lines) for $v_n=0.025$ and $\tilde m_n=k$ varying from $400$ to $4000$ together with their 95\% confidence intervals (dotted blue lines). Solid and dotted red lines represent $y=-0.5$, $0$ and $0.5$.
The solid green lines show the true values of $\xi_{C_1,C_2,w}$. 
(Second column) Asymptotic p-values of the left-tailed test (black), right-tailed test (red) and the two-tailed test (blue). The brown dotted line represents the standard critical level $0.05$.
}
\label{fig:sim:vary:k}
\end{center}
\end{figure}  

\begin{figure}[!t]
\begin{center}
%\vspace{-7cm}
\includegraphics[width=127mm]{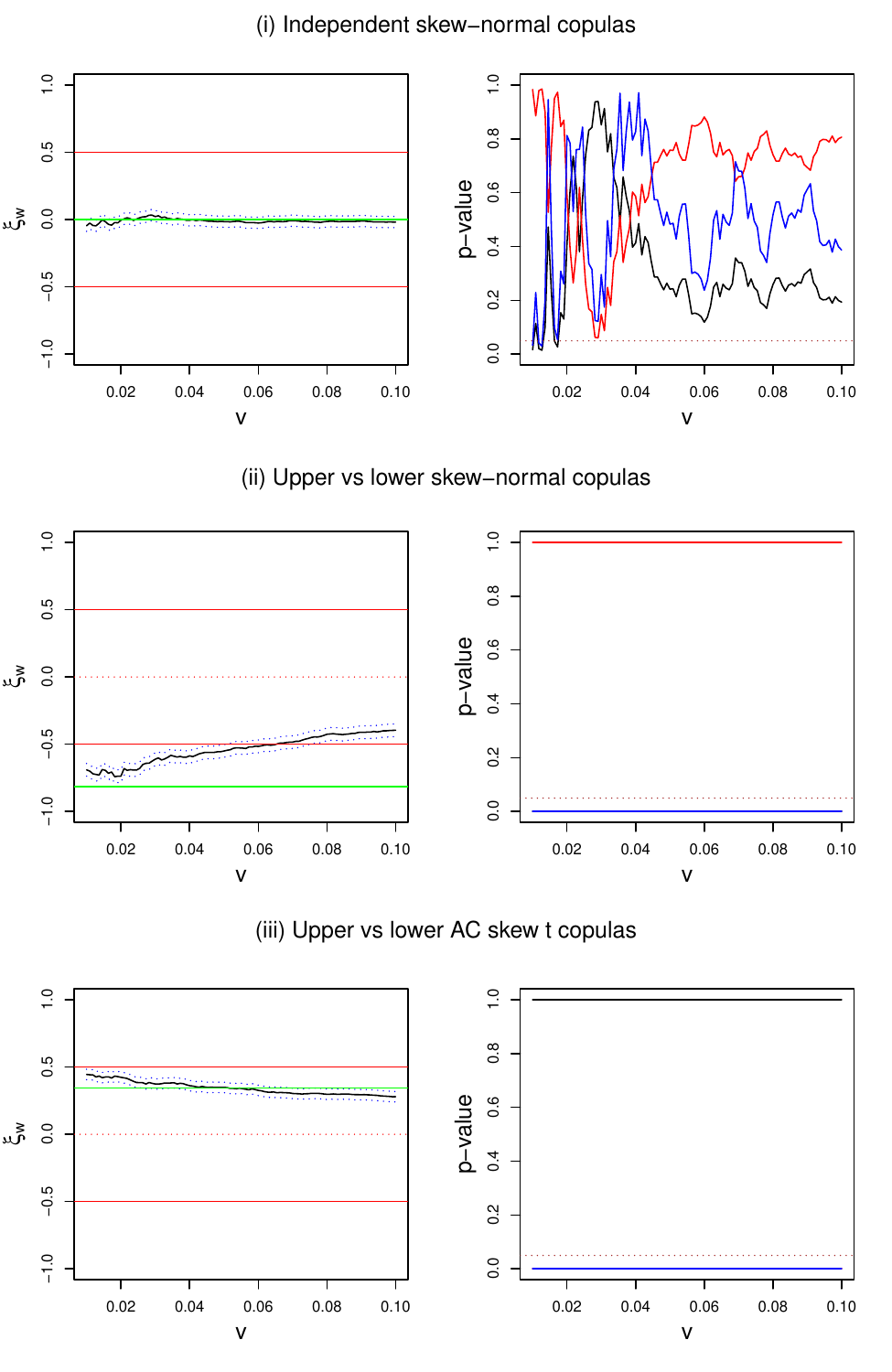}
%\vspace{-8cm}
\caption{
(First column) Estimates of $\xi_{C_1,C_2,w}$ (solid black lines) for $\tilde m_n=k=1000$ and $v$ varying from $0.01$ to $0.1$ together with their 95\% confidence intervals (dotted blue lines). Solid and dotted red lines represent $y=-0.5$, $0$ and $0.5$.
The solid green lines show the true values of $\xi_{C_1,C_2,w}$. 
(Second column) Asymptotic p-values of the left-tailed test (black), right-tailed test (red) and the two-tailed test (blue).
The brown dotted line represents the standard critical level $0.05$.
}
\label{fig:sim:vary:v}
\end{center}
\end{figure}  

In Figure~\ref{fig:sim:fixed}, we observe that estimates of $\hat \xi_{C_1,C_2,w}(u)$ are close to the true values for all the cases.
We also see that a suitable threshold $u$ differs in cases for $\xi_{C_1,C_2,w}(u)$ to approximate the limit $\xi_{C_1,C_2,w}$.
In Case~\ref{item:case:1:indep:sn}, the confidence interval gets wider as $u$ gets smaller.
In Case~\ref{item:case:2:sn}, the convergence of $\xi_{C_1,C_2,w}(u)$ to $\xi_{C_1,C_2,w}$ seems slower when compared with Case~\ref{item:case:3:st}.
As a result, smaller $u$ seems required in Case~\ref{item:case:2:sn}, and too small $u$ may fail to estimate $\xi_{C_1,C_2,w}$ in Case~\ref{item:case:3:st}.
Comparing Figures~\ref{fig:sim:vary:k} and~\ref{fig:sim:vary:v}, the estimates of $\xi_{C_1,C_2,w}$ are more stable in Figure~\ref{fig:sim:vary:k}, and thus the choice of $v_n$ seems more crucial than that of $k$.
In both figures, we find that estimates in Case~\ref{item:case:2:sn} are biased, which may be as expected from the slow convergence. 
Regarding the p-values, we observe for Case~\ref{item:case:1:indep:sn} that the null-hypothesis is rarely rejected, and this result is consistent with the theoretical value $\xi_{C_1,C_2,w}=0$.
For Cases~\ref{item:case:2:sn} and~\ref{item:case:3:st}, the null-hypothesis is rejected with the standard critical level $0.05$ based on all the calculated p-values of the two-sided test and the one-sided test with the side correctly specified.

For robustness analysis, we compute estimates of $\xi_{C_1,C_2,w}$ with $k=1000$ and $v\in[0.01,0.1]$ for $x^\ast$ varying from $1$ to $2$.
The results are given in Figure~\ref{fig:sim:vary:xast}.
We see that, in Case~\ref{item:case:1:indep:sn}, the estimates are stable against the choice of $x^\ast$.
In Case~\ref{item:case:2:sn}, the curves $v\mapsto \hat \xi_{w}$ move away from $0$ as $v$ gets smaller, and become stable for sufficiently small $v$.
In Case~\ref{item:case:3:st}, the curves seem to be gradually converging as $v$ gets smaller.
These features can be generally expected depending on the case of identical or non-identical tail orders.
It is recommended to change $x^\ast$ when the curve $v\mapsto \hat \xi_{w}$ fluctuates around one of the boundaries $\pm w$ since, in such a case, it is not clear whether the difference of tail behaviors stems from that of tail orders or of tail order parameters.

\begin{figure}[!t]
\begin{center}
%\vspace{-7cm}
\includegraphics[width=165mm]{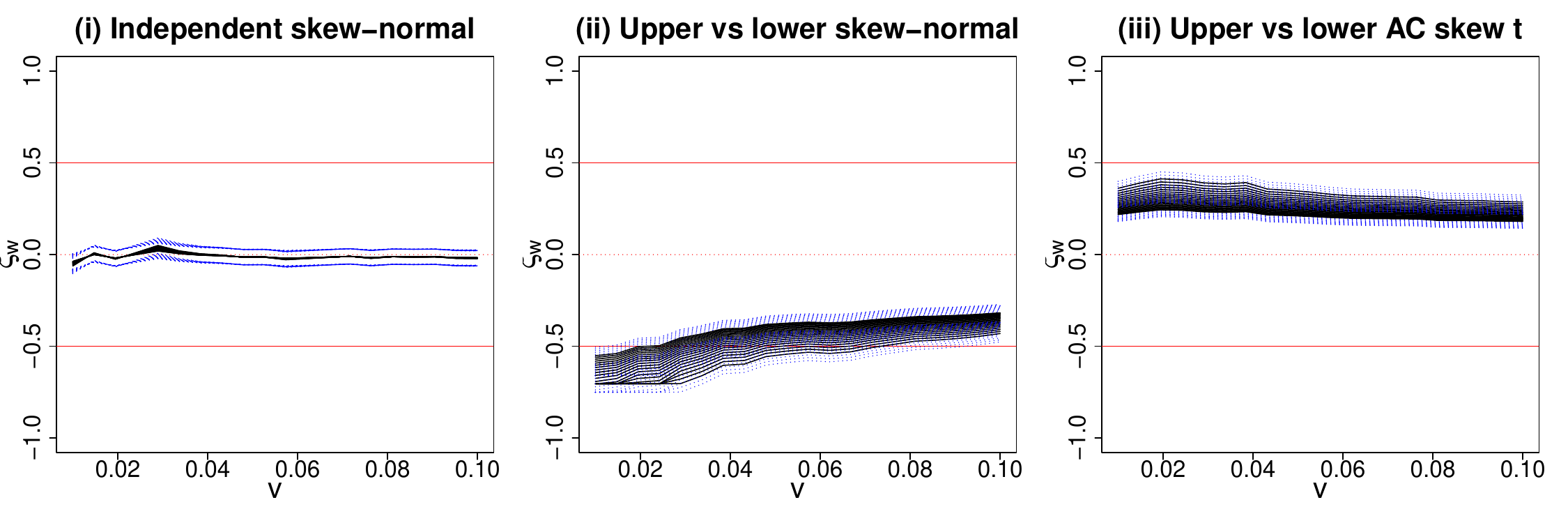}
%\vspace{-8cm}
\caption{Estimates of $\xi_{C_1,C_2,w}$ (solid black lines) and their 95\% confidence intervals (dotted blue lines) for $x^\ast$ varying from $1$ to $2$.
We fix $k=1000$, and $v$ varies from $0.01$ to $0.1$.
Solid and dotted red lines represent $y=-0.5$, $0$ and $0.5$.
}
\label{fig:sim:vary:xast}
\end{center}
\end{figure}

\subsection{Empirical study}\label{sec:empirical}

In this section, we apply the inference in Section~\ref{sec:simulation} to a real data.
We analyze negative daily log returns of the stock indices SP500, FTSE and NIKKEI where we consider two periods, one from January 1st, 2006 to December 9th, 2010, and the other from January 1st, 2018 to December 31st, 2022.
The first period contains the global financial crisis in 2007, and the second includes the COVID-19 recession.
The sample size in each period is $n=1153$.
%from 2007-01-02 to 2023-10-13.
%We conduct rolling window analysis with one day step size, and each window consisting of 1000 days.
In each period, we fit GARCH(1,1) model with skew $t$ white noise to filter the residuals, which are assumed to be independent and identically distributed over time.
We then transform the residuals to pseudo-observations, based on which we quantify and test tail asymmetry with our proposed measures and related hypothesis tests.
We consider the following five different pairs of tails.
\begin{enumerate}[label=(\roman*)]\setcounter{enumi}{0}
 \item\label{item:case:i:emp} Lower and upper tails of (FTSE, NIKKEI) in the first period.
\item\label{item:case:ii:emp} Lower and upper tails of (FTSE, NIKKEI) in the second period.
\item\label{item:case:iii:emp} Upper tails of (SP500, 
 NIKKEI) and (SP500, FTSE) in the first period.
\item\label{item:case:iv:emp} Upper tails of (SP500, 
 NIKKEI) and (SP500, FTSE) in the second period.
\item\label{item:case:v:emp} Upper tail of (FTSE, NIKKEI) in the first and second periods. 
\end{enumerate}

When estimating the limit $\xi_{C_1,C_2,w}$, we split the samples to reduce dependence between samples from $C_1$ and those from $C_2$ in Cases~\ref{item:case:i:emp}--\ref{item:case:iv:emp}.
We do not split samples when comparing upper tails in Case~\ref{item:case:v:emp} under the independence assumption between samples in the first and second periods.
Such a splitting is also unnecessary when estimating $\xi_{C_1,C_2,w}(u)$.
The hyperparameters are set to be the same as those in the simulation study.
Moreover, we vary $\tilde m_n$ and $v_n$ such that $(\tilde m_n,v_n)\in\{(100,v),(k,0.1), v\in[0.05,0.4], k\in [50,500]\}$.
The results are summarized in Figures~\ref{fig:emp:fixed},~\ref{fig:emp:vary:k} and~\ref{fig:emp:vary:v}.
For robustness analysis, we also compute $\hat \xi_{w}$ with $k=100$ and $v \in [0.1,0.2]$ for $x^\ast\in [0.25,1.5]$.
The results are illustrated in Figure~\ref{fig:emp:vary:xast}.

\begin{figure}[!t]
\begin{center}
%\vspace{-7cm}
\includegraphics[width=105mm]{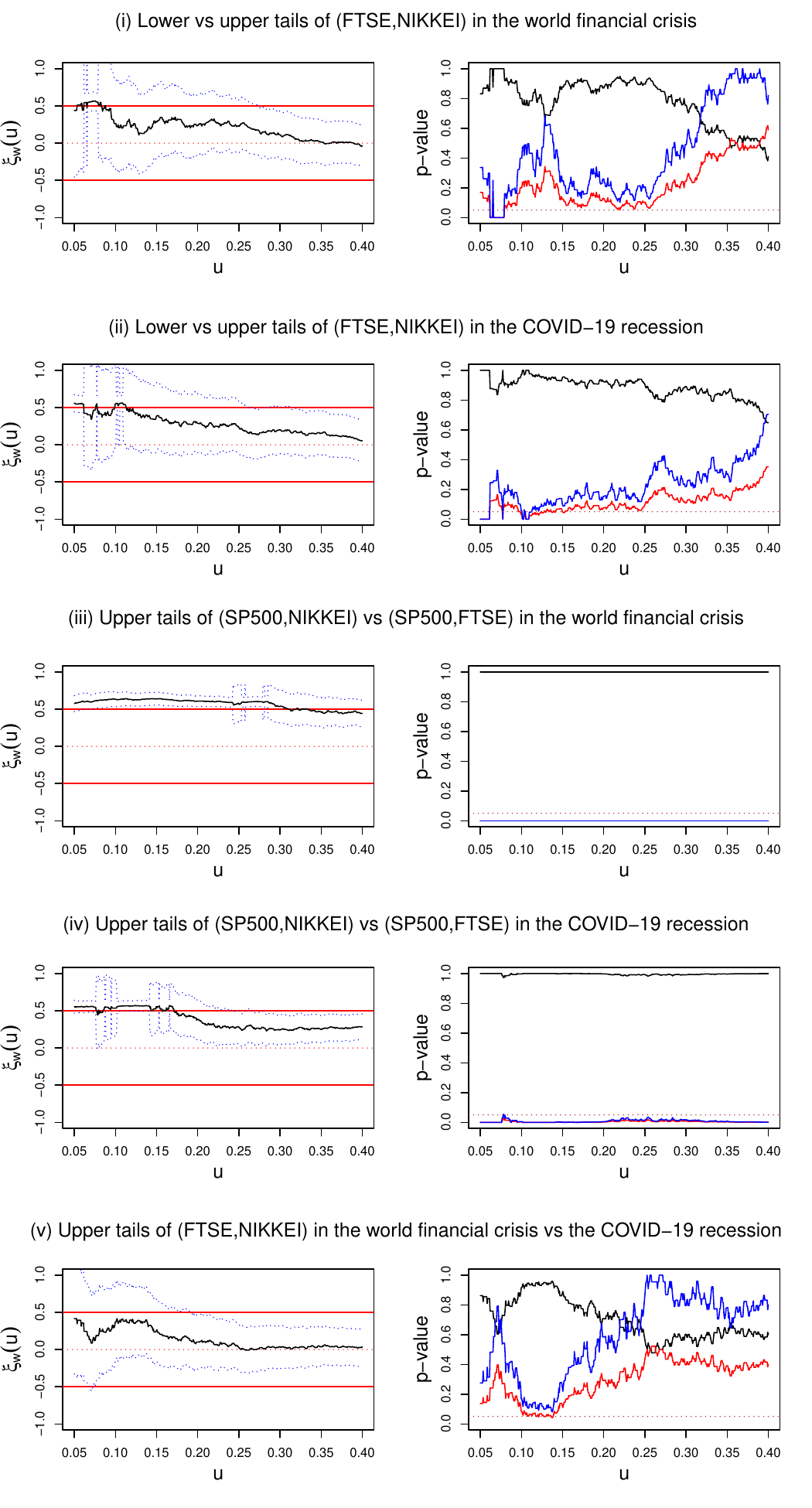}
%\vspace{-8cm}
\caption{(First column) Estimates of $\xi_{C_1,C_2,w}(u)$ (solid black lines), $u\in[0.05,0.4]$, in various tails and time periods together with their 95\% confidence intervals (dotted blue lines).
Solid and dotted red lines represent $y=-0.5$, $0$ and $0.5$.
(Second column) Asymptotic p-values of the left-tailed test (black), right-tailed test (red) and the two-tailed test (blue). The brown dotted line represents the standard critical level $0.05$.}
\label{fig:emp:fixed}
\end{center}
\end{figure} 

\begin{figure}[!t]
\begin{center}
%\vspace{-7cm}
\includegraphics[width=105mm]{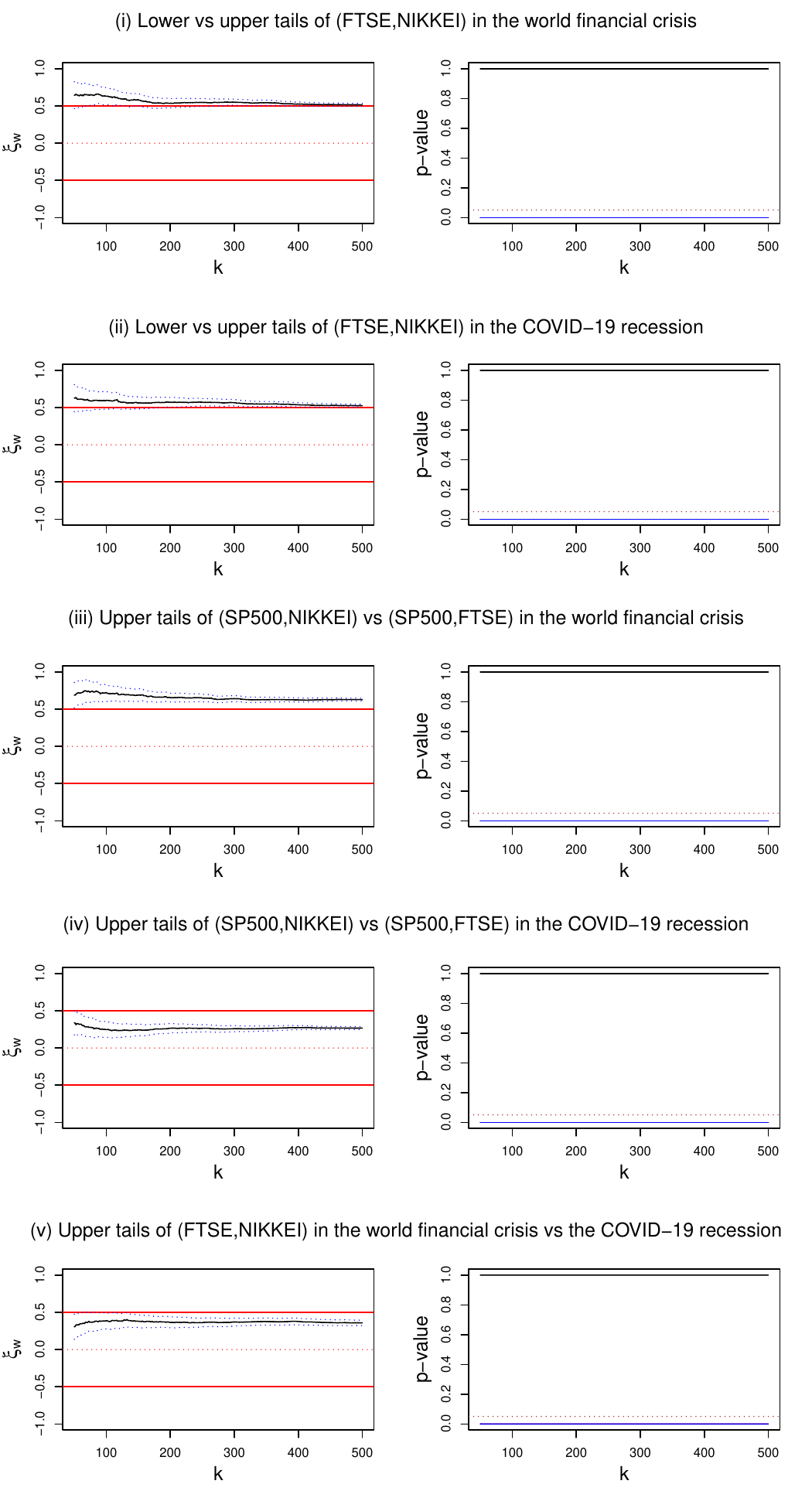}
%\vspace{-8cm}
\caption{(First column) Estimated $\xi_{C_1,C_2,w}$ (solid black lines) in various tails and time periods together with their 95\% confidence intervals (dotted blue lines).
Solid and dotted red lines represent $y=-0.5$, $0$ and $0.5$.
We fix $v=0.1$, and $k$ varies from $50$ to $500$.
(Second column) Asymptotic p-values of the left-tailed test (black), right-tailed test (red) and the two-tailed test (blue). The brown dotted line represents the standard critical level $0.05$.}
\label{fig:emp:vary:k}
\end{center}
\end{figure}  

\begin{figure}[!t]
\begin{center}
%\vspace{-7cm}
\includegraphics[width=105mm]{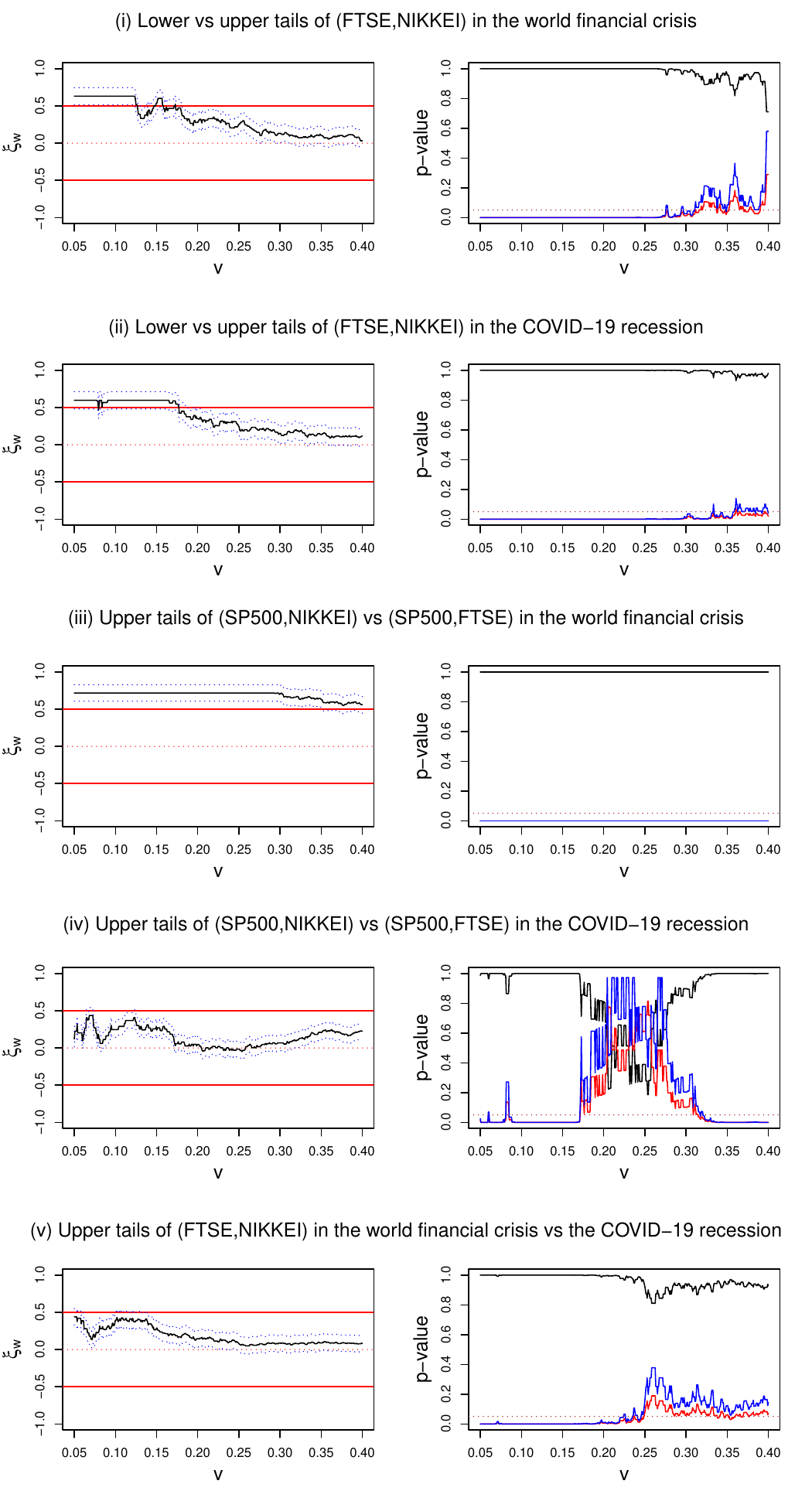}
%\vspace{-8cm}
\caption{(First column) Estimated $\xi_{C_1,C_2,w}$ (solid black lines) in various tails and time periods together with their 95\% confidence intervals (dotted blue lines).
Solid and dotted red lines represent $y=-0.5$, $0$ and $0.5$.
We fix $k=100$, and $v$ varies from $0.05$ to $0.4$.
(Second column) Asymptotic p-values of the left-tailed test (black), right-tailed test (red) and the two-tailed test (blue). The brown dotted line represents the standard critical level $0.05$.}
\label{fig:emp:vary:v}
\end{center}
\end{figure}

\begin{figure}[!t]
\begin{center}
%\vspace{-7cm}
\includegraphics[width=165 mm]{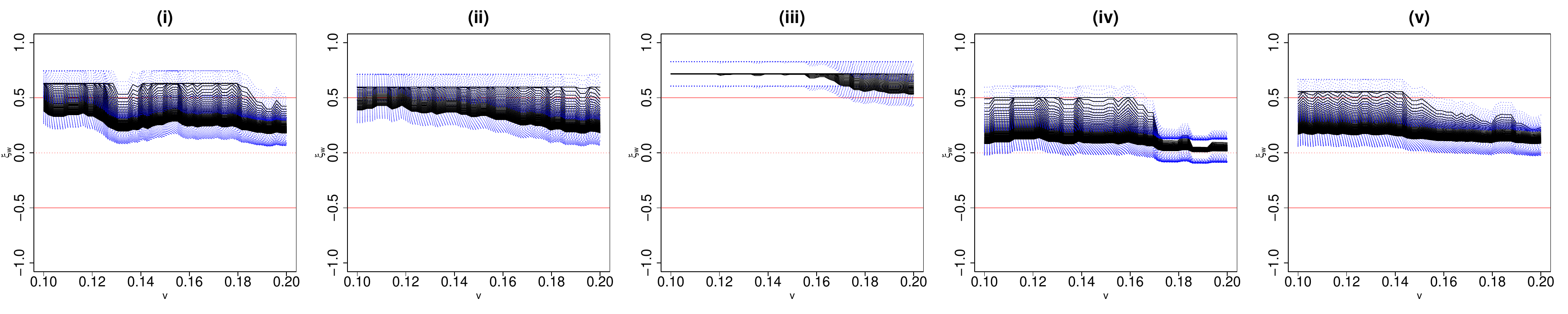}
%\vspace{-8cm}
\caption{Estimates of $\xi_{C_1,C_2,w}$ (solid black lines) and their 95\% confidence intervals with $k=100$ and $v\in[0.1,0.2]$ for $x^\ast$ varying from $0.25$ to $1.5$.
Solid and dotted red lines represent $y=-0.5$, $0$ and $0.5$.
}
\label{fig:emp:vary:xast}
\end{center}
\end{figure} 

We first focus on Cases~\ref{item:case:i:emp} and~\ref{item:case:ii:emp}, where upper and lower tails of (FTSE,NIKKEI) is considered in different periods.
For both cases, we see in Figure~\ref{fig:emp:fixed} that $\hat \xi_w(u)$ ascends to $1/2$ as $u$ gets smaller and fluctuates around $1/2$ for small thresholds.
Therefore, for both cases, the upper tail dependence is stronger than the lower one.
Note that the confidence intervals can be discontinuous with respect to $u$ since $h'_2$ jumps at $\pm x^\ast$.
Next, Figures~\ref{fig:emp:vary:k} and~\ref{fig:emp:vary:v} show that the estimates of the limit $\xi_{C_1,C_2,w}$ are stable slightly above $1/2$.
Together with the robustness analysis in Figure~\ref{fig:emp:vary:xast}, we draw a vague conclusion that both cases exhibit either moderately different tail orders or the identical tail order with  different tail order parameters.

We next compare Cases~\ref{item:case:iii:emp} and~\ref{item:case:iv:emp}, where we consider the two upper tails of different pairs of stock indices in each period.
For these cases, the tail dependence of (SP500,FTSE) is stronger than that of (SP500,NIKKEI) in both periods of time.
In Cases~\ref{item:case:iii:emp}, all the estimates for small thresholds are clearly above $1/2$, which indicates different tail orders between (SP500,FTSE) and (SP500,NIKKEI).
On the other hands, the estimates fluctuates around $1/2$ in Case~\ref{item:case:iv:emp}.
Combined with the robustness analysis in Figure~\ref{fig:emp:vary:xast}, we find that Case~\ref{item:case:iii:emp} exhibits different tail orders and Case~\ref{item:case:iv:emp} has the identical tail order with different tail order parameters.

In Case~\ref{item:case:v:emp}, estimates of $\xi_{C_1,C_2,w}(u)$ are close to $0$ for moderate thresholds and are slightly above $0$ for small thresholds, which indicates weak asymmetry in the upper tails between two different periods of recession.
This weak asymmetry in the tails is supported more clearly by the analyses of varying thresholds and $x^\ast$.
Since the estimates for small thresholds are below $1/2$, we conclude that, in this case, the tail order remains unchanged and the tail order parameters differ in two times of recession.

\section{Concluding remarks}\label{sec:conclusion}

We propose new measures signifying the qualitative difference of two tail behaviors summarized by tail orders and tail order parameters. 
Summarizing such a difference by a single number can be particularly beneficial when, for example, dynamic alternation of such a difference is of interest.
Statistical inference of the proposed measures can be conducted based on our asymptotic results.

As is commonly the case for extreme value analysis, the choice of thresholds is an important issue for the statistical inference on $\xi_{C_1,C_2,w}$.
From the numerical illustrations in Section~\ref{sec:numerical}, our suggestion is to first search a suitable $u\in(0,1)$ by plotting $\hat \xi_{w}(u)$ against $u$, and then check the stability of $\hat \xi_w$ for various pairs of thresholds $(k,v)$.
Particularly when our measures take values around $1/2$ or $-1/2$, it is recommendable to change $x^\ast$ in $h_2$ for the measures to detect the qualitative difference of tail behaviors more clearly.

\section*{Acknowledgements}
This research was funded by JSPS KAKENHI Grant Number
JP21K13275 (Koike),
JP20K03759 (Kato), 
JP21K01581 (Yoshiba) and
JP24K00273 (Koike, Kato and Yoshiba).

\bibliographystyle{elsarticle-num-names}

\begin{thebibliography}{28}
\expandafter\ifx\csname natexlab\endcsname\relax\def\natexlab#1{#1}\fi
\providecommand{\url}[1]{\texttt{#1}}
\providecommand{\href}[2]{#2}
\providecommand{\path}[1]{#1}
\providecommand{\DOIprefix}{doi:}
\providecommand{\ArXivprefix}{arXiv:}
\providecommand{\URLprefix}{URL: }
\providecommand{\Pubmedprefix}{pmid:}
\providecommand{\doi}[1]{\href{http://dx.doi.org/#1}{\path{#1}}}
\providecommand{\Pubmed}[1]{\href{pmid:#1}{\path{#1}}}
\providecommand{\bibinfo}[2]{#2}
\ifx\xfnm\relax \def\xfnm[#1]{\unskip,\space#1}\fi
%Type = Article
\bibitem[{Ang and Chen(2002)}]{AngChen2002}
\bibinfo{author}{A.~Ang}, \bibinfo{author}{J.~Chen},
\newblock \bibinfo{title}{Asymmetric correlations of equity portfolios},
\newblock \bibinfo{journal}{Journal of Financial Economics}
  \bibinfo{volume}{63} (\bibinfo{year}{2002}) \bibinfo{pages}{443--494}.
%Type = Article
\bibitem[{Jondeau(2016)}]{jondeau2016asymmetry}
\bibinfo{author}{E.~Jondeau},
\newblock \bibinfo{title}{Asymmetry in tail dependence in equity portfolios},
\newblock \bibinfo{journal}{Computational Statistics \& Data Analysis}
  \bibinfo{volume}{100} (\bibinfo{year}{2016}) \bibinfo{pages}{351--368}.
%Type = Article
\bibitem[{Bormann and Schienle(2020)}]{bormann2020detecting}
\bibinfo{author}{C.~Bormann}, \bibinfo{author}{M.~Schienle},
\newblock \bibinfo{title}{Detecting structural differences in tail dependence
  of financial time series},
\newblock \bibinfo{journal}{Journal of Business \& Economic Statistics}
  \bibinfo{volume}{38} (\bibinfo{year}{2020}) \bibinfo{pages}{380--392}.
%Type = Article
\bibitem[{Can et~al.(2024)Can, Einmahl, and Laeven}]{can2024two}
\bibinfo{author}{S.~U. Can}, \bibinfo{author}{J.~H. Einmahl},
  \bibinfo{author}{R.~J. Laeven},
\newblock \bibinfo{title}{Two-sample testing for tail copulas with an
  application to equity indices},
\newblock \bibinfo{journal}{Journal of Business \& Economic Statistics}
  \bibinfo{volume}{42} (\bibinfo{year}{2024}) \bibinfo{pages}{147--159}.
%Type = Book
\bibitem[{Nelsen(2006)}]{nelsen2006introduction}
\bibinfo{author}{R.~B. Nelsen}, \bibinfo{title}{An Introduction to Copulas},
  \bibinfo{publisher}{Springer}, \bibinfo{address}{New York},
  \bibinfo{year}{2006}.
%Type = Article
\bibitem[{Sibuya(1960)}]{sibuya1960bivariate}
\bibinfo{author}{M.~Sibuya},
\newblock \bibinfo{title}{Bivariate extreme statistics, i},
\newblock \bibinfo{journal}{Annals of the Institute of Statistical Mathematics}
  \bibinfo{volume}{11} (\bibinfo{year}{1960}) \bibinfo{pages}{195--210}.
%Type = Article
\bibitem[{Hua and Joe(2011)}]{HuaJoe2011}
\bibinfo{author}{L.~Hua}, \bibinfo{author}{H.~Joe},
\newblock \bibinfo{title}{Tail order and intermediate tail dependence of
  multivariate copulas},
\newblock \bibinfo{journal}{Journal of Multivariate Analysis}
  \bibinfo{volume}{102} (\bibinfo{year}{2011}) \bibinfo{pages}{1454--1471}.
%Type = Article
\bibitem[{Adcock and Azzalini(2020)}]{adcock2020selective}
\bibinfo{author}{C.~Adcock}, \bibinfo{author}{A.~Azzalini},
\newblock \bibinfo{title}{A selective overview of skew-elliptical and related
  distributions and of their applications},
\newblock \bibinfo{journal}{Symmetry} \bibinfo{volume}{12}
  (\bibinfo{year}{2020}) \bibinfo{pages}{118}.
%Type = Article
\bibitem[{Azzalini(2022)}]{azzalini2022overview}
\bibinfo{author}{A.~Azzalini},
\newblock \bibinfo{title}{An overview on the progeny of the skew-normal
  family—a personal perspective},
\newblock \bibinfo{journal}{Journal of Multivariate Analysis}
  \bibinfo{volume}{188} (\bibinfo{year}{2022}) \bibinfo{pages}{104851}.
%Type = Article
\bibitem[{Kato et~al.(2022)Kato, Yoshiba, and Eguchi}]{Kato_etal2022}
\bibinfo{author}{S.~Kato}, \bibinfo{author}{T.~Yoshiba},
  \bibinfo{author}{S.~Eguchi},
\newblock \bibinfo{title}{Copula-based measures of asymmetry between the lower
  and upper tail probabilities},
\newblock \bibinfo{journal}{Statistical Papers} \bibinfo{volume}{63}
  (\bibinfo{year}{2022}) \bibinfo{pages}{1907--1929}.
%Type = Book
\bibitem[{Falk et~al.(2010)Falk, H{\"u}sler, and Reiss}]{falk2010laws}
\bibinfo{author}{M.~Falk}, \bibinfo{author}{J.~H{\"u}sler},
  \bibinfo{author}{R.-D. Reiss}, \bibinfo{title}{Laws of small numbers:
  extremes and rare events}, \bibinfo{publisher}{Springer Science \& Business
  Media}, \bibinfo{year}{2010}.
%Type = Book
\bibitem[{Embrechts et~al.(2013)Embrechts, Kl{\"u}ppelberg, and
  Mikosch}]{embrechts2013modelling}
\bibinfo{author}{P.~Embrechts}, \bibinfo{author}{C.~Kl{\"u}ppelberg},
  \bibinfo{author}{T.~Mikosch}, \bibinfo{title}{Modelling Extremal Events: for
  Insurance and Finance}, volume~\bibinfo{volume}{33},
  \bibinfo{publisher}{Springer Science \& Business Media},
  \bibinfo{year}{2013}.
%Type = Article
\bibitem[{Schmid and Schmidt(2007)}]{schmid2007nonparametric}
\bibinfo{author}{F.~Schmid}, \bibinfo{author}{R.~Schmidt},
\newblock \bibinfo{title}{Nonparametric inference on multivariate versions of
  blomqvist's beta and related measures of tail dependence},
\newblock \bibinfo{journal}{Metrika} \bibinfo{volume}{66}
  (\bibinfo{year}{2007}) \bibinfo{pages}{323--354}.
%Type = Article
\bibitem[{Hill(1975)}]{hill1975simple}
\bibinfo{author}{B.~M. Hill},
\newblock \bibinfo{title}{A simple general approach to inference about the tail
  of a distribution},
\newblock \bibinfo{journal}{Annals of Statistics} \bibinfo{volume}{3}
  (\bibinfo{year}{1975}) \bibinfo{pages}{1163--1174}.
%Type = Article
\bibitem[{Azzalini and {Dalla Valle}(1996)}]{AzzaliniDallaValle1996}
\bibinfo{author}{A.~Azzalini}, \bibinfo{author}{A.~{Dalla Valle}},
\newblock \bibinfo{title}{The multivariate skew-normal distribution},
\newblock \bibinfo{journal}{Biometrika} \bibinfo{volume}{83}
  (\bibinfo{year}{1996}) \bibinfo{pages}{715--726}.
%Type = Article
\bibitem[{Azzalini and Capitanio(1999)}]{azzalini1999statistical}
\bibinfo{author}{A.~Azzalini}, \bibinfo{author}{A.~Capitanio},
\newblock \bibinfo{title}{Statistical applications of the multivariate skew
  normal distribution},
\newblock \bibinfo{journal}{Journal of the Royal Statistical Society: Series B
  (Statistical Methodology)} \bibinfo{volume}{61} (\bibinfo{year}{1999})
  \bibinfo{pages}{579--602}.
%Type = Article
\bibitem[{Fung and Seneta(2016)}]{FungSeneta2016}
\bibinfo{author}{T.~Fung}, \bibinfo{author}{E.~Seneta},
\newblock \bibinfo{title}{Tail asymptotics for the bivariate skew normal},
\newblock \bibinfo{journal}{Journal of Multivariate Analysis}
  \bibinfo{volume}{144} (\bibinfo{year}{2016}) \bibinfo{pages}{129--138}.
%Type = Article
\bibitem[{Azzalini and Capitanio(2003)}]{AzzaliniCapitanio2003}
\bibinfo{author}{A.~Azzalini}, \bibinfo{author}{A.~Capitanio},
\newblock \bibinfo{title}{Distributions generated by perturbation of symmetry
  with emphasis on a multivariate skew $t$-distribution},
\newblock \bibinfo{journal}{Journal of the Royal Statistical Society Series B}
  \bibinfo{volume}{65} (\bibinfo{year}{2003}) \bibinfo{pages}{367--389}.
%Type = Article
\bibitem[{Fung and Seneta(2010)}]{FungSeneta2010}
\bibinfo{author}{T.~Fung}, \bibinfo{author}{E.~Seneta},
\newblock \bibinfo{title}{Tail dependence for two skew $t$ distributions},
\newblock \bibinfo{journal}{Statistics \& Probability Letters}
  \bibinfo{volume}{80} (\bibinfo{year}{2010}) \bibinfo{pages}{784--791}.
%Type = Article
\bibitem[{Yoshiba(2018)}]{Yoshiba2018}
\bibinfo{author}{T.~Yoshiba},
\newblock \bibinfo{title}{Maximum likelihood estimation of skew-$t$ copulas
  with its applications to stock returns},
\newblock \bibinfo{journal}{Journal of Statistical Computation and Simulations}
  \bibinfo{volume}{88} (\bibinfo{year}{2018}) \bibinfo{pages}{2489--2506}.
%Type = Article
\bibitem[{Fung and Seneta(2022)}]{FungSeneta2022}
\bibinfo{author}{T.~Fung}, \bibinfo{author}{E.~Seneta},
\newblock \bibinfo{title}{Tail asymptotics for the bivariate skew normal in the
  general case},
\newblock \bibinfo{journal}{arXiv preprint arXiv:2210.01284}
  (\bibinfo{year}{2022}).
%Type = Article
\bibitem[{Yoshiba et~al.(2023)Yoshiba, Koike, and Kato}]{yoshiba2023measure}
\bibinfo{author}{T.~Yoshiba}, \bibinfo{author}{T.~Koike},
  \bibinfo{author}{S.~Kato},
\newblock \bibinfo{title}{On a measure of tail asymmetry for the bivariate
  skew-normal copula},
\newblock \bibinfo{journal}{Symmetry} \bibinfo{volume}{15}
  (\bibinfo{year}{2023}) \bibinfo{pages}{1410}.
%Type = Article
\bibitem[{Padoan(2011)}]{padoan2011multivariate}
\bibinfo{author}{S.~A. Padoan},
\newblock \bibinfo{title}{Multivariate extreme models based on underlying
  skew-t and skew-normal distributions},
\newblock \bibinfo{journal}{Journal of Multivariate Analysis}
  \bibinfo{volume}{102} (\bibinfo{year}{2011}) \bibinfo{pages}{977--991}.
%Type = Book
\bibitem[{Resnick(2007)}]{Resnick2007}
\bibinfo{author}{S.~I. Resnick}, \bibinfo{title}{Heavy-Tail Phenomena:
  Probabilistic and Statistical Modeling}, Springer Series in Operations
  Research and Financial Engineering, \bibinfo{publisher}{Springer},
  \bibinfo{address}{New York, NY}, \bibinfo{year}{2007}.
%Type = Book
\bibitem[{van~der Vaart(1998)}]{van1998asy}
\bibinfo{author}{A.~W. van~der Vaart}, \bibinfo{title}{Asymptotic Statistics},
  \bibinfo{publisher}{Cambridge University Press}, \bibinfo{year}{1998}.
%Type = Book
\bibitem[{Durrett(2019)}]{durrett2019probability}
\bibinfo{author}{R.~Durrett}, \bibinfo{title}{Probability: Theory and
  Examples}, \bibinfo{publisher}{Cambridge University Press},
  \bibinfo{year}{2019}.
%Type = Book
\bibitem[{Billingsley(2017)}]{billingsley2017probability}
\bibinfo{author}{P.~Billingsley}, \bibinfo{title}{Probability and measure},
  \bibinfo{publisher}{John Wiley \& Sons}, \bibinfo{year}{2017}.
%Type = Article
\bibitem[{Draisma et~al.(2004)Draisma, Drees, Ferreira, and
  De~Haan}]{draisma2004bivariate}
\bibinfo{author}{G.~Draisma}, \bibinfo{author}{H.~Drees},
  \bibinfo{author}{A.~Ferreira}, \bibinfo{author}{L.~De~Haan},
\newblock \bibinfo{title}{Bivariate tail estimation: dependence in asymptotic
  independence},
\newblock \bibinfo{journal}{Bernoulli} \bibinfo{volume}{10}
  (\bibinfo{year}{2004}) \bibinfo{pages}{251--280}.

\end{thebibliography}

%\newpage
%\section{Conclusion}\label{sec:conclusion}

%Multivariate extension to a pairwise matrix is an interesting avenue for future research.
%The likelihood approach as in \cite{krupskii2019nonparametric} and \cite{li2023estimation} is also left for future work.
%In this approach, assume that $C_1(u,\dots,u)=u^\kappa_1 \lambda_1$ and $C_2(u,\dots,u)\sim u^\kappa_2 \lambda_2$ for $u \le u^\ast$.
%For $(U_1,\dots,U_d)\sim C_1$ and $(V_1,\dots,V_d)\sim C_2$, let $M_1 = \max(U_1,\dots,U_d)$ and $M_2 = \max(V_1,\dots,V_d)$.
%For $u \le u^\ast$,
%\begin{align*}
%\mathbb{P}(U_j\le u; j\in [d] \mid U_j\le u^\ast; j\in [d])=\frac{u^{\kappa_1} \lambda_1}{u^{\kappa_1} \lambda_1}
%\end{align*}
%In this approach we assume the tail expansion as in \cite{krupskii2019nonparametric} and \cite{li2023estimation}. The latter proved consistency but did not prove asymptotic normality.

\appendix 

\section*{Appendices}

\section{Proofs, examples and technical details}

\subsection{Proof of Theorem~\ref{thm:properties:xi}}

We will prove~\eqref{eq:xi:limit}, from which all the properties~\ref{xi:prop:range}--\ref{xi:prop:monotonicity} are straightforward to check.
First, the relationship~\eqref{eq:def:lower:tail:order}
% and~\eqref{eq:def:upper:tail:order} 
yields, as $x\rightarrow \infty$, 
\begin{align}
\nonumber
\alpha_{C_1,C_2}(1/x) &= \log\left( \frac{C_2(1/x,\dots,1/x)}{C_1(1/x,\dots,1/x)}\right)\\
\nonumber &\sim \log \left(\frac{x^{-\kappa_2}\ell_2(1/x)}{x^{-\kappa_1}\ell_{1}(1/x)}\right)\\
\label{eq:asym:approx:nu}
& \sim (\kappa_{2}- \kappa_1) \log \left(\frac{1}{x}\right) + \log \ell_2(1/x) - \log \ell_1(1/x).
\end{align}
Notice that the functions $x \mapsto \ell_1(1/x)$ and $x \mapsto \ell_2(1/x)$ are slowly varying at $\infty$.
%The above asymptotic formula motivates us to consider the two quantities $\alpha_{C_1,C_2}(u)/\log u$ and $\alpha_C(u)$.
Therefore, regardless of $\kappa_1= \kappa_2$ or $\kappa_1\neq  \kappa_2$, we have that
\begin{align}\label{eq:xi:proof:case:1}
\lim_{u\downarrow 0}\frac{\alpha_{C_1,C_2}(u)}{\log u}= \kappa_2- \kappa_1,
\end{align}
by~\eqref{eq:asym:approx:nu} and Proposition 2.6 (i)~of~\cite{Resnick2007}.

Next, assume that $\kappa_1= \kappa_2$.
Except the two cases $\lambda_1=\lambda_2=0$ and $\lambda_1=\lambda_2=\infty$, we have that
\begin{align}\label{eq:xi:proof:case:2}
\lim_{v\downarrow 0}\alpha_{C_1,C_2}(v)= \log \left(\frac{\lambda_2}{\lambda_1}
\right),
\end{align}
which is zero if and only if $\lambda_1=\lambda_2$. 

Finally, if $\kappa_1\neq  \kappa_2$, then
\begin{align}\label{eq:xi:proof:case:3}
    \lim_{v\downarrow 0}\alpha_{C_1,C_2}(v) =
    \begin{cases}
        +\infty, & \text{ if }\kappa_1 > \kappa_2,\\
        -\infty,& \text{ if }\kappa_1 < \kappa_2.\\
    \end{cases}
\end{align} 

The asymptotic relationships~\eqref{eq:xi:proof:case:1}~\eqref{eq:xi:proof:case:2}~and~\eqref{eq:xi:proof:case:3}
are summarized as follows:
\begin{small}
\begin{align*}
\left(\lim_{u\downarrow 0}\frac{\alpha_{C_1,C_2}(u)}{\log (1/u)},\lim_{v\downarrow 0}\alpha_{C_1,C_2}(v)\right)=
\begin{cases}
(0,0),& \text{ if } 
\kappa_1=\kappa_2
%\kappa_{\operatorname{U}}(C)= \kappa_{\operatorname{L}}(C)
\text{ and } 
\lambda_1=\lambda_2,\\
\left(0,\log\left(\lambda_2/\lambda_1\right)\right),& \text{ if }
\kappa_1=\kappa_2
 \text{ and }
 \lambda_1\neq \lambda_2
 ,\\
(\kappa_1-\kappa_2,\infty),
& \text{ if }
\kappa_1>\kappa_2,\\
(\kappa_1-\kappa_2,-\infty),
& \text{ if }
\kappa_1<\kappa_2,\\
\end{cases}
\end{align*}
\end{small}
and thus we obtain~\eqref{eq:xi:limit}.

\subsection{Proofs of Theorems~\ref{thm:consistency} and~\ref{thm:asymptotics}}

We first show the following lemma.
\begin{Lemma} \label{lem:moments}
For $u,v \in (0,1)$, it holds that
$$
\mathbb{E} \left[ \hat F_k(u) \right] = \frac{1}{n}+ F_k(u), \quad
\var \left( F_k (u) \right) = \frac1n F_k(u) \{ 1 - F_k(u) \},
$$
$$
\cov \left( \hat F_k (u), \hat F_k (v) \right) = \frac1n F_k (u\wedge v) \{ 1 - F_k (u \vee v) \},
$$
$$
\cov \left( \hat F_k(u), \hat F_{l}(v) \right) = \frac{1}{n} \left\{ H(u,v) - F_{k}(u) F_{l} (v) \right\},
$$
where $k,l=1,2$ with $k \neq l$.
\end{Lemma}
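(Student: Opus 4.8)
The plan is to express each modified empirical cdf as a deterministic shift plus an average of independent Bernoulli indicators, and then reduce every moment computation to elementary properties of sums of independent indicators. Concretely, writing $B_i^{(k)}(u)=\id(M_i^{(k)}\le u)$, we have $\hat F_k(u)=\tfrac1n+\tfrac1n\sum_{i=1}^n B_i^{(k)}(u)$, where, since $M_i^{(k)}\sim F_k$, each $B_i^{(k)}(u)$ is a Bernoulli variable with $\E[B_i^{(k)}(u)]=F_k(u)$. The mean formula then follows immediately by linearity of expectation, $\E[\hat F_k(u)]=\tfrac1n+\tfrac1n\sum_{i=1}^n F_k(u)=\tfrac1n+F_k(u)$.

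First I would note that the deterministic offset $\tfrac1n$ plays no role in any (co)variance, so all that remains is to compute second moments of the averages $\tfrac1n\sum_i B_i^{(k)}(u)$. For the within-sample covariance $\cov(\hat F_k(u),\hat F_k(v))$ I would use independence across the index $i$ (Assumption~\ref{item:A4:iid}) to annihilate all cross terms $i\neq j$, leaving $\tfrac1{n^2}\sum_{i=1}^n\cov\bigl(B_i^{(k)}(u),B_i^{(k)}(v)\bigr)$. The single-index covariance reduces to a joint probability via $\E[B_i^{(k)}(u)B_i^{(k)}(v)]=\mathbb{P}(M_i^{(k)}\le u,\,M_i^{(k)}\le v)=\mathbb{P}(M_i^{(k)}\le u\wedge v)=F_k(u\wedge v)$, so that $\cov(B_i^{(k)}(u),B_i^{(k)}(v))=F_k(u\wedge v)-F_k(u)F_k(v)$. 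The stated form follows by the elementary identity $F_k(u\wedge v)-F_k(u)F_k(v)=F_k(u\wedge v)\{1-F_k(u\vee v)\}$, which is checked by assuming without loss of generality $u\le v$. Setting $u=v$ then recovers the variance formula.

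For the cross-sample covariance $\cov(\hat F_k(u),\hat F_l(v))$ with $k\neq l$, the same expansion gives $\tfrac1{n^2}\sum_{i,j}\cov\bigl(B_i^{(k)}(u),B_j^{(l)}(v)\bigr)$, and here the key structural input is that $\mathbf{U}_i^{(k)}$ and $\mathbf{U}_j^{(l)}$ are independent whenever $i\neq j$ (Assumption~\ref{item:A4:iid}); hence only the diagonal terms $i=j$ survive. For those I would use that $(M_i^{(1)},M_i^{(2)})\sim H$, so that, taking $k=1$ and $l=2$, $\E[B_i^{(1)}(u)B_i^{(2)}(v)]=\mathbb{P}(M_i^{(1)}\le u,\,M_i^{(2)}\le v)=H(u,v)$, giving $\cov(B_i^{(1)}(u),B_i^{(2)}(v))=H(u,v)-F_1(u)F_2(v)$ and hence the claimed $\tfrac1n\{H(u,v)-F_k(u)F_l(v)\}$ after summing and dividing by $n^2$.

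This argument is entirely elementary, so there is no genuine obstacle; the only points requiring care are bookkeeping rather than substance. The first is the correct use of the two independence statements in Assumption~\ref{item:A4:iid} — independence across $i$ within a fixed superscript for the within-sample term, and independence across $i\neq j$ between superscripts for the cross-sample term — to ensure the correct off-diagonal terms vanish. The second is keeping track of the $\wedge/\vee$ ordering and of the argument order in $H$: for $k=2,\,l=1$ the relevant joint probability is $H(v,u)$ rather than $H(u,v)$, which is exactly what produces the two distinct $H$-terms appearing later in the variance matrix of Theorem~\ref{thm:asymptotics}.
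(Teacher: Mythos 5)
Your proof is correct and follows essentially the same route as the paper: expand the modified empirical cdfs into sums of Bernoulli indicators, let the independence structure in Assumption~\ref{item:A4:iid} kill the off-diagonal terms, and identify the diagonal term of the cross-sample covariance with $H$. The only cosmetic difference is that the paper omits the first three computations by citing Lemma~1 of \cite{Kato_etal2022} and writes out only the cross-covariance, whereas you carry out all four directly; your remark that the $k=2$, $l=1$ case yields $H(v,u)$ is a correct refinement consistent with the form of $A(u_i,u_j)$ in Theorem~\ref{thm:asymptotics}.
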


\begin{proof}
        The three summary measures
        $ \mathbb{E} [ \hat F_k(u) ]$, $
        \var ( \hat F_k (u) ) $ and $\cov ( \hat F_k(u), \hat F_{k}(v) )$ can be calculated in a similar manner as in Lemma 1 of \cite{Kato_etal2022} and therefore the proofs are omitted.
        The covariance of $\hat F_k(u)$ and $\hat F_{l}(v)$ can be expressed as
        \begin{align*}
            \cov \left( \hat F_k(u) \hat F_{l}(v) \right) = & \ \mathbb{E} \left[ \left(\hat F_k(u)-\frac{1}{n}\right)\left(\hat F_{l}(v)-\frac{1}{n}\right) \right] -  F_k(u)F_{l}(v).
   \end{align*}
   The first term of the right-hand side of this equation can be calculated as
   \begin{align*}
         \frac{1}{n^2} \sum_{i,j=1}^n \mathbb{E}& \left[ \id (
  M_i^{(k)}
%  U_{i1}^{(k)} \leq u , \ldots , U_{id}^{(k)}
  \leq u ) \id (
 % U_{j1}^{({l})} \leq v , \ldots, U_{jd}^{({l})} 
    M_j^{(l)}
  \leq v )\right] \\
  = & \ \frac{1}{n^2} \sum_{i=1}^n \mathbb{E} \left[ \id (
    M_i^{(k)}
%  U_{i1}^{(k)} \leq u , \ldots , U_{id}^{(k)} 
  \leq u ) \id (
    M_i^{(l)}
    %U_{i1}^{({l})} \leq v , \ldots, U_{id}^{({l})} 
  \leq v )\right]  + \frac{1}{n^2} \sum_{i \neq j} \mathbb{E} \left[ \id (
   M_i^{(k)}
   %U_{i1}^{(k)} \leq u , \ldots , U_{id}^{(k)}  
 \leq u ) \id (
   M_j^{(l)}
   %U_{j1}^{({l})} \leq v , \ldots, U_{jd}^{({l})} 
 \leq v )\right] \\
  = & \ \frac{1}{n^2} \sum_{i=1}^n \mathbb{E} \left[ \id (
  M_{i}^{(k)}
%  U_{i1}^{(k)} \leq u , \ldots , U_{id}^{(k)}  
 \leq u, 
  M_{i}^{(l)}
 %U_{i1}^{({l})} \leq v , \ldots, U_{id}^{({l})} 
 \leq v )\right]   + \frac{1}{n^2} \sum_{i \neq j} \mathbb{E} \left[ \id (
  M_{i}^{(k)}
  %U_{i1}^{(k)} \leq u , \ldots , U_{id}^{(k)} 
  \leq u ) \right] \mathbb{E} \left[ \id (
  M_j^{(l)}
  %U_{j1}^{({l})} \leq v , \ldots, U_{jd}^{({l})} 
  \leq v )\right] \\
= & \ \frac{1}{n^2} \left\{ \sum_{i=1}^n H(u,v) + \sum_{i \neq j} F_k (u) F_{l} (v)  \right\} \\
= & \ \frac{1}{n} \left\{ H(u,v) + (n-1) F_{k}(u) F_{l} (v) \right\}.
        \end{align*}
  Thus we have
  \begin{align*}
  \cov \left( \hat F_k(u), \hat F_{l}(v) \right) = & \ \frac{1}{n} \left\{ H(u,v) + (n-1) F_{k}(u) F_{l} (v) \right\}  - F_{k}(u) F_{l} (v) \\
  = & \ \frac{1}{n} \left\{ H(u,v) - F_{k}(u) F_{l} (v) \right\}.
  \end{align*}
\end{proof}

\begin{Remark}
Although the first three equations of this lemma are straightforward from Lemma 1 of \cite{Kato_etal2022}, the covariance $\cov ( \hat F_k(u), \hat F_{l}(v) )$ is essentially different from the covariance of the sample analogues of upper and lower tail probabilities \citep{Kato_etal2022}, which takes negative values in general.
\end{Remark}

Using this lemma, Theorem~\ref{thm:consistency} is straightforward from Lemma~\ref{lem:moments}, LLN and CMT~\citep[Theorem 2.3]{van1998asy}.
To prove Theorem~\ref{thm:asymptotics},
    %Without loss of generality, assume $ 0 < u \leq  v \leq 0.5 $.
    let
    $$
    \hat{\bm{\beta}} =(\hat{\beta}_1,\hat{\beta}_2, \hat{\beta}_3, \hat{\beta}_4)^{\top} = \left( \hat F_1(u),\hat F_2(u),\hat F_1(v),\hat F_2(v) \right)^{\top},
    $$
    $$
    \bm{\beta} = (\beta_1,\beta_2,\beta_3,\beta_4)^{\top} = (F_1(u), F_2(u) , F_1(v), F_2(v) )^{\top},
    $$
    $$
    \bm{\Sigma_{\beta}} = (\sigma_{i j} )_{4\times 4}, \quad \sigma_{i j} = n \cdot \cov (\hat{\beta}_i , \hat{\beta}_j). %\label{eq:sigma}
    $$
    Then it follows from Lemma \ref{lem:moments}, LLN and CLT that
    \begin{equation}
    \sqrt{n} ( \hat{\bm{\beta}} - \bm{\beta})  \xrightarrow{d} \mathcal N ( \bm{0} , \bm{ \Sigma_{\beta}}) \quad \mbox{as } n \rightarrow \infty . \label{eq:beta_asy_normal}
    \end{equation}
    Next we define
    $$
    h (\bm{\beta}) = 
    \begin{pmatrix}
        \check{\xi}_u (\beta_2 / \beta_1) \\
        \check{\xi}_v (\beta_4 / \beta_3 )
    \end{pmatrix},
    $$
    where
    $$
    \check{\xi}_t (x) = (1-w) h_1 \left( \frac{\log(x)}{\log(1/t)} \right) + w h_2 \left( \log (x) \right).
    $$
 Then the delta-method implies that
    $$
    \sqrt{n} \left\{ h(\hat{\bm{\beta}}) - h(\bm{\beta})  \right\}  - \nabla h (\bm{\beta})^{\top}  \left\{ \sqrt{n} ( \hat{\bm{\beta}} - \bm{\beta}) \right\} \label{eq:delta_method} %N \left( 0 , \nabla h(\bm{\beta})^{\top} \bm{ \Sigma_{\beta}} \nabla h(\bm{\beta}) \right) \quad \mbox{as } n \rightarrow \infty ,
    $$
   weakly converges to zero as $n\rightarrow \infty$, where
    $$
    \nabla h (\bm{\beta}) = \left(
    \begin{array}{cc}
        \frac{\partial }{\partial \beta_1} \check{\xi}_u (\beta_2 / \beta_1) & \frac{\partial}{\partial \beta_1} \check{\xi}_v (\beta_4 / \beta_3) \\
        \vdots & \vdots \\
        \frac{\partial }{\partial \beta_4} \check{\xi}_u (\beta_2 / \beta_1) & \frac{\partial}{\partial \beta_4} \check{\xi}_v (\beta_4 / \beta_3) \\
    \end{array}
    \right) = \left(
    \begin{array}{cc}
        -a_w(u)/\beta_1 & 0 \\
        a_W(u)/\beta_2 & 0 \\
        0 & -a_w(v) / \beta_3 \\
        0 & a_w(v) / \beta_4
    \end{array}
    \right).
    $$
 This fact together with (\ref{eq:beta_asy_normal}) %and Lemma 1.10.2 of \cite{van2023weak}
 implies that
$$
 \sqrt{n} \left\{ h(\hat{\bm{\beta}}) - h(\bm{\beta})  \right\} \xrightarrow{d} N \left( 0 , \nabla h(\bm{\beta})^{\top} \bm{ \Sigma_{\beta}} \nabla h(\bm{\beta}) \right) \quad \mbox{as } n \rightarrow \infty.
$$
    The asymptotic variance can be calculated as
    \begin{align}
        \lefteqn{ \nabla h(\bm{\beta})^{\top} \bm{\Sigma_{\beta}} \nabla h(\bm{\beta}) } \hspace{0.9cm} \nonumber \\
        = & \ a_w(u) a_w(v) \left(
         \begin{array}{cc}
            \frac{\sigma_{11}}{\beta_1^2} - \frac{2 \sigma_{12} }{ \beta_1 \beta_2 } + \frac{ \sigma_{22} }{ \beta_2^2}   & \frac{\sigma_{13}}{\beta_1 \beta_3} - \frac{\sigma_{23}}{ \beta_2 \beta_3 } - \frac{\sigma_{14} }{ \beta_1 \beta_4 } + \frac{ \sigma_{24} }{ \beta_2 \beta_4 } \\
            \frac{\sigma_{13}}{\beta_1 \beta_3} - \frac{\sigma_{23}}{ \beta_2 \beta_3 } - \frac{\sigma_{14} }{ \beta_1 \beta_4 } + \frac{ \sigma_{24} }{ \beta_2 \beta_4 } &  \frac{\sigma_{33}}{\beta_3^2} - \frac{2 \sigma_{34} }{ \beta_3 \beta_4 } + \frac{ \sigma_{44} }{ \beta_4^2}
        \end{array}
        \right)  \nonumber \\
        = & \ a_w(u) a_w(v) \Bigg(
        {\small \begin{array}{cc}
        \frac{1}{F_1(u)} + \frac{1}{F_2(u)} -  \frac{2 H(u,u)}{F_1(u) F_2(u)} & A(u,v) \\
           A(v,u)  & \frac{1}{F_1(v)} + \frac{1}{F_2(v)} -  \frac{2 H(v,v)}{F_1(v) F_2(v)}
        \end{array} }
        \Bigg) \nonumber \\
        = & \ \left(
        \begin{array}{cc}
            \sigma_{w}(u,u) & \sigma_{w}(u,v) \\
            \sigma_{w}(v,u) & \sigma_{w}(v,v)
        \end{array}
        \right). \label{eq:asy_var}
    \end{align}
 Therefore, it follows that $ (\mathbb{X}_n(u),\mathbb{X}_n(v)) $ converges weakly to the two-dimensional Gaussian distribution with mean vector $\bzero_2$ and the variance matrix~\eqref{eq:asy_var} as $n \rightarrow \infty$.
    Weak convergence of $ (\mathbb{X}_n(u_1), \ldots , \mathbb{X}_n(u_m))$ to an $m$-dimensional centered Gaussian distributions can be shown in a similar manner.

\subsection{Proof of Theorem~\ref{thm:consistency:hat:xi}}

We first prove that, for $v_n$ satisfying~\ref{item:A:v:n:consistency}, it holds that
\begin{align}\label{eq:consistency:lambda}
\frac{\hat F_l(v_n)}{v_n^{\kappa_l}} \parrow \lambda_l,\quad\text{ for }l=1,2.
\end{align}
To this end, let
$$
\frac{\hat F_l(v_n)}{v_n^{\kappa_l}} =\frac{1}{n v_n^{\kappa_l}}+\sum_{i=1}^n Y_{i,n}^{(l)},
$$
where
$$
Y_{i,n}^{(l)}=\frac{1}{n v_n^{\kappa_l}}\id\{
M_i^{(l)}\le v_n
\}.
$$
Then
$$
\mathbb{E}\left[\sum_{i=1}^n Y_{i,n}^{(l)}\right]=\sum_{i=1}^n \frac{1}{n} \frac{F_l(v_n)}{v_n^{\kappa_l}}\rightarrow \lambda_l\quad\text{ as }n\rightarrow \infty.
$$
Moreover,
\begin{align*}
\var(Y_{i,n}^{(l)})&=\mathbb{E}\left[\left(Y_{i,n}^{(l)}\right)^2\right] - \left(
\mathbb{E}\left[Y_{i,n}^{(l)}\right]
\right)^2\\
&= \frac{F_l(v_n)}{n^2 (v_n^{\kappa_l})^2} - \frac{F_l^2(v_n)}{n^2 (v_n^{\kappa_l})^2}.
\end{align*}
Since $Y_{i,n}^{(l)}$, $i=1,\dots,n$, are independent, we have that
\begin{align*}
\var\left(
\sum_{i=1}^n Y_{i,n}^{(l)}
\right)
&=
\sum_{i=1}^n \var\left(Y_{i,n}^{(l)}
\right)\\
&= \frac{1}{nv_n^{\kappa_l}}\frac{F_l(v_n)}{v_n^{\kappa_l}}-\frac{1}{n}\left\{
\frac{F_l(v_n)}{v_n^{\kappa_l}}
\right\}^2\\
&\rightarrow 0\quad \text{ as }n\rightarrow\infty.
\end{align*}
Therefore, we obtain~\eqref{eq:consistency:lambda} by LLN for triangular arrays~\citep{durrett2019probability}.
%See also Durrett, 2013, PTE, Theorem 2.2.4 on page 50 

Note that
\begin{align*}
\hat \alpha(v_n)=\log\left(
\frac{\hat F_2(v_n)}{v_n^{\kappa_2}}
\right)
-
\log\left(
\frac{\hat F_1(v_n)}{v_n^{\kappa_1}}
\right)
+(\kappa_2-\kappa_1)\log v_n,
\end{align*}
and $\log v_n \rightarrow -\infty$ as $n\rightarrow \infty$.
Therefore, if $\kappa_1=\kappa_2$, we have from CMT and the Slutsky's theorem that $h_2(\hat \alpha(v_n))\parrow h_2(\log(\lambda_2/\lambda_1))$.
We next prove that $h_2(\hat \alpha(v_n))\parrow 1$ if $\kappa_1>\kappa_2$.
Let
\begin{align}\label{eq:delta:n}
\Delta_n = \left\{
\log\left(
\frac{\hat F_2(v_n)}{v_n^{\kappa_2}}
\right)
-
\log\left(
\frac{\hat F_1(v_n)}{v_n^{\kappa_1}}
\right)\right\} -\log \frac{\lambda_2}{\lambda_1}.
\end{align}
Then $\Delta_n \parrow 0$ by~\eqref{eq:consistency:lambda} and thus, for every  $\epsilon>0$ and $\delta>0$, there exists $n_0\in\mathbb{N}$ such that $\mathbb{P}(|\Delta_n|<\delta)>1-\epsilon$ for every $n$ satisfying $n\ge n_0$.
Note that $\mathbb{P}(|h_2(\hat \alpha(v_n))|\le 1)=1$ for every $n$.
For every $\tilde \delta>0$, we have that
\begin{align*}
    \mathbb{P}\left(h_2(\hat \alpha(v_n))>1-\tilde \delta\right)
    &=\mathbb{P}\left(\hat \alpha(v_n)>h_2^{-1}\left(1-\tilde \delta\right)\right)\\
    &=\mathbb{P}\left(\Delta_n > h_2^{-1}\left(1-\tilde \delta\right)- \log(\lambda_2/\lambda_1) -(\kappa_2-\kappa_1)\log v_n\right).
\end{align*}
Since $(\kappa_2-\kappa_1)\log v_n\rightarrow \infty$ as $n\rightarrow\infty$, we can take sufficiently large $\tilde n_0\in\mathbb{N}$ such that $\tilde n_0 > n_0$ and $-\delta > h_2^{-1}\left(1-\tilde \delta\right)- \log(\lambda_2/\lambda_1) -(\kappa_2-\kappa_1)\log v_n$.
Thus, for every $n$ such that $n>\tilde n_0$, we have that $\mathbb{P}\left(h_2(\hat \alpha(v_n))>1-\tilde \delta\right)>1-\epsilon$, which proves that $h_2(\hat \alpha(v_n))\parrow 1$.
By symmetry, it follows analogously that $h_2(\hat \alpha(v_n))\parrow -1$ if $\kappa_1<\kappa_2$.
In summary, it holds that
\begin{align*}
h_2(\hat \alpha(v_n))\parrow\begin{cases}
     h_2(\log(\lambda_2/\lambda_1)),& \text{ if }\kappa_1=\kappa_2,\\
    1,&\text{ if }\kappa_1>\kappa_2,\\
    -1,&\text{ if }\kappa_1<\kappa_2.\\
\end{cases}
\end{align*}
Together with~\ref{item:A:tail:index:consistency}, we conclude that $\hat \xi_w\parrow \xi_w$ by CMT and the Slutsky's theorem.

\subsection{Proof of Theorem~\ref{thm:clt:hat:xi}}

We begin with the following lemma on the tail order parameter.

\begin{Lemma}\label{lem:lambda:clt}
Let $l\in \{1,2\}$ be such that $\kappa_l=\kappa_{+}$.
Under the assumptions in Theorem~\ref{thm:clt:hat:xi}, it holds that
\begin{align}\label{eq:clt:lambda}
\sqrt{m_n}\left\{
\frac{\hat F_l(v_n)}{v_n^{\kappa_l}} - \lambda_l
\right\}\darrow \mathcal N(0,\tau_l\lambda_l),\quad \text{ as }n\rightarrow\infty,
\end{align}
and thus
\begin{align}\label{eq:clt:log:lambda}
\sqrt{m_n}\left\{
\log \frac{\hat F_l(v_n)}{v_n^{\kappa_l}} - \log \lambda_l
\right\}\darrow \mathcal N\left(0,\frac{\tau_l}{\lambda_l}\right),\quad\text{ as } n\rightarrow\infty.
\end{align}
\end{Lemma}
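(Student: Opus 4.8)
The plan is to prove the first assertion~\eqref{eq:clt:lambda} by a Lindeberg--Feller central limit theorem for a triangular array, and then to deduce~\eqref{eq:clt:log:lambda} from it by the delta method. Reusing the decomposition from the proof of Theorem~\ref{thm:consistency:hat:xi}, I write $\hat F_l(v_n)/v_n^{\kappa_l}=(nv_n^{\kappa_l})^{-1}+S_n$ with $S_n=\sum_{i=1}^n Y_{i,n}^{(l)}$ and $Y_{i,n}^{(l)}=(nv_n^{\kappa_l})^{-1}\id\{M_i^{(l)}\le v_n\}$, and split
\begin{align*}
\sqrt{m_n}\left(\frac{\hat F_l(v_n)}{v_n^{\kappa_l}}-\lambda_l\right)=\sqrt{m_n}\bigl(S_n-\E[S_n]\bigr)+\sqrt{m_n}\left(\frac{F_l(v_n)}{v_n^{\kappa_l}}-\lambda_l\right)+\frac{\sqrt{m_n}}{nv_n^{\kappa_l}},
\end{align*}
where $\E[S_n]=F_l(v_n)/v_n^{\kappa_l}$ by Lemma~\ref{lem:moments}. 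Since $\kappa_l=\kappa_{+}$, assumption~\ref{item:A:tau} gives $nv_n^{\kappa_l}=m_n/\tau$, so that $\tau_l=\tau$ here and the deterministic last term equals $\tau/\sqrt{m_n}\to0$ by~\ref{item:mn:un:lim}.

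The middle (bias) term is where the delicate bookkeeping lies. Writing $F_l(v_n)/v_n^{\kappa_l}-\lambda_l$ through the remainder of the tail expansion together with the convergence $\ell_{\operatorname L}(v_n;C_l)\to\lambda_l$ (valid since $\lambda_l\in(0,\infty)$ by~\ref{item:A6:finite:lambda}), assumption~\ref{item:A:F:l} is what guarantees that $\sqrt{m_n}$ times this deviation tends to $0$. I expect this to be the \emph{main obstacle}, since it is precisely the point where the rate at which the threshold $v_n$ penetrates the tail must be matched against the effective sample size $m_n$; the remaining two steps are comparatively routine.

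For the stochastic term, the summands $Y_{i,n}^{(l)}$ are i.i.d.\ within each row and bounded by $(nv_n^{\kappa_l})^{-1}=\tau/m_n$. By Lemma~\ref{lem:moments} and~\ref{item:A:tau}, and noting $\var(S_n)=\var(\hat F_l(v_n)/v_n^{\kappa_l})$,
\begin{align*}
m_n\,\var(S_n)=\frac{m_n}{nv_n^{2\kappa_l}}\,F_l(v_n)\{1-F_l(v_n)\}=\tau\,\frac{F_l(v_n)}{v_n^{\kappa_l}}\{1-F_l(v_n)\}\longrightarrow\tau\lambda_l,
\end{align*}
using $F_l(v_n)/v_n^{\kappa_l}\to\lambda_l$ and $F_l(v_n)\to0$. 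The centered summands satisfy $\max_{1\le i\le n}|Y_{i,n}^{(l)}-\E[Y_{i,n}^{(l)}]|/\sqrt{\var(S_n)}\to0$, so they are uniformly asymptotically negligible and the Lindeberg condition holds trivially for large $n$; the Lindeberg--Feller theorem~\citep{durrett2019probability} then gives $(S_n-\E[S_n])/\sqrt{\var(S_n)}\darrow\mathcal N(0,1)$. Multiplying by $\sqrt{m_n\,\var(S_n)}\to\sqrt{\tau\lambda_l}$ yields $\sqrt{m_n}(S_n-\E[S_n])\darrow\mathcal N(0,\tau\lambda_l)$, and combining the three terms through Slutsky's theorem establishes~\eqref{eq:clt:lambda}.

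Finally,~\eqref{eq:clt:log:lambda} follows by applying the delta method to $g(x)=\log x$, which is differentiable at $x=\lambda_l\in(0,\infty)$ with $g'(\lambda_l)=1/\lambda_l$; the limiting variance becomes $(1/\lambda_l)^2\,\tau\lambda_l=\tau/\lambda_l$, as claimed.
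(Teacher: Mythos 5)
Your proposal is correct and follows essentially the same route as the paper's proof: the same decomposition into a deterministic $\tau/\sqrt{m_n}$ term, a bias term killed by~\ref{item:A:F:l}, and a centered triangular-array sum handled by the Lindeberg--Feller theorem with $m_n\var(S_n)\rightarrow\tau\lambda_l$, followed by the delta method for~\eqref{eq:clt:log:lambda}. The only cosmetic difference is your verification of the Lindeberg condition via the uniform bound $|Y_{i,n}^{(l)}-\E[Y_{i,n}^{(l)}]|\le 2\tau/m_n$ (so the indicators vanish identically for large $n$), where the paper instead applies Markov's inequality; both are valid and the argument is otherwise identical.
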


\begin{proof}[Proof of Lemma~\ref{lem:lambda:clt}]
We will prove~\eqref{eq:clt:lambda} since \eqref{eq:clt:log:lambda} is an immediate consequence from~\eqref{eq:clt:lambda} by the delta-method.
When~\ref{item:A:F:l} holds, the weak limit of the LHS of~\eqref{eq:clt:lambda} coincides with that of 
$$
\sqrt{m_n}\left\{
\frac{\hat F_l(v_n)}{v_n^{\kappa_l}} - \frac{F_l(v_n)}{v_n^{\kappa_l}} \right\}=\frac{\sqrt{m_n}}{nv_n^{\kappa_l}}+\sum_{i=1}^n Z_{i,n}^{(l)},
$$
where
$$
Z_{i,n}^{(l)}=\frac{\sqrt{m_n}}{n v_n^{\kappa_l}} \left\{
 \id\{ M_i^{(l)}\le v_n\}- 
F_l(v_n) 
 \right\}.
$$
By~\ref{item:mn:un:lim} and~\ref{item:A:tau}, the first term $\sqrt{m_n}/nv_n^{\kappa_l}$ converges to $0$ as $n\rightarrow\infty$.
We will apply the Lindeberg-Feller theorem~\citep{billingsley2017probability,durrett2019probability} to the second term $\sum_{i=1}^n Z_{i,n}^{(l)}$.
% also refer to Durrett, 2013, PTE, Theorem 3.4.5 on page 110 and Billingsley, 1995, PM, Theorem 2.7.2 on page 359
We can first check that $Z_{i,n}^{(l)}$, $i=1,\dots,n$, are independent, and that $\mathbb{E}[Z_{i,n}^{(l)}]=0$.
Moreover, since $m_n/n= \tau_l u_n^{\kappa_l}\rightarrow 0$ by~\ref{item:A:tau}, we have that
\begin{align*}
s_n^2:&=\var\left( \sum_{i=1}^n Z_{i,n}^{(l)}
\right)=\sum_{i=1}^n \mathbb{E}[(Z_{i,n}^{(l)})^2]\\
&=
n \frac{m_n}{n^2 v_n^{2\kappa_l}}\left\{
F_l(v_n)-F_l^2(v_n)\right\}\\
&=
\frac{m_n}{n v_n^{\kappa_l}}\frac{F_l(v_n)}{v_n^{\kappa_l}}-\frac{m_n}{n}
\left\{
\frac{F_l(v_n)}{v_n^{\kappa_l}}
\right\}^2\\
&\rightarrow\tau_l  \lambda_l.
\end{align*}
Finally, we check the Lindeberg condition.
Since 
\begin{align*}
(Z_{i,n}^{(l)})^2\le \frac{m_n}{n^2 v_n^{2\kappa_l}}=\frac{\tau_l^2}{m_n},
\end{align*}
we have that, for every $\epsilon>0$,
\begin{align*}
\sum_{i=1}^n \mathbb{E}\left[
\left(Z_{i,n}^{(l)}\right)^2 
\id\{|Z_{i,n}^{(l)}|\ge \epsilon s_n\}\right]
&\le
\frac{\tau_l^2}{m_n}\sum_{i=1}^n \mathbb{P}\left(
|Z_{i,n}^{(l)}|^2\ge \epsilon^2 s_n^2
\right)\\
&\le
\frac{\tau_l^2}{m_n}\sum_{i=1}^n  \frac{\mathbb{E}\left[|Z_{i,n}^{(l)}|^2\right]}{\epsilon^2 s_n^2}\\
&=\frac{\tau_l^2}{\epsilon^2 m_n}\rightarrow 0,
\end{align*}
where the second inequality comes from the Markov inequality.
Therefore, we have the desired result~\eqref{eq:clt:lambda}.
\end{proof}

We next consider the asymptotic behavior of $h_2(\hat \alpha(v_n))$ for the case when $\kappa_1=\kappa_2$.

\begin{Lemma}\label{lem:clt:equal:kappa}
Assume that $\kappa_1=\kappa_2=:\kappa$ in addition to the assumptions in Theorem~\ref{thm:clt:hat:xi}.
Then we have that
\begin{align*}
\sqrt{m_n}
\left\{
h_2(\hat \alpha(v_n)) - h_2\left(\log \frac{\lambda_2}{\lambda_1}\right)
\right\}
\darrow\mathcal N(0,\sigma^2)
\quad \text{ as } n\rightarrow\infty,
\end{align*}
where $\sigma^2$ is given in~\eqref{eq:sigma:clt:h2}.
\end{Lemma}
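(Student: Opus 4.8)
The plan is to reduce the statement to an application of the delta-method, once a central limit theorem for $\hat\alpha(v_n)$ itself has been established. I would start from the decomposition
\begin{align*}
\hat\alpha(v_n)=\log\frac{\hat F_2(v_n)}{v_n^{\kappa}}-\log\frac{\hat F_1(v_n)}{v_n^{\kappa}}+(\kappa_2-\kappa_1)\log v_n,
\end{align*}
already used in the proof of Theorem~\ref{thm:consistency:hat:xi}. Since $\kappa_1=\kappa_2=\kappa$, the last term vanishes, which is precisely why the hypothesis $\kappa_1=\kappa_2$ is needed here. Moreover $\kappa=\kappa_{+}$, so Lemma~\ref{lem:lambda:clt} applies to \emph{both} indices $l=1,2$ with $\tau_l=\tau$ (indeed $\tau_l=m_n/(nv_n^{\kappa_{+}})=\tau$ by~\ref{item:A:tau} whenever $\kappa_l=\kappa_{+}$). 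Consequently each marginal obeys $\sqrt{m_n}\{\log(\hat F_l(v_n)/v_n^{\kappa})-\log\lambda_l\}\darrow\mathcal N(0,\tau/\lambda_l)$ by~\eqref{eq:clt:log:lambda}.

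The key step is to upgrade these two marginal limits to a joint one. Here I would invoke the independence assumption~\ref{item:A5:independence}: because $\mathbf U_i^{(1)}$ is independent of $\mathbf U_i^{(2)}$, the triangular arrays driving $\hat F_1(v_n)$ and $\hat F_2(v_n)$ are built from independent samples, so for each $n$ the two statistics $\log(\hat F_1(v_n)/v_n^{\kappa})$ and $\log(\hat F_2(v_n)/v_n^{\kappa})$ are independent random variables. Marginal convergence together with independence yields joint convergence to the product of the two limiting normals. Forming the difference then gives
\begin{align*}
\sqrt{m_n}\left\{\hat\alpha(v_n)-\log\frac{\lambda_2}{\lambda_1}\right\}\darrow\mathcal N\left(0,\tau\left(\frac{1}{\lambda_1}+\frac{1}{\lambda_2}\right)\right)\quad\text{as }n\rightarrow\infty.
\end{align*}

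Finally, I would apply the delta-method to the map $x\mapsto h_2(x)$ at $x=\log(\lambda_2/\lambda_1)$. Assumption~\ref{item:A:lambda:diff} guarantees $\log(\lambda_2/\lambda_1)\in(-x^{\ast},x^{\ast})$, whence $h_2(\log(\lambda_2/\lambda_1))\in(-1,1)$ and, by property~\ref{h:prop:ii} of Definition~\ref{def:measure:tail:equivalence}, $h_2$ is differentiable at this point with $h_2'(\log(\lambda_2/\lambda_1))>0$ (as noted after the theorem). The delta-method multiplies the asymptotic variance by $\{h_2'(\log(\lambda_2/\lambda_1))\}^2$, producing exactly $\sigma^2$ in~\eqref{eq:sigma:clt:h2} and finishing the proof.

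As for difficulties, the argument is essentially an assembly of ingredients already available, so the only genuinely delicate point is the passage from marginal to joint asymptotic normality. This is where~\ref{item:A5:independence} is indispensable: without it the cross term $\cov(\hat F_1(v_n),\hat F_2(v_n))$, governed by the joint function $H$ as in Theorem~\ref{thm:asymptotics}, would not vanish, and an additional covariance contribution would enter the limiting variance.
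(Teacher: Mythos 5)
Your proposal is correct and follows essentially the same route as the paper: both rest on Lemma~\ref{lem:lambda:clt} for the marginal limits of $\hat F_l(v_n)/v_n^{\kappa}$, use~\ref{item:A5:independence} to upgrade to a joint limit with diagonal covariance, and finish with the delta-method, with~\ref{item:A:lambda:diff} securing differentiability of $h_2$ at $\log(\lambda_2/\lambda_1)$. The only difference is organizational: you work on the log scale via~\eqref{eq:clt:log:lambda} and apply a univariate delta-method to $h_2$ after forming the difference, whereas the paper applies a single bivariate delta-method to the map $(x,y)\mapsto h_2(\log(y/x))$ starting from~\eqref{eq:clt:joint:lambda}; both computations yield the same $\sigma^2$ in~\eqref{eq:sigma:clt:h2}.
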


\begin{proof}
If $\kappa_1=\kappa_2=\kappa$, then Lemma~\ref{lem:lambda:clt} applies to $l=1,2$.
By~\ref{item:A5:independence}, it holds that
\begin{align}\label{eq:clt:joint:lambda}
\sqrt{m_n}\left\{
\left(
\frac{\hat F_1(v_n)}{v_n^{\kappa}},
\frac{\hat F_2(v_n)}{v_n^{\kappa}} 
\right)^{\top} - 
(\lambda_1,\lambda_2)^\top
\right\}\darrow \mathcal N_2\left(\bzero_2, 
\begin{pmatrix}
\tau\lambda_1& 0\\
0& \tau\lambda_2\\
\end{pmatrix}
\right).
\end{align}
Define the function $g:\mathbb{R}_{+}\times \mathbb{R}_{+}\rightarrow \mathbb{R}$ by $
(x,y)\mapsto g\left(\log(y/x)\right)$.
Then
\begin{align*}
    \nabla g(\lambda_1,\lambda_2)&=\left(\frac{\partial}{\partial x}g(x,y),\frac{\partial}{\partial y}g(x,y)\right)^\top\\
    &=\left(
    -\frac{1}{x} h_2'\left(\log \frac{y}{x}\right),
    \frac{1}{y} h_2'\left(\log \frac{y}{x}\right)
\right)^\top.
    \end{align*}
By the delta-method, it holds from~\eqref{eq:clt:joint:lambda} that
\begin{align*}
\sqrt{m_n}
\left\{
h_2(\hat \alpha(v_n)) - h_2\left(\log \frac{\lambda_2}{\lambda_1}\right)
\right\}
&=
\sqrt{m_n}
\left\{
g\left(
\frac{\hat F_1(v_n)}{v_n^{\kappa}},\frac{\hat F_2(v_n)}{v_n^{\kappa}}
\right)
- g(\lambda_1,\lambda_2)
\right\}\\
&\darrow \mathcal N(0,\sigma^2),\quad \text{ as }n\rightarrow\infty,
\end{align*}
where 
\begin{align*}
\sigma^2 &= \nabla g(\lambda_1,\lambda_2)^{\top}
\begin{pmatrix}
\tau \lambda_1& 0\\
0& \tau \lambda_2\\
\end{pmatrix}
\nabla g(\lambda_1,\lambda_2)\\
&=\tau \left\{
h_2'\left(\log \frac{\lambda_2}{\lambda_1}\right)
\right\}^2
\left( -\frac{1}{\lambda_1},\frac{1}{\lambda_2}\right)^{\top}
\begin{pmatrix}
 \lambda_1& 0\\
0& \lambda_2\\
\end{pmatrix}
\left( -\frac{1}{\lambda_1},\frac{1}{\lambda_2}\right)
\\
&=\tau\left\{h_2'\left(
\log\frac{\lambda_2}{\lambda_1}
\right)\right\}^2
\left(
\frac{1}{\lambda_1}+\frac{1}{\lambda_2}
\right).
\end{align*}
\end{proof}

Our last lemma is on the asymptotic behavior of $h_2(\hat \alpha(v_n))$ for the case when $\kappa_1\neq \kappa_2$.

\begin{Lemma}\label{lem:h2:vanish}
    Assume that $\kappa_1\neq \kappa_2$ in addition to the assumptions in Theorem~\ref{thm:clt:hat:xi}.
Then we have that
\begin{align*}
\sqrt{m_n}
\left\{
h_2(\hat \alpha(v_n)) - \operatorname{sign}(\kappa_1>\kappa_2)
\right\}
\parrow0.
\end{align*}
\end{Lemma}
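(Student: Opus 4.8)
The plan is to exploit the key structural feature granted by assumption~\ref{item:A:h:2}: since $x^\ast=\sup\{x\in\mathbb{R}: h_2(x)\in(0,1)\}$ is finite and $h_2$ is increasing with $\lim_{x\to\infty}h_2(x)=1$, the function $h_2$ in fact \emph{attains} the value $1$ for every $x>x^\ast$ (and, by the antisymmetry~\ref{h:prop:iii}, the value $-1$ for every $x<-x^\ast$). Consequently, when $\kappa_1>\kappa_2$, the centered quantity $h_2(\hat\alpha(v_n))-\operatorname{sign}(\kappa_1>\kappa_2)=h_2(\hat\alpha(v_n))-1$ is \emph{identically zero} on the event $\{\hat\alpha(v_n)>x^\ast\}$, no matter how large the factor $\sqrt{m_n}$ becomes. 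This is exactly why the $\sqrt{m_n}$-scaling causes no difficulty and why this lemma is simpler than Lemma~\ref{lem:clt:equal:kappa}: I will not need the central limit behaviour of the tail-order-parameter estimator, only its consistency.

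Concretely, I would first treat the case $\kappa_1>\kappa_2$. Writing, exactly as in the proof of Theorem~\ref{thm:consistency:hat:xi},
$$\hat\alpha(v_n)=\log\!\left(\frac{\hat F_2(v_n)}{v_n^{\kappa_2}}\right)-\log\!\left(\frac{\hat F_1(v_n)}{v_n^{\kappa_1}}\right)+(\kappa_2-\kappa_1)\log v_n,$$
I would check that~\eqref{eq:consistency:lambda} applies to both $l=1,2$: assumption~\ref{item:A:v:n:consistency}, which is implied by~\ref{item:mn:un:lim} and~\ref{item:A:tau}, gives $n v_n^{\kappa_l}\to\infty$ for both indices, whence $\hat F_l(v_n)/v_n^{\kappa_l}\parrow\lambda_l\in(0,\infty)$ by~\ref{item:A6:finite:lambda}. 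Thus the first two terms converge in probability to the finite constant $\log(\lambda_2/\lambda_1)$, while the deterministic term $(\kappa_2-\kappa_1)\log v_n\to+\infty$ because $\kappa_2-\kappa_1<0$ and $\log v_n\to-\infty$. Combining these by Slutsky's theorem shows that $\hat\alpha(v_n)$ diverges to $+\infty$ in probability; in particular $\mathbb{P}(\hat\alpha(v_n)>x^\ast)\to1$ for the fixed finite threshold $x^\ast$.

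With this in hand I would conclude by a one-line tail bound: for every $\epsilon>0$,
$$\mathbb{P}\!\left(\left|\sqrt{m_n}\{h_2(\hat\alpha(v_n))-1\}\right|>\epsilon\right)\le \mathbb{P}\!\left(h_2(\hat\alpha(v_n))\neq1\right)\le \mathbb{P}\!\left(\hat\alpha(v_n)\le x^\ast\right)\longrightarrow0,$$
where the first inequality uses that the scaled summand vanishes wherever $h_2(\hat\alpha(v_n))=1$, and the second uses $h_2(x)=1$ for $x>x^\ast$. This yields $\sqrt{m_n}\{h_2(\hat\alpha(v_n))-1\}\parrow0$, which is the claim since $\operatorname{sign}(\kappa_1>\kappa_2)=1$ here. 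The case $\kappa_1<\kappa_2$ then follows by the identical argument with the two indices interchanged, now showing $\hat\alpha(v_n)\parrow-\infty$ and invoking $h_2(x)=-1$ for $x<-x^\ast$ together with $\operatorname{sign}(\kappa_1>\kappa_2)=-1$.

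The only genuinely delicate point is the very first observation: assumption~\ref{item:A:h:2} forces $h_2$ to be \emph{constant}, equal to $\pm1$, outside the compact interval $[-x^\ast,x^\ast]$, so that the difference being multiplied by $\sqrt{m_n}$ is not merely small but exactly $0$ with probability tending to one. Were $h_2$ only to approach $\pm1$ asymptotically (as, for instance, $\arctan$ does), one would instead have to quantify the rate at which $h_2(\hat\alpha(v_n))\to1$ and balance it against $\sqrt{m_n}\to\infty$ — precisely the complication flagged in Remark~\ref{rem:h2}. Everything else is a routine application of~\eqref{eq:consistency:lambda}, Slutsky's theorem, and the Markov-type containment of events above.
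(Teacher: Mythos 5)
Your proof is correct and follows essentially the same route as the paper's: both establish $\hat\alpha(v_n)\parrow+\infty$ (resp.\ $-\infty$) from~\eqref{eq:consistency:lambda} together with the divergent drift term $(\kappa_2-\kappa_1)\log v_n$, and both exploit assumption~\ref{item:A:h:2} so that $h_2$ equals $\pm 1$ beyond the fixed threshold $\pm x^\ast$, making the $\sqrt{m_n}$-scaled difference exactly zero on an event of probability tending to one. Your containment $\{h_2(\hat\alpha(v_n))\neq 1\}\subseteq\{\hat\alpha(v_n)\le x^\ast\}$ is merely a slightly cleaner packaging of the paper's bound $\mathbb{P}\left(\hat\alpha(v_n)<h_2^{-1}\left(1-\tilde\delta/\sqrt{m_n}\right)\right)\le\mathbb{P}\left(\hat\alpha(v_n)<x^\ast\right)$.
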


\begin{proof}
We first assume that $\kappa_1>\kappa_2$.
Let $\Delta_n$ be as defined in~\eqref{eq:delta:n}.
When~\ref{item:A:tau} holds, then~\ref{item:A:v:n:consistency} holds for $\kappa_l$, $l=1,2$, and thus we have $\Delta_n \parrow 0$ by Theorem~\ref{thm:consistency:hat:xi}.
Therefore, for every $\epsilon>0$ and $\delta>0$, there exists $n_0\in\mathbb{N}$ such that $\mathbb{P}(|\Delta_n|<\delta)>1-\epsilon$ for every $n$ with $n\ge n_0$.
For every $\tilde \delta>0$, we have that
\begin{align*}
    \mathbb{P}\left(\sqrt{m_n}\{1-h_2(\hat \alpha(v_n))\}>\tilde \delta\right)
    &=\mathbb{P}\left(\hat \alpha(v_n)<h_2^{-1}\left(1-\tilde \delta/\sqrt{m_n}\right)\right)\\
    & \le \mathbb{P}\left(\hat \alpha(v_n)<x^\ast\right)\\
    & =\mathbb{P}\left(\Delta_n <x^\ast +\log(\lambda_2/\lambda_1) -(\kappa_2-\kappa_1)\log v_n\right)
\end{align*}
Since $(\kappa_2-\kappa_1)\log v_n\rightarrow \infty$ as $n\rightarrow\infty$, we can take sufficiently large $\tilde n_0\in\mathbb{N}$ such that $\tilde n_0 > n_0$ and $
-\delta > x^\ast +\log(\lambda_2/\lambda_1) -(\kappa_2-\kappa_1)\log v_n$.
Thus, for every $n$ with $n>\tilde n_0$, we have that $\mathbb{P}\left(\Delta_n <x^\ast +\log(\lambda_2/\lambda_1) -(\kappa_2-\kappa_1)\log v_n\right)<\epsilon$, which proves that $\sqrt{m_n}\{h_2(\hat \alpha(v_n))-1\}\parrow 0$.
By symmetry, it follows analogously that $\sqrt{m_n}\{h_2(\hat \alpha(v_n))+1\}\parrow 0$ when $\kappa_1<\kappa_2$.
\end{proof}

Consider Case~\ref{item:thm:case:I}.
If $w=0$, then~\eqref{eq:clt:case:I} is an immediate consequence from~\ref{item:A:tail:index:clt} and CMT.
Therefore, assume that $m_n/\tilde m_n\rightarrow \infty$ and $w\in(0,1]$.
If $\kappa_1=\kappa_2$, then Lemma~\ref{lem:clt:equal:kappa} implies that
\begin{align*}
    \sqrt{\tilde m_n}\left\{\hat \xi_w - \xi_w\right\}
    &=(1-w)\sqrt{\tilde m_n}\left(h_1(\hat \kappa_1-\hat \kappa_2)-h_1(\kappa_1-\kappa_2)\right)\\
    &\qquad +
    w\sqrt{\tilde m_n/m_n}\sqrt{m_n}\left\{h_2(\hat \alpha(v_n)) - h_2(\log(\lambda_2/\lambda_1))\right\}\\
    &
    \darrow \mathcal N(0,(1-w)^2\,\tilde \sigma^2)\quad \text{ as }n\rightarrow \infty.
\end{align*}
If $\kappa_1\neq \kappa_2$, then Lemma~\ref{lem:h2:vanish} leads to the same conclusion $\sqrt{\tilde m_n}\left\{\hat \xi_w - \xi_w\right\}\darrow \mathcal N(0,(1-w)^2\,\tilde \sigma^2)$ as $n\rightarrow \infty$.

We next consider Case~\ref{item:thm:case:II}.
If $w=1$, then~\eqref{eq:clt:case:II} is an immediate consequence from~Lemma~\ref{lem:clt:equal:kappa} and CMT.
Therefore, suppose that $\tilde m_n/m_n\rightarrow \infty$ and $w\in[0,1)$. 
Since $\kappa_1=\kappa_2$, Lemma~\ref{lem:clt:equal:kappa} implies that
\begin{align*}
    \sqrt{ m_n}\left\{\hat \xi_w - \xi_w\right\}
    &=(1-w)\sqrt{m_n/{\tilde m_n}}\sqrt{\tilde m_n}\left(h_1(\hat \kappa_1-\hat \kappa_2)-h_1(\kappa_1-\kappa_2)\right)\\
    &\qquad +
    w\sqrt{m_n}\left\{h_2(\hat \alpha(v_n)) - h_2(\log(\lambda_2/\lambda_1))\right\}\\
    &
    \darrow \mathcal N(0,w^2\,\sigma^2)\quad \text{ as }n\rightarrow \infty,
\end{align*}
where $\sigma^2$ is given in~\eqref{eq:sigma:clt:h2}.

\subsection{Illustration of Theorems~\ref{thm:consistency:hat:xi} and~\ref{thm:clt:hat:xi}}\label{sec:example}

For illustration, we consider the bivariate case $d=2$.
Suppose that bivariate copulas $C_l$, $l=1,2$, admit the following expansions:
\begin{align}\label{eq:expansion}
C_l(u,u)=\lambda_l u^{\kappa_l} + \theta_l u^{\nu_l}+ o(u^{\nu_l}),\quad u\rightarrow 0,
\end{align}
where $\lambda_l>0$, $\nu_l>\kappa_l>0$ and $\theta_l \in\mathbb{R}$.
This expansion is satisfied, for example, by the FGM copula
\begin{align}\label{eq:fgm}
C(u,v;\delta)=uv\{1+\delta(1-u)(1-v)\},\quad u,v \in [0,1],\,\delta \in [-1,1],
\end{align}
where
$$
C(u,u;\delta)=(1+\delta)u^2 -2\delta u^3 + \delta u^4,\quad u \in  [0,1].
$$
Suppose that $\kappa_{+}=\kappa_2$.
In addition, we assume the mild condition that $\kappa_2 +3\nu_1 >\nu_2$.
Let $\tau=1$ in~\ref{item:A:tau},
\begin{align}\label{eq:vn:mn}
v_n = n^{-\frac{1}{\kappa_2+2\nu_2}-\epsilon}\quad\text{and}\quad
m_n = n^{1-\kappa_2\left(\epsilon + \frac{1}{\kappa_2 + 2\nu_2}\right)},
\end{align}
where $\epsilon\in\mathbb{R}$ is such that
\begin{align}\label{eq:choice:eps}
\max\left(0,\frac{1}{\kappa_2+2\nu_1}-\frac{1}{\kappa_2+2\nu_2}\right) 
< \epsilon < \frac{2\nu_2}{\kappa_2(\kappa_2 + 2\nu_2)}.
\end{align}
Note that such an $\epsilon$ exists if $\nu_1 \ge \nu_2$ or $\nu_1<\nu_2<\kappa_2 +3\nu_1$.
For $(m_n,v_n)$ in~\eqref{eq:vn:mn}, the condition~\ref{item:A:tau} is satisfied. 
In addition, conditions in~\ref{item:A:v:n:consistency} and~\ref{item:mn:un:lim} are satisfied since, as $n\rightarrow\infty$, we have that
$v_n \rightarrow 0$, $m_n=n v_n^{\kappa_2} \rightarrow \infty$ and $nv_n^{\kappa_1}=nv_n^{\kappa_2}v_n^{\kappa_1-\kappa_2}\rightarrow\infty$.
Under~\eqref{eq:expansion}, the conditions in~\ref{item:A:F:l} reduce to $\sqrt{m_n}v_n^{\nu_l}\rightarrow 0$ for $l=1,2$, which are induced by $n v_n^{\kappa_2+2\nu_l}\rightarrow0$ for $l=1,2$.
For $l=2$, this condition is trivial since $
n v_n^{\kappa_2+2\nu_2}= n^{-\epsilon (\kappa_2+2\nu_2)}\rightarrow 0.
$
For $l=1$, the condition is equivalent to
\begin{align*}
1-\frac{\kappa_2+2\nu_1}{\kappa_2+2\nu_2}-(\kappa_2+2\nu_1)\epsilon<0,
\end{align*}
and is satisfied from~\eqref{eq:choice:eps}.

Regarding the conditions~\ref{item:A:tail:index:consistency} and~\ref{item:A:tail:index:clt}, we consider the situation that the tail orders $\kappa_l$, $l=1,2$, are estimated by the following \emph{Hill estimators}~\citep{hill1975simple}: 
\begin{align}\label{eq:hill:kappa}
    \hat \kappa_{l}=\left(
        \frac{1}{k_l}\sum_{j=1}^{k_l} \log \frac{\hat M_{k_l+1,n}^{(l)}}{{\hat M_{j,n}^{(l)}}}
    \right)^{-1},
\end{align}
where $k_l=k_l(n)\in \mathbb{N}$ is such that $k_l\rightarrow \infty$ and $k_l/n\rightarrow \infty$ as $n\rightarrow \infty$.
Moreover, $\hat M_{1,n}^{(l)}\le \hat M_{2,n}^{(l)}\le \cdots \le \hat M_{n,n}^{(l)}$ are the order statistics of $\hat M_1^{(l)},\dots,\hat M_n^{(l)}$, where $\hat M_{i}^{(l)}=\max\left\{
\hat{U}_{ij}^{(l)},j=1,\dots,d
\right\}$, $i=1,\dots,n$, with $\hat{U}_{ij}^{(l)}$ being the rank of $U_{ij}^{(l)}$ among $U_{1j}^{(l)},\dots,U_{nj}^{(l)}$.

The reciprocal of~\eqref{eq:hill:kappa} is studied in~\cite{draisma2004bivariate}.
The following statements are immediate consequences of Theorems~2.1 and~2.2 of~\cite{draisma2004bivariate} by the delta-method and CMT.

\begin{Theorem}\label{thm:hill}
In addition to~\ref{item:A4:iid}, assume the following conditions for $l=1,2$.
    \begin{enumerate}[label=(a.\arabic*)]\setcounter{enumi}{0}
 \item\label{item:a1:second:rv}
The limit:
\begin{align}\label{eq:second:order:variation}
    \lim_{t\downarrow0}\left.\left\{
\frac{C_l(tu,tv)}{q_l(t)} - p_l(u,v)
    \right\}\right/{\tilde q_l(t)}=:\tilde p_l(u,v)
\end{align}
exists for all $u,v\ge 0$ with $u+v >0$, where $q_l(t)=C_l(t,t)$, $p_l$ is such that $p_l(1,1)=1$, $\tilde q_l$ is a positive function with $\lim_{t\downarrow0}\tilde q_l(t)=0$, and $\tilde p_l$ is a function which is neither constant nor a multiple of $p_l$.
\item\label{item:a2:unoform:conv}
The convergence in~\eqref{eq:second:order:variation} is uniform on $\{(u,v)\in[0,\infty)^2\mid u^2+v^2=1\}$.
\item\label{item:a3:derivatives} The function $p_l$ has first-order derivatives $\partial_1 p_l(u,v)=\partial p_l(u,v)/\partial u$ and $\partial_2 p_l(u,v)=\partial p_l(u,v)/\partial v$.
\item\label{item:a4:tdc} The limit $\upsilon_l:=\lim_{t \downarrow0}q_l(t)/t$ exists.
\item\label{item:a5:quantile} $\sqrt{k_l}\,\tilde q_l(q_l^{-1}(k_l/n))\rightarrow0$ as $n\rightarrow\infty$.
\end{enumerate}
Then the following statements holds for $l=1,2$.
\begin{enumerate}[label=(\Roman*)]\setcounter{enumi}{0}
 \item\label{item:clt:draisma}
 $\hat \kappa_l \parrow \kappa_l$ and
$\sqrt{k_l}\,(\hat \kappa_l - \kappa_l)\darrow \mathcal N(0,\tilde \sigma_l^2)$ as $n\rightarrow\infty$, where
\begin{align*}
    \sigma_l^2 = \kappa_l^2 (1-\upsilon_l)\left\{
    1-2\upsilon_l \partial_1 p_l(1,1)\partial_2 p_l(1,1)
    \right\}.
\end{align*}
\item\label{item:sigma:consistency:draisma} 
Let 
\begin{align*}
\hat \sigma_l^2 = \hat \kappa_l \left(
1-\frac{k_l}{{\hat M}_{k_l+1,n}^{(l)}}
\right)
\left\{
1-2\frac{k_l}{\sqrt{{\hat M}_{k_l+1,n}^{(l)}}}
\left(
\frac{{\hat M}_{k_l+1,n}^{(l)}}{{\hat M}_{k_l+1,n}^{(l)\,[1]}}-1
\right)
\left(
\frac{{\hat M}_{k_l+1,n}^{(l)}}{{\hat M}_{k_l+1,n}^{(l)\,[2]}}-1
\right)
\right\},
\end{align*}
where  $\hat M_{k_l+1,n}^{(l)\,[1]}$ is the $(k_l+1)$th order statistic of 
\begin{align*}\max\left(
\frac{\hat U_{i1}^{(l)}}{1+\left({\hat M}_{k_l+1,n}^{(l)}\right)^{-1/4}},\hat U_{i2}
    \right), \quad i=1,\dots,n,
    \end{align*}
    and  $\hat M_{k_l+1,n}^{(l)\,[2]}$ is that of 
\begin{align*}\max\left(
\hat U_{i1},
\frac{\hat U_{i2}^{(l)}}{1+\left({\hat M}_{k_l+1,n}^{(l)}\right)^{-1/4}}
    \right), \quad i=1,\dots,n.
    \end{align*}
Then $\hat \sigma_l^2\parrow \sigma_l^2$.
\end{enumerate}
\end{Theorem}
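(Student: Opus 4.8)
The plan is to exploit the observation stated just before the theorem: the reciprocal $1/\hat\kappa_l$ is precisely the Hill-type estimator of an extreme-value-type index analysed in~\cite{draisma2004bivariate}, so both assertions follow from their Theorems~2.1 and~2.2 after a single reciprocal transformation. Concretely, in the bivariate case with $F_l(u)=C_l(u,u)\sim u^{\kappa_l}\ell_l(u)$ as $u\downarrow 0$, the variable $X_i^{(l)}=1/M_i^{(l)}$ has tail index $\kappa_l$, and the smallest order statistics $\hat M_{1,n}^{(l)}\le\cdots\le\hat M_{k_l+1,n}^{(l)}$ of the rank-based pseudo-observations correspond to the largest order statistics of $X^{(l)}$. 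Reading off~\eqref{eq:hill:kappa}, one sees that $1/\hat\kappa_l=k_l^{-1}\sum_{j=1}^{k_l}\log(\hat M_{k_l+1,n}^{(l)}/\hat M_{j,n}^{(l)})$ is exactly the estimator of $\gamma_l:=1/\kappa_l$ studied in~\cite{draisma2004bivariate} for the dependence structure encoded by $C_l$.

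First I would verify that conditions~\ref{item:a1:second:rv}--\ref{item:a5:quantile} are precisely the hypotheses required by~\cite{draisma2004bivariate}: \ref{item:a1:second:rv} is their second-order refinement of the limit relation with limit functions $p_l$ and $\tilde p_l$, \ref{item:a2:unoform:conv} their uniformity on the unit sphere, \ref{item:a3:derivatives} the differentiability of $p_l$, \ref{item:a4:tdc} the existence of the lower tail dependence coefficient $\upsilon_l=\lim_{t\downarrow 0}q_l(t)/t$, and \ref{item:a5:quantile} the standard bias-control condition on the intermediate sequence $k_l$. Under these, their Theorem~2.1 delivers $1/\hat\kappa_l\parrow\gamma_l$ together with $\sqrt{k_l}\,(1/\hat\kappa_l-\gamma_l)\darrow\mathcal N(0,v_l^2)$ for an explicit variance $v_l^2$ carrying a factor $\gamma_l^2=\kappa_l^{-2}$ and depending on $\upsilon_l$ and the derivatives $\partial_1 p_l(1,1),\partial_2 p_l(1,1)$, while their Theorem~2.2 supplies a consistent estimator of $v_l^2$ built from the same rank statistics.

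It then remains to pass from $1/\hat\kappa_l$ to $\hat\kappa_l$. For consistency I would invoke the continuous mapping theorem with $x\mapsto 1/x$, which is continuous at $\gamma_l=1/\kappa_l\in(0,\infty)$, giving $\hat\kappa_l\parrow\kappa_l$ in~\ref{item:clt:draisma}. For asymptotic normality I would apply the delta method with $g(x)=1/x$ and $g'(\gamma_l)=-1/\gamma_l^2=-\kappa_l^2$, obtaining $\sqrt{k_l}\,(\hat\kappa_l-\kappa_l)\darrow\mathcal N(0,\kappa_l^4 v_l^2)$; since $v_l^2$ itself contains the factor $\kappa_l^{-2}$, the net power is $\kappa_l^4\cdot\kappa_l^{-2}=\kappa_l^2$, and substituting the expression for $v_l^2$ reproduces $\sigma_l^2=\kappa_l^2(1-\upsilon_l)\{1-2\upsilon_l\,\partial_1 p_l(1,1)\,\partial_2 p_l(1,1)\}$. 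For statement~\ref{item:sigma:consistency:draisma}, the displayed $\hat\sigma_l^2$ is the plug-in obtained by applying the same reciprocal transformation to the variance estimator of Theorem~2.2 of~\cite{draisma2004bivariate}, so $\hat\sigma_l^2\parrow\sigma_l^2$ follows once more from the continuous mapping theorem.

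The main work I anticipate is bookkeeping rather than conceptual. One must align the notation of~\cite{draisma2004bivariate}---their index, their tail dependence function, and their variance formula---with the notation here, and then confirm that multiplying their variance $v_l^2$ by the delta-method factor $\kappa_l^4$ yields $\sigma_l^2$ exactly, and likewise that the rather intricate $\hat\sigma_l^2$, built from the perturbed pseudo-observations $\hat M_{k_l+1,n}^{(l)\,[1]}$ and $\hat M_{k_l+1,n}^{(l)\,[2]}$, is the correct reciprocal transform of their consistent variance estimator. A secondary point to check is that the passage to rank-based pseudo-observations $\hat U_{ij}^{(l)}$ is already subsumed by the empirical-process arguments of~\cite{draisma2004bivariate}, so that no work beyond verifying~\ref{item:a1:second:rv}--\ref{item:a5:quantile} is required.
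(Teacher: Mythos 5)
Your proposal is correct and coincides with the paper's own argument: the paper proves Theorem~\ref{thm:hill} in exactly this way, noting that the reciprocal of the Hill estimator~\eqref{eq:hill:kappa} is the estimator studied in \cite{draisma2004bivariate} and deducing both parts as immediate consequences of their Theorems~2.1 and~2.2 via the delta-method and the continuous mapping theorem. Your variance bookkeeping is also right, since the delta-method factor $\kappa_l^4$ applied to Draisma et al.'s variance (which carries $\gamma_l^2=\kappa_l^{-2}$) yields precisely $\sigma_l^2=\kappa_l^2(1-\upsilon_l)\{1-2\upsilon_l\,\partial_1 p_l(1,1)\,\partial_2 p_l(1,1)\}$.
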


Note that $\tilde q_l\in \operatorname{RV}_0(\tilde \tau_l)$ for some $\tilde \tau_l \ge 0$ under~\eqref{eq:second:order:variation}. 
Moreover, the condition~\ref{item:a5:quantile} is satisfied if  
$$
k_l=O\left(n^{\frac{2\tilde \tau_l}{2\tilde \tau_l+\kappa_l}-\epsilon_l}\right),\quad
\text{where }0<\epsilon_l <\frac{2\tilde \tau_l}{2\tilde \tau_l+\kappa_l}.
$$
Therefore, one can always take $\epsilon_1$ and $\epsilon_2$ such that $\tilde m_n=k_1=k_2$ and $m_n/\tilde m_n\rightarrow\infty$ as $n\rightarrow\infty$, under which Case~\ref{item:thm:case:I} occurs.
Note that the condition~\ref{item:A:tail:index:clt} can be shown straightforwardly from Theorem~\ref{thm:hill} by the delta-method and independence between $\hat \kappa_1$ and $\hat \kappa_2$.

As a concrete example, the FGM copula~\eqref{eq:fgm} satisfies the above assumption with $q(t)=(1+\delta)t^2$, $p_l(u,v)=uv$, $\tilde q(t)=-2\delta t/(1+\delta)$, $\tilde \tau=1$ and $\tilde p_l(u,v)=(u^2v + u v^2)/2$.
For $l=1,2$, suppose that $C_l$ is an FGM copula with parameter $\delta_l$.
By taking $\tilde m_n=k_1=k_2=n^{1/2-\tilde \epsilon}$ for $0<\tilde \epsilon < 1/2$, the conditions in~\ref{item:A:tail:index:consistency},~\ref{item:mn:un:lim} and~\ref{item:A:tail:index:clt} are satisfied.
In this case, we also have that $m_n=n^{3/4-2\epsilon}$, where $0<\epsilon < 3/8$.
Therefore, we have that
$
m_n/\tilde m_n=n^{1/4-2\epsilon+\tilde \epsilon}$, which can be $\infty$, $1$ or $0$ depending on the choice of $(\epsilon,\tilde \epsilon)$.

\end{document}